\numberwithin{equation}{section}
\newtheorem{Theorem}{Theorem}[section]
\newtheorem{Corollary}[Theorem]{Corollary}
\newtheorem{Lemma}[Theorem]{Lemma}
 { \theoremstyle{definition}
\newtheorem{Definition}[Theorem]{Definition}
\newtheorem{Remark}[Theorem]{Remark} }
\begin{document}

\newcommand{\arXivNumber}{1703.09963}

\renewcommand{\thefootnote}{}

\renewcommand{\PaperNumber}{073}

\FirstPageHeading

\ShortArticleName{Classif\/ication of a Subclass of Two-Dimensional Lattices via Characteristic Lie Rings}

\ArticleName{Classif\/ication of a Subclass of Two-Dimensional\\ Lattices via Characteristic Lie Rings\footnote{This paper is a~contribution to the Special Issue on Symmetries and Integrability of Dif\/ference Equations. The full collection is available at \href{http://www.emis.de/journals/SIGMA/SIDE12.html}{http://www.emis.de/journals/SIGMA/SIDE12.html}}}

\Author{Ismagil HABIBULLIN~$^{\dag\ddag}$ and Mariya POPTSOVA~$^\dag$}

\AuthorNameForHeading{I.~Habibullin and M.~Poptsova}

\Address{$^\dag$~Ufa Institute of Mathematics, 112 Chernyshevsky Str., Ufa 450008, Russia}
\EmailD{\href{mailto:habibullinismagil@gmail.com}{habibullinismagil@gmail.com}}
\EmailD{\href{mailto:mnpoptsova@gmail.com}{mnpoptsova@gmail.com}}

\Address{$^\ddag$~Bashkir State University, 32 Validy Str., Ufa 450076, Russia}

\ArticleDates{Received March 30, 2017, in f\/inal form August 24, 2017; Published online September 07, 2017}

\Abstract{The main goal of the article is testing a new classif\/ication algorithm. To this end we apply it to a relevant problem of describing the integrable cases of a subclass of two-dimensional lattices. By imposing the cut-of\/f conditions $u_{-1}=c_0$ and $u_{N+1}=c_1$ we reduce the lattice $u_{n,xy}=\alpha(u_{n+1},u_n,u_{n-1})u_{n,x}u_{n,y}$ to a f\/inite system of hyperbolic type PDE. Assuming that for each natural $N$ the obtained system is integrable in the sense of Darboux we look for $\alpha$. To detect the Darboux integrability of the hyperbolic type system we use an algebraic criterion of Darboux integrability which claims that the characteristic Lie rings of such a system must be of f\/inite dimension. We prove that up to the point transformations only one lattice in the studied class passes the test. The lattice coincides with the earlier found Ferapontov--Shabat--Yamilov equation. The one-dimensional reduction $x=y$ of this lattice passes also the symmetry integrability test.}

\Keywords{two-dimensional integrable lattice; cut-of\/f boundary condition; open chain; Darboux integrable system; characteristic Lie ring}

\Classification{37K10; 37K30; 37D99}

\renewcommand{\thefootnote}{\arabic{footnote}}
\setcounter{footnote}{0}

\section{Introduction}

In the present article we study the classif\/ication problem for the following class of two-dimen\-sio\-nal lattices
\begin{gather} \label{eq1}
u_{n,xy}=\alpha(u_{n+1},u_n,u_{n-1} )u_{n,x}u_{n,y}.
\end{gather}
Here the sought function $u=u_{n}(x,y)$ depends on real $x$, $y$ and on integer $n$. Function $\alpha=\alpha(u_{n+1},u_n,u_{n-1} )$ is assumed to be analytical in a domain $D\subset \mathbb{C}^3$. We request also that the derivatives $\frac{\partial\alpha(u_{n+1},u_n,u_{n-1} )}{\partial u_{n+1} }$ and $\frac{\partial\alpha(u_{n+1},u_n,u_{n-1} )}{\partial u_{n-1} }$ do not vanish identically.

Constraint $u_{n_0} =c_0$ where $c_0$ is a constant parameter def\/ines a boundary condition which cuts of\/f the lattice \eqref{eq1} into two independent semi-inf\/inite lattices
\begin{gather}
u_{n,xy} = \alpha(u_{n+1}, u_n, u_{n-1} )u_{n,x} u_{n,y}, \qquad {\rm{for}} \quad n>n_0 \quad (n<n_0),\nonumber\\
u_{n_0} = c_0. \label{eq2}
\end{gather}
Any solutions of the lattice located on the semiaxis $n>n_0$ does not depend on the solutions of that located on $n<n_0$ and vice versa. Turning to the general case of the lattices recall that the boundary conditions (or cut-of\/f constraints) having such a property are called degenerate. It is well known that the degenerate boundary conditions are admitted by any integrable nonlinear lattice. They are compatible with the whole hierarchy of the higher symmetries \cite{AdlerHab, GurelHab}. In the literature they are met in the connection with the so-called open chains (see, for instance,~\cite{Moser}). Since the symmetry approach which is a powerful classif\/ication tool in the dimension $1+1$ (see, for instance, \cite{Adler,Levi,Mikhailov91}) loses its ef\/f\/iciency in higher dimensions (an explanation can be found in~\cite{Mikhailov98}) it became clear years ago that it is necessary to look for alternative classif\/ication algorithms. Since then dif\/ferent approaches to the integrable multidimensional models have been invented (see, for instance, \cite{Bogdanov,Ferapontov2004, Ferapontov2006,Martinez,Sokolov,Pavlov,Pogrebkov,Zakharov}).

In 1994 A.B.~Shabat posed a problem of creating a classif\/ication algorithm by combining the concepts of the degenerate boundary condition, open chain and the characteristic Lie algebra. It is worth mentioning as an important step in this direction the article~\cite{Sh1995} where the structure of the Lie algebra was described for the two-dimensional Toda lattice. Some progress toward creating the classif\/ication method was done in~\cite{H2013}. It was observed that any f\/initely generated subring of the characteristic Lie ring for the integrable case is of f\/inite dimension. The statement was verif\/ied for a large class of the known integrable lattices.

Our interest to the Shabat's problem was stimulated by the success of the method of the hydrodynamic type reductions in the multidimensionality proposed in \cite{Ferapontov2004,Ferapontov2006}. State-of-the-art for the subject and the references can be found in~\cite{Sokolov}.

In the present article the lattice \eqref{eq1} is used as a touchstone for the created algorithm. Our aim is to explain the core of the method and approve its ef\/f\/iciency by solving a relevant classif\/ication problem.

Boundary condition of the form \eqref{eq2} imposed at two dif\/ferent integers $n=N_1$ and $n=N_2$ (take $N_1<N_2-1$) reduces the lattice \eqref{eq1} into a f\/inite system of hyperbolic type equations (open chain)
\begin{gather}
u_{N_1} =c_1, \nonumber\\
u_{n,xy}= \alpha(u_{n+1},u_n,u_{n-1} )u_{n,x}u_{n,y},\qquad N_1 < n < N_2, \label{eq3} \\
u_{N_2} =c_2. \nonumber
\end{gather}

Initiated by the article \cite{H2013}, where a large class of two-dimensional lattices is discussed we use the following
\begin{Definition} \label{definition1}
We call the lattice \eqref{eq1} integrable if the hyperbolic type system \eqref{eq3} obtained from \eqref{eq1} by imposing degenerate boundary conditions is Darboux integrable for any choice of the integers $N_1$, $N_2$.
\end{Definition}
Recall that a system \eqref{eq3} of the hyperbolic type partial dif\/ferential equations is Darboux integrable if it admits the complete set of functionally independent integrals in both of~$x$ and~$y$ directions. Function $I$ of a f\/inite number of the dynamical variables ${\bf{u}},{\bf{u}}_x,{\bf{u}}_y,\ldots$ is a $y$-integral if it satisf\/ies the condition $D_y I = 0$, where $D_y$ is the operator of the total derivative with respect to the variable $y$ and $\bf{u}$ is a vector with the coordinates $u_{N_1+1}, u_{N_1+2},\ldots,u_{N_2-1} $ coinciding with the f\/ield variables. Since the system (\ref{eq3}) is autonomous we can restrict ourselves by considering only autonomous nontrivial integrals. It can be verif\/ied that the $y$-integral does not depend on ${\bf u}_y, {\bf u}_{yy},\ldots$. In what follows we are interested only on nontrivial $y$-integrals, i.e., integrals containing dependence on at least one dynamical variable ${\bf u}, {\bf u}_x, \ldots$. Note that currently the Darboux integrable discrete and continuous models are intensively studied (see, \cite{Yamilov, H2013,H2007,Smirnov2,Smirnov1,Zheltukhin2,Zheltukhin1,ZMHS-UMJ,ZMHSbook,Zhiber2001}).

We justify Def\/inition~\ref{definition1} by the following reasoning. The problem of f\/inding general solution to the Darboux integrable system is reduced to a problem of solving a system of the ordinary dif\/ferential equations. Usually these ODE are explicitly solved. On the other hand side any solution to the considered hyperbolic system \eqref{eq3} is easily prolonged outside the interval $[N_1,N_2]$ and generates a solution of the corresponding lattice \eqref{eq1}. Therefore in this case the lattice~\eqref{eq1} has a large set of the explicit solutions and is def\/initely integrable.

Let us brief\/ly discuss on the content of the article. In Section~\ref{section2} we recall the necessary def\/initions and study the main properties of the characteristic Lie ring which is a basic implement in the theory of the Darboux integrable systems. The goal of Section~\ref{section3} consists in deriving some dif\/ferential equations on the unknown $\alpha$ (it is reasonable to call them integrability conditions) from the f\/inite-dimensionality property of the characteristic Lie ring. To this end we used two test sequences. In Section~\ref{section4} by summarizing the integrability conditions we found the f\/inal form of the searched function $\alpha$. It is remarkable that two test sequences turned out to be enough to complete the classif\/ication. The classif\/ication result is formulated in Theorem~\ref{theorem6} (see Section~\ref{section5}) which claims: any lattice \eqref{eq1} integrable in the sense of Def\/inition~\ref{definition1} can be reduced by an appropriate point transformation $v=p(u)$ to the following one, found earlier in~\cite{Fer-TMF} and, independently, in~\cite{ShY}
\begin{gather}
 v_{n,xy}=v_{n,x}v_{n,y}\left(\frac{1} {v_n-v_{n-1} }-\frac{1} {v_{n+1} -v_{n}}\right). \label{alphalast11}
 \end{gather}
We obtained also a new result concerned to the lattice~(\ref{alphalast11}) by proving that for any choice of the integer $N\geq 0$ the system of the hyperbolic type equations
\begin{gather}
v_{-1} =c_0, \nonumber\\
v_{n,xy}=v_{n,x}v_{n,y}\left(\frac{1} {v_n-v_{n-1} }-\frac{1} {v_{n+1} -v_{n}}\right), \label{eq311} \\
v_{N+1} =c_1, \qquad 0\leq n\leq N \nonumber
\end{gather}
admits a complete set of functionally independent $x$- and $y$-integrals for any constant parame\-ters~$c_0$,~$c_1$, i.e., is Darboux integrable. This fact follows immediately from Theorem~\ref{theorem5} proved in Appendix~\ref{appendixA}, which states that the characteristic Lie rings in both characteristic directions~$x$ and~$y$ for the system (\ref{eq311}) are of f\/inite dimension. In the particular case when $N=1$ for the corresponding system
\begin{gather*}
v_{0,xy}=v_{0,x}v_{0,y}\left(\frac{1} {v_0-c_{0} }-\frac{1} {v_{1} -v_{0}}\right), \qquad
v_{1,xy}=v_{1,x}v_{1,y}\left(\frac{1} {v_1-v_{0} }-\frac{1} {c_{1} -v_{1}}\right) 
\end{gather*}
we give the $y$- and $x$-integrals in an explicit form
\begin{gather*}
I_1 = \frac{v_{0,x}v_{1,x}}{(v_0-c_0)(v_1-v_0)(c_1-v_1)},\qquad
I_2 = \frac{v_{1,xx}}{v_{1,x}}+\frac{v_{0,x}(v_1-c_0)}{(v_0-c_0)(v_1-v_0)} + \frac{2v_{1,x}}{c_1-v_1},\\
J_1 = \frac{v_{0,y}v_{1,y}}{(v_0-c_0)(v_1-v_0)(c_1-v_1)},\qquad
J_2 = \frac{v_{1,yy}}{v_{1,y}}+\frac{v_{0,y}(v_1-c_0)}{(v_0-c_0)(v_1-v_0)} + \frac{2v_{1,y}}{c_1-v_1}.
\end{gather*}

\section{Characteristic Lie rings}\label{section2}

Since the lattice \eqref{eq1} is invariant under the shift of the variable $n$ we can without loss of generality take $N_1 = -1$ and concentrate on the system
\begin{gather}
u_{-1} = c_0, \nonumber\\
u_{n,xy} = \alpha_n u_{n,x} u_{n,y}, \qquad 0 \leq n \leq N, \label{eq2_1} \\
u_{N+1} = c_1. \nonumber
\end{gather}
Here $\alpha_n = \alpha(u_{n-1},u_n,u_{n+1})$. Assume that system \eqref{eq2_1} is Darboux integrable and that\linebreak $I({\bf u},{\bf u}_x,\ldots)$ is its nontrivial integral. Let us evaluate $D_y I$ in the equation $D_y I = 0$ and get due to the chain rule an equation $YI=0$, where
\begin{gather} \label{eq2_2}
Y = \sum_{i=0} ^N \left(u_{i,y} \frac{\partial}{\partial u_i} + f_i \frac{\partial}{\partial u_{i,x}} + f_{i,x}\frac{\partial}{\partial u_{i,xx}} + \cdots \right).
\end{gather}
Here $f_i = \alpha_i u_{i,x} u_{i,y}$. Since the coef\/f\/icients of the equation $YI=0$ depend on $u_{i,y}$ while its solution $I$ does not depend on them we have a system of several linear equations for one unknown~$I$
\begin{gather} \label{eq2_3}
YI=0, \qquad X_j I = 0, \qquad j=1,\ldots,N,
\end{gather}
with $X_i = \frac{\partial}{\partial u_{i,y}}$. It follows from \eqref{eq2_3} that for $\forall\, i$ the operator $Y_i = [X_i, Y ]=X_i Y - Y X_i$ also annihilates $I$. Let us give the explicit form of the operator $Y_i$
\begin{gather*} 
Y_i = \frac{\partial}{\partial u_i} + X_i(f_i) \frac{\partial}{\partial u_{i,x}} + X_i(D_i f_i) \frac{\partial}{\partial u_{i,xx}} + \cdots.
\end{gather*}
Due to the relation $D^k_x f_i = u_{i,y} X_i(D^k_x f_i)$ we represent \eqref{eq2_2} as
\begin{gather}
 Y = \sum_{i=0} ^N u_{i,y} \left(\frac{\partial}{\partial u_i} + X_i(f_i) \frac{\partial}{\partial u_{i,x}} + X_i(D_x f_i) \frac{\partial}{\partial u_{i,xx}} + \cdots \right) = \sum_{i=0} ^N u_{i,y} Y_i. \label{eq2_5}
\end{gather}
The last equation together with \eqref{eq2_3} implies $\sum\limits_{i=0} ^N u_{i,y} Y_i I = 0$. Since the variables $u_{i,y}$ are independent	the coef\/f\/icients of this decomposition all vanish. Now we use the evident relation $[X_k, Y_s ]$ = 0 valid for $\forall\, k,s$. The condition $X_i I= 0$ is satisf\/ied automatically. Thus we arrive at the statement: function $I$ is a $y$-integral of the system \eqref{eq2_1} if and only if it solves the following system of equations
\begin{gather} \label{eq2_6}
Y_i I = 0\qquad {\rm for} \quad i=0,1,\ldots,N.
\end{gather}

Consider the set $R_0(y,N)$ of all multiple commutators of the characteristic vector f\/ields $Y_0, Y_1,\ldots, Y_N$. Denote through $R(y,N)$ the minimal ring containing $R_0(y,N)$. We refer to $R(y,N)$ as the characteristic Lie ring of the system \eqref{eq2_1} in $y$-direction. In a similar way one can def\/ine the characteristic Lie ring in the direction of $x$. Thus we have a complete description of the set of the linear f\/irst order partial dif\/ferential equations the $y$-integral should satisfy to. Now the task is to f\/ind a subset of the linearly independent equations such that all the other equations can be represented as linear combinations of those ones.

We say that the ring $R(y,N)$ is of f\/inite dimension if there exists a f\/inite subset $ \{Z_1, Z_2, \ldots,$ $Z_L \} \subset R(y,N)$ which def\/ines a basis in $R(y,N)$ such that
\begin{enumerate}\itemsep=0pt
\item[1)] every element $Z \in R(y,N)$ is represented in the form $Z = \lambda_1 Z_1 + \dots + \lambda_L Z_L$ with the coef\/f\/icients $\lambda_1, \ldots, \lambda_L$ which might depend on a f\/inite number of the dynamical variables,
\item[2)] relation $\lambda_1 Z_1 +\cdots +\lambda_L Z_L = 0$ implies that $\lambda_1 = \dots = \lambda_L =0$.
\end{enumerate}

Let us formulate now an ef\/fective algebraic criterion	 (see, for instance \cite{ZMHS-UMJ, ZMHSbook}) of solvability of the system \eqref{eq2_6}.

\begin{Theorem} \label{theorem1}
The system \eqref{eq2_1} is Darboux integrable if and only if both characteristic Lie rings $R(x,N)$, $R(y,N)$ are of finite dimension.
\end{Theorem}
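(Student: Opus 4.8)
The plan is to establish the equivalence in Theorem~\ref{theorem1} in two directions, exploiting the reduction already carried out in \eqref{eq2_6}. First I would prove the ``only if'' part. Suppose \eqref{eq2_1} is Darboux integrable, so that in the $y$-direction there exists a complete set of $N$ functionally independent $y$-integrals $I_1,\dots,I_N$, each depending on a finite number of the dynamical variables $\mathbf{u},\mathbf{u}_x,\dots$. By the discussion preceding the theorem, every $I_k$ is annihilated by all of $Y_0,\dots,Y_N$, hence by every element of $R(y,N)$. Now fix a level $M$ large enough that all $I_k$ depend only on the variables $u_i,u_{i,x},\dots,u_{i,x\cdots x}$ (with $M$ derivatives) for $0\le i\le N$; call this finite collection of coordinates $\mathcal{V}_M$. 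Any $Z\in R(y,N)$, restricted to functions of $\mathcal{V}_M$, is a first-order operator in these finitely many variables that kills the $N$ functionally independent functions $I_1,\dots,I_N$; since Darboux integrability forces the number of functionally independent solutions to equal (number of variables) minus (rank of the annihilating distribution), and that rank is $\le \dim\mathcal{V}_M$, the space of such operators modulo those vanishing on $\mathcal{V}_M$ is finite-dimensional. Making this uniform in $M$ via the triangular ``staircase'' structure of the $Y_i$ (each $Y_i$ introduces exactly one genuinely new direction $\partial/\partial u_i$ and then acts by prolongation on the $u_{i,x},u_{i,xx},\dots$) gives a finite basis $Z_1,\dots,Z_L$ for $R(y,N)$ with coefficients depending on finitely many dynamical variables, and linear independence over such coefficients follows because a vanishing combination would produce a relation among the coordinate directions; symmetric reasoning applies to $R(x,N)$.

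Next I would prove the ``if'' part. Assume $R(y,N)$ has finite dimension $L$ with basis $Z_1,\dots,Z_L$ as in the definition. The system \eqref{eq2_6} of first-order PDEs $Y_iI=0$ is then equivalent to the finite system $Z_jI=0$, $j=1,\dots,L$, and this system is formally integrable: since the $Z_j$ span a ring closed under commutation, $[Z_j,Z_k]=\sum_m\mu_{jk}^m Z_m$ lies in the span, so the distribution they generate is involutive (in the Frobenius sense appropriate to this infinite-jet setting, working at each finite jet level). The number of functionally independent common solutions is therefore (number of dynamical variables up to the relevant order) minus $L$; I would verify that this count is exactly $N$ by identifying $L$ precisely, or more robustly by a dimension-counting argument: among $u_0,\dots,u_N,u_{0,x},\dots$ the operators $Y_0,\dots,Y_N$ together with their commutators are forced (by the explicit leading terms $\partial/\partial u_i+X_i(f_i)\partial/\partial u_{i,x}+\cdots$) to have corank $N$ at each level, because the $N$ directions transverse to the ring are precisely those along which independent integrals can be built. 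The same argument in the $x$-direction yields a complete set of $x$-integrals, so \eqref{eq2_1} is Darboux integrable by definition.

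The main obstacle, I expect, is making the dimension count rigorous across the infinite tower of jet variables $u_i,u_{i,x},u_{i,xx},\dots$: one must show that finite-dimensionality of the ring, which a priori only controls the ``vertical'' span of the vector fields, actually forces the system \eqref{eq2_6} to have the full complement of $N$ functionally independent solutions, and conversely that a complete set of integrals caps the ring's dimension uniformly in the jet order. The clean way to handle this is to exploit the graded/triangular structure: $Y_i$ acts on the $i$-th ``strand'' $u_i,u_{i,x},u_{i,xx},\dots$ by a prolongation-type recursion $Y_i\big|_{u_{i,x^{k}}}$ determined by $X_i(D_x^{k}f_i)$, so the ring $R(y,N)$ stabilizes at a finite jet level exactly when it is finite-dimensional, and on that level standard finite-dimensional Frobenius theory applies. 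This stabilization statement — that a finitely generated characteristic Lie ring, if finite-dimensional, is already generated within a bounded jet order — is the real crux, and it is precisely the kind of fact established in the cited works \cite{ZMHS-UMJ,ZMHSbook}; I would either invoke it directly or reproduce its short proof using the explicit form of the $Y_i$ given above.
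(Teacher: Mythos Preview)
The paper does not prove Theorem~\ref{theorem1} at all: it is stated as a known algebraic criterion with a citation to \cite{ZMHS-UMJ,ZMHSbook} and then used as a black box. So there is no ``paper's own proof'' to compare your attempt against; you are sketching what those references contain, not reproducing anything the authors did here.

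Regarding the sketch itself, two remarks. First, a minor indexing slip: the system \eqref{eq2_1} has $N+1$ field variables $u_0,\dots,u_N$, so a complete set consists of $N+1$ functionally independent integrals in each direction, not $N$; your corank counts should be adjusted accordingly. Second, and more substantively, the gap you yourself identify \emph{is} the theorem: the passage from ``$R(y,N)$ is finite-dimensional over functions'' to ``the involutive distribution it generates has the correct corank at every jet level, so Frobenius yields $N+1$ independent invariants'' is precisely the nontrivial content, and your outline does not close it --- you end by proposing to invoke \cite{ZMHS-UMJ,ZMHSbook} for the stabilization statement, which is circular given that those are the references for the theorem itself. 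The ``only if'' direction has a similar soft spot: finitely many integrals bound the rank of the restricted distribution at each fixed jet level, but you still need an argument that no infinite family of operators in $R(y,N)$ can be linearly independent over functions while all having bounded rank on every finite truncation. None of this is wrong in spirit, but as written it is an outline with the hard step deferred to a citation, which is exactly what the paper does in one line.
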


\begin{Corollary} \label{corollary1}
The system \eqref{eq2_6} has a nontrivial solution if and only if the ring $R(y,N)$ is of finite dimension.
\end{Corollary}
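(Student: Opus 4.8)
The plan is to treat Corollary~\ref{corollary1} as the single-direction specialization of Theorem~\ref{theorem1}. The first thing to record is that, by the very construction of $R(y,N)$ as the minimal ring generated by the multiple commutators of $Y_0,\dots,Y_N$, and since the set of vector fields annihilating a fixed function is closed under commutators and under multiplication by functions, a function $I$ solves \eqref{eq2_6} precisely when it is a common first integral of the whole ring $R(y,N)$. Thus the statement to prove is: $R(y,N)$ admits a nontrivial common first integral if and only if $R(y,N)$ is finite-dimensional.

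For the direction ``finite-dimensional $\Rightarrow$ nontrivial integral'' I would argue directly via Frobenius. Fix a basis $Z_1,\dots,Z_L$ of $R(y,N)$; then $ZI=0$ for all $Z\in R(y,N)$ reduces to the finite system $Z_1I=\dots=Z_LI=0$, and the distribution spanned by $Z_1,\dots,Z_L$ is involutive because $R(y,N)$ is a Lie ring. On the space of $x$-jets of the field variables $u_0,\dots,u_N$ up to order $m$ --- of dimension $(m+1)(N+1)$ --- a function of order $\le m$ is killed by $Z_j$ if and only if it is killed by the order-$\le m$ truncation of $Z_j$, and these truncations still span an involutive distribution of rank $\le L$; choosing $m$ with $(m+1)(N+1)>L$, the Frobenius theorem supplies a nontrivial common integral, hence a nontrivial solution of \eqref{eq2_6}. (This is the $y$-directional part of the implication ``both rings finite-dimensional $\Rightarrow$ Darboux integrable'' inside Theorem~\ref{theorem1}, which in fact produces a whole complete set of $y$-integrals.)

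The converse --- a single nontrivial solution of \eqref{eq2_6} already forces $R(y,N)$ to be finite-dimensional --- is the substantive part, and this is where the non-degeneracy hypotheses $\partial\alpha/\partial u_{n+1}\not\equiv0$ and $\partial\alpha/\partial u_{n-1}\not\equiv0$ are used: they make the system \eqref{eq2_1} irreducibly coupled, and without such a hypothesis the claim fails (an uncoupled block carrying no $y$-integral would make $R(y,N)$ infinite-dimensional while another block still supplies a nontrivial integral). Concretely, I would filter $R(y,N)$ by the least order of differentiation appearing in an operator and, following the proof of Theorem~\ref{theorem1} as developed in \cite{ZMHS-UMJ,ZMHSbook}, show that for irreducibly coupled systems an infinite-dimensional $R(y,N)$ must project onto the full tangent space of every sufficiently high jet order, so that $ZI=0$ for all $Z$ forces $dI=0$ and $I$ trivial. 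I expect this last step --- controlling the growth of the ring and propagating the single constraint $ZI=0$ through all the field variables via the coupling --- to be the main obstacle; in the write-up it would either be imported from \cite{ZMHS-UMJ,ZMHSbook} or reproduced as the filtration argument just sketched. Combining the two implications yields the corollary.
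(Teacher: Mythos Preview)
The paper gives no proof of Corollary~\ref{corollary1} at all: it is stated immediately after Theorem~\ref{theorem1} as a direct consequence, and Theorem~\ref{theorem1} itself is not proved but quoted from \cite{ZMHS-UMJ,ZMHSbook}. So there is nothing in the paper to compare your argument against beyond the implicit claim ``this is one half of Theorem~\ref{theorem1}, read the references''.

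Your sketch goes further than the paper does. The forward direction (finite-dimensional $\Rightarrow$ nontrivial integral) via Frobenius on a truncated jet space is correct and is the standard argument. You are also right that the converse is the substantive direction and that the non-degeneracy assumptions $\partial\alpha/\partial u_{n\pm1}\not\equiv0$ are what prevent the obvious counterexample of a decoupled block. Where your write-up is honest but thin is the actual mechanism: the claim that an infinite-dimensional $R(y,N)$ ``must project onto the full tangent space of every sufficiently high jet order'' is not automatic from irreducible coupling alone, and your filtration sketch does not yet explain why linear independence over functions in $R(y,N)$ forces unbounded rank of the distribution on finite jets. In the references this step uses the specific structure of the characteristic operators (in particular the $\mathrm{ad}_{D_x}$-action, cf.\ \eqref{eq2_8} and Lemma~\ref{lemma1}) rather than a generic coupling argument, so if you intend to reproduce it rather than cite it, that is the ingredient to make explicit. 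As it stands, your plan for the converse ultimately lands in the same place as the paper: defer to \cite{ZMHS-UMJ,ZMHSbook}.
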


For the sake of convenience we introduce the following notation ${\rm ad}_X(Z):= [X,Z ]$. We stress that in our further study the operator ${\rm ad}_{D_x}$ plays a crucial role. Below we apply $D_x$ to smooth functions of the dynamical variables ${\bf u}, {\bf u}_x, {\bf u}_{xx}, \ldots$. As it was demonstrated above on this class of functions the operators $D_y$ and $Y$ coincide. Therefore relation $ [D_x, D_y ]=0$ immediately gives $ [ D_x, Y]=0$. Replace now $Y$ due to \eqref{eq2_5} and get
\begin{gather}
[ D_x, Y ] = \sum_{i=0} ^N u_{i,y}( \alpha_i u_{i,x} Y_i + [ D_x, Y_i ] )=0. \label{eq2_7}
\end{gather}
Since in \eqref{eq2_7} the variables $\{u_{i,y} \}_{i=0}^N$ are linearly independent, the coef\/f\/icients should vanish. Consequently we have
\begin{gather}
[ D_x, Y_i ] = -\alpha_i u_{i,x}Y_i. \label{eq2_8}
\end{gather}
From this formula we can easily obtain that ${\rm ad}_{D_x}\colon R(y,N) \rightarrow R(y,N)$. The following lemma describes the kernel of this map (see also~\cite{Sh1995})

\begin{Lemma} \label{lemma1}
 If the vector field
\begin{gather*} 
Z = \sum_i z_{1,i} \frac{\partial}{\partial u_{i,x}} + z_{2,i} \frac{\partial}{\partial	u_{i,xx}} + \cdots
\end{gather*}
satisfies the condition $[ D_x, Z ] = 0$ then $Z=0$.
\end{Lemma}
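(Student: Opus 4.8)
The plan is to expand both operators in the standard basis of coordinate vector fields on the space of dynamical variables $\mathbf{u}, \mathbf{u}_x, \mathbf{u}_{xx},\ldots$ and to read off the coefficients of the commutator one jet-level at a time. On smooth functions of these variables the total derivative reads
$$D_x = \sum_i \left( u_{i,x}\frac{\partial}{\partial u_i} + u_{i,xx}\frac{\partial}{\partial u_{i,x}} + u_{i,xxx}\frac{\partial}{\partial u_{i,xx}} + \cdots \right),$$
together with the terms $\alpha_i u_{i,x} u_{i,y}\,\partial/\partial u_{i,y}$, etc., which play no role here since $Z$ carries no differentiation in the $y$-jet variables. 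The structural observation that drives everything is the asymmetry between $D_x$ and $Z$: the field $Z$ has \emph{no} $\partial/\partial u_i$ component, whereas $D_x$ does, with coefficient $u_{i,x}$.

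Using the identity $[A,B] = \sum_\mu \bigl( A(b_\mu) - B(a_\mu) \bigr)\partial_\mu$ for vector fields $A = \sum_\mu a_\mu \partial_\mu$ and $B = \sum_\mu b_\mu \partial_\mu$, I would compute $[D_x, Z]$ coefficient by coefficient, writing $u_i^{(k)}$ for the $k$-th $x$-derivative of $u_i$. The coefficient of $\partial/\partial u_i$ is $D_x(0) - Z(u_{i,x}) = -z_{1,i}$; the coefficient of $\partial/\partial u_{i,x}$ is $D_x(z_{1,i}) - Z(u_{i,xx}) = D_x(z_{1,i}) - z_{2,i}$; and in general, for $k \geq 1$, the coefficient of $\partial/\partial u_i^{(k)}$ is $D_x(z_{k,i}) - z_{k+1,i}$.

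Now the hypothesis $[D_x, Z] = 0$ makes all of these vanish. Vanishing of the $\partial/\partial u_i$ coefficient forces $z_{1,i} = 0$ for every $i$; vanishing of the $\partial/\partial u_{i,x}$ coefficient then gives $z_{2,i} = D_x(z_{1,i}) = 0$; and an immediate induction on $k$, using $z_{k+1,i} = D_x(z_{k,i})$, yields $z_{k,i} = 0$ for all $k$ and all $i$, i.e., $Z = 0$. The argument is indifferent to whether the sum defining $Z$ is finite or infinite, since it is carried out level by level.

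I do not anticipate a real obstacle: the statement is a short direct computation. The only point requiring care is the bookkeeping of which operator contributes a $\partial/\partial u_i$ term — it is exactly this mismatch, not any delicate estimate, that makes $\ker \mathrm{ad}_{D_x}$ trivial on such fields, after which the self-propagating recursion $z_{k+1,i} = D_x z_{k,i}$ starting from $z_{1,i} = 0$ finishes the proof.
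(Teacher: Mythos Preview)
Your argument is correct: the key observation is that $Z$ has no $\partial/\partial u_i$ component while $D_x$ does, so the $\partial/\partial u_i$ coefficient of $[D_x,Z]$ equals $-Z(u_{i,x})=-z_{1,i}$, and the recursion $z_{k+1,i}=D_x(z_{k,i})$ then kills all higher coefficients by induction. The paper itself does not supply a proof of this lemma---it merely states it and refers the reader to~\cite{Sh1995}---so there is nothing to compare against; your direct coefficient-by-coefficient computation is precisely the standard justification.
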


\section{Method of the test sequences}\label{section3}

We call a sequence of the operators $W_0, W_1, W_2, \ldots$ in $R(y,N)$ a test sequence if the following condition is satisf\/ied for $\forall\,m$
\begin{gather*} 
[D_x, W_m] = \sum_{j=0} ^m w_{j,m}W_j.
\end{gather*}
The test sequence allows one to derive integrability conditions for the hyperbolic type system~\eqref{eq2_1} (see \cite{H2007, ZMHS-UMJ,ZMHSbook}). Indeed, let us assume that \eqref{eq2_1} is Darboux integrable. Then the ring $R(y,N)$ is of f\/inite dimension. Therefore there exists an integer $k$ such that the operators $W_0,\ldots, W_k$ are linearly independent while the operator $W_{k+1} $ is expressed through them as follows
\begin{gather} \label{eq3_2}
W_{k+1} =\lambda_k W_k + \cdots + \lambda_0 W_0.
\end{gather}
Let us apply the operator ${\rm ad}_{D_x}$ to both sides of \eqref{eq3_2}. As a result we f\/ind
\begin{gather*}
 \sum_{j=0} ^k w_{j,k+1} W_j + w_{k+1,k+1} \sum_{j=0} ^k \lambda_j W_j\nonumber\\
 \qquad {} = \sum_{j=0} ^k D_x(\lambda_j)W_j + \lambda_k \sum_{j=0} ^k w_{j,k} W_j + \lambda_{k-1} \sum_{j=0} ^{k-1} w_{j,k-1} W_j+ \cdots+ \lambda_0 w_{0,0} W_0. 
\end{gather*}
By collecting the coef\/f\/icients before the independent operators we obtain a system of the dif\/ferential equations for the coef\/f\/icients $\lambda_0, \lambda_1,\ldots,\lambda_k$. The system is overdetermined since all of the coef\/f\/icients $\lambda_j$ are functions of a f\/inite number of the dynamical variables ${\bf u}, {\bf u}_x, \ldots$. The consistency conditions of this overdetermined system generate integrability conditions for the hyperbolic type system \eqref{eq2_1}. For instance, collecting the coef\/f\/icients before $W_k$ we f\/ind the f\/irst equation of the mentioned system
\begin{gather} \label{eq3_4}
D_x(\lambda_k) = \lambda_k (w_{k+1,k+1} - w_{k,k})+ w_{k,k+1},
\end{gather}
which is also overdetermined.

Below we use two dif\/ferent samples of the test sequences in order to f\/ind the function $\alpha_n$.

\subsection{The f\/irst test sequence}\label{section3.1}

Def\/ine a sequence of the operators in $R(y,N)$ due to the recurrent formula
\begin{gather}
Y_0, \quad Y_1, \quad W_1 = [ Y_0, Y_1 ], \quad W_2 = [Y_0, W_1 ],\quad \ldots,\quad W_{k+1} = [ Y_0, W_k ], \quad \ldots. \label{eq3_5}
\end{gather}
In the case of the f\/irst two members of the sequence we have already deduced commutation relations (see \eqref{eq2_8} above) which are important for our further studies \begin{gather} \label{eq3_6}
 [ D_x, Y_0 ] = -\alpha_0 u_{0,x} Y_0, \qquad [D_x, Y_1 ] = -\alpha_1 u_{1,x} Y_1.
\end{gather}
By using these two relations and applying the Jacobi identity we get immediately
\begin{gather} \label{eq3_7}
[D_x, W_1 ] = -(\alpha_0 u_{0,x} + \alpha_1 u_{1,x})W_1 - Y_0(\alpha_1 u_{1,x})Y_1 + Y_1 (\alpha_0 u_{0,x})Y_0.
\end{gather}
It can be proved by induction that \eqref{eq3_5} is really a test sequence. Moreover it is easily verif\/ied that for $k \geq 2$
\begin{gather*} 
[ D_x, W_k] = p_k W_k + q_k W_{k-1} + \cdots,
\end{gather*}
where the factors $p_k$, $q_k$ are evaluated as follows
\begin{gather*}
p_k = -(\alpha_1 u_{1,x} + k \alpha_0 u_{0,x}), \qquad q_k = \frac{k-k^2} {2} Y_0(\alpha_0 u_{0,x})- Y_0(\alpha_1 u_{1,x})k. 
\end{gather*}
Due to the assumption that $R(y,N)$ is of f\/inite dimension only a f\/inite subset of the sequence~\eqref{eq3_5} is linearly independent. So there exists $M$ such that
\begin{gather} \label{eq3_10}
W_M = \lambda W_{M-1} + \cdots,
\end{gather}
where the operators $Y_0,Y_1,W_1,\ldots, W_{M-1} $ are linearly independent and the tail might contain a~linear combination of the operators $Y_0,Y_1,W_1,\ldots, W_{M-2} $. At the moment we are not interested in that part in~\eqref{eq3_10}.

\begin{Lemma} \label{lemma2}
The operators $Y_0$, $Y_1$, $W_1$ are linearly independent.
\end{Lemma}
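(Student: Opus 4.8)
The plan is to prove linear independence of $Y_0$, $Y_1$, $W_1$ by exploiting the action of $\operatorname{ad}_{D_x}$ together with the explicit leading terms of these operators in the basis of partial derivatives $\partial/\partial u_i$, $\partial/\partial u_{i,x}$, etc. First I would suppose a nontrivial relation $\mu_0 Y_0 + \mu_1 Y_1 + \mu W_1 = 0$ with coefficients $\mu_0,\mu_1,\mu$ depending on a finite number of dynamical variables, and derive a contradiction. The quickest route uses the distinguishing feature of $Y_0$ and $Y_1$: by the explicit form given in Section~\ref{section2}, $Y_i$ contains the term $\partial/\partial u_i$, whereas $W_1 = [Y_0,Y_1]$ — being a commutator of two vector fields each of which, modulo the $\partial/\partial u_i$ terms, involves only $\partial/\partial u_{j,x},\partial/\partial u_{j,xx},\dots$ — has \emph{no} $\partial/\partial u_0$ or $\partial/\partial u_1$ component (the $\partial/\partial u_i$ coefficients of $Y_0,Y_1$ are the constant $1$, so they contribute nothing to the bracket in those directions). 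Reading off the coefficients of $\partial/\partial u_0$ and $\partial/\partial u_1$ in the relation forces $\mu_0 = \mu_1 = 0$, leaving $\mu W_1 = 0$; since $W_1 \ne 0$ (its leading $\partial/\partial u_{i,x}$-coefficients are computed from $Y_1(\alpha_0 u_{0,x})$, $Y_0(\alpha_1 u_{1,x})$, which are nonzero precisely because $\partial\alpha/\partial u_{0}$ and $\partial\alpha/\partial u_{1}$ do not vanish identically), we conclude $\mu = 0$.

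An alternative, more in the spirit of the test-sequence machinery, is to argue via $\operatorname{ad}_{D_x}$. Apply $\operatorname{ad}_{D_x}$ to the hypothetical relation and use \eqref{eq3_6} and \eqref{eq3_7}. Because $[D_x,Y_0]$, $[D_x,Y_1]$ are scalar multiples of $Y_0$, $Y_1$ while $[D_x,W_1]$ produces a genuine $W_1$ term plus off-diagonal $Y_0$, $Y_1$ contributions with coefficients $-Y_0(\alpha_1 u_{1,x})$ and $Y_1(\alpha_0 u_{0,x})$, comparing coefficients of the (already assumed independent, or directly checked) generators yields $D_x\mu = \mu(\alpha_0 u_{0,x}+\alpha_1 u_{1,x}) + (\text{terms in }\mu_0,\mu_1)$ and similar equations for $\mu_0$, $\mu_1$. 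Iterating, one sees the only consistent solution with coefficients depending on finitely many dynamical variables is the trivial one — the off-diagonal term $-Y_0(\alpha_1 u_{1,x})$ being nonzero is what prevents $W_1$ from collapsing into the span of $Y_0,Y_1$.

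I would present the first argument, since it is cleaner: it is essentially a statement about the leading symbols of the three vector fields in the coordinate frame, and needs only (i) the explicit expression for $Y_i$ recalled above \eqref{eq2_6}, (ii) the observation that a commutator of vector fields whose $\partial/\partial u_i$-coefficients are constants has vanishing $\partial/\partial u_i$-coefficients, and (iii) the nonvanishing hypotheses on $\partial\alpha/\partial u_{n\pm 1}$, which guarantee $W_1 \ne 0$.

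The main obstacle is point (iii): one must be careful that $W_1 = [Y_0,Y_1]$ is genuinely nonzero, i.e.\ that $Y_1(\alpha_0 u_{0,x})$ and $Y_0(\alpha_1 u_{1,x})$ are not both identically zero. Here $Y_0$ acts nontrivially on $\alpha_1 = \alpha(u_0,u_1,u_2)$ only through its $\partial/\partial u_0$ piece, giving $\partial\alpha_1/\partial u_0 \cdot u_{1,x}$, which is nonzero by hypothesis (this is the $\partial/\partial u_{n-1}$-derivative of $\alpha$ evaluated at the shifted index); symmetrically for $Y_1(\alpha_0 u_{0,x})$. Once this is pinned down, the independence follows immediately by reading off coefficients, and Lemma~\ref{lemma1} is not even needed — though it could be invoked as an alternative way to see $W_1\ne 0$, since $[D_x,W_1]\ne 0$ by \eqref{eq3_7}.
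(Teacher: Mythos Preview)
Your proposal is correct and follows essentially the same approach as the paper: read off the $\partial/\partial u_0$ and $\partial/\partial u_1$ coefficients to kill $\mu_0,\mu_1$, then show $W_1\neq 0$ using the nonvanishing of $\partial\alpha/\partial u_{n\pm1}$. The only minor difference is in this last step: you compute the $\partial/\partial u_{i,x}$-coefficients of $W_1$ directly, whereas the paper applies $\operatorname{ad}_{D_x}$ to the hypothetical relation $W_1=0$ and uses \eqref{eq3_7} to obtain $Y_0(\alpha_1 u_{1,x})Y_1 - Y_1(\alpha_0 u_{0,x})Y_0 = 0$, reaching the same contradiction --- an alternative you already anticipate in your final paragraph.
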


\begin{proof}Assume that
\begin{gather*} 
\lambda_1 W_1 + \mu_1 Y_1 + \mu_0 Y_0 = 0.
\end{gather*}
Since the operators $Y_0$, $Y_1$ are of the form $Y_0 = \frac{\partial}{\partial u_0} + \cdots$, $Y_1 = \frac{\partial}{\partial u_1} + \cdots$ while $W_1$ does not contain summands like $\frac{\partial}{\partial u_0} $ and $\frac{\partial}{\partial u_1} $ then the factors $\mu_1$, $\mu_0$ vanish. If in addition $\lambda_1 \neq 0$ then we have $W_1 = 0$. Now by applying the operator ${\rm ad}_{D_x}$ to both sides of this relation we get due to~\eqref{eq3_7} an equation
\begin{gather*}
Y_0(\alpha_1 u_{1,x})Y_1 - Y_1(\alpha_0 u_{0,x})Y_0 = 0,
\end{gather*}
which yields two conditions: $Y_0(\alpha_1 u_{1,x})= \alpha_{1,u_0} u_{1,x} = 0$ and $Y_1(\alpha_0 u_{0,x})=\alpha_{0,u_1} u_{0,x} =0$. Those equalities contradict our assumption that $\frac{\partial \alpha(u_{n+1},u_n,u_{n-1} )}{\partial u_{n \pm 1} } \neq 0$. Lemma is proved. \end{proof}

\begin{Lemma}\label{lemma33}
If the expansion \eqref{eq3_10} holds then
\begin{gather*} 
\alpha(u_1,u_0,u_{-1}) = \frac{P'(u_0)}{P(u_0)+Q(u_{-1} )} + \frac{1} {M-1} \frac{Q'(u_0)}{P(u_1)+Q(u_0)} - c_1(u_0).
\end{gather*}
\end{Lemma}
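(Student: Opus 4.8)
The plan is to extract a differential equation for $\alpha_1 = \alpha(u_1,u_0,u_{-1})$ by applying $\mathrm{ad}_{D_x}$ to the expansion \eqref{eq3_10} and reading off the coefficient of the top operator $W_{M-1}$, exactly as in the general scheme \eqref{eq3_4}. With the test sequence \eqref{eq3_5} one has $W_M=[Y_0,W_{M-1}]$, so the relevant leading factors are $w_{M,M}=p_M=-(\alpha_1 u_{1,x}+M\alpha_0 u_{0,x})$ and $w_{M-1,M-1}=p_{M-1}=-(\alpha_1 u_{1,x}+(M-1)\alpha_0 u_{0,x})$, while $w_{M-1,M}=q_M=\frac{M-M^2}{2}Y_0(\alpha_0 u_{0,x})-MY_0(\alpha_1 u_{1,x})$. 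Substituting into \eqref{eq3_4} gives
\begin{gather*}
D_x(\lambda)=\lambda(p_M-p_{M-1})+q_M=-\lambda\,\alpha_0 u_{0,x}+\frac{M-M^2}{2}Y_0(\alpha_0 u_{0,x})-M\,Y_0(\alpha_1 u_{1,x}).
\end{gather*}
Since $Y_0=\partial/\partial u_0+\cdots$ acts on the functions $\alpha_0 u_{0,x}$ and $\alpha_1 u_{1,x}$ (which depend only on $\mathbf u$ and $u_{i,x}$) simply as $\partial/\partial u_0$ multiplied by the respective $u_{i,x}$, we get $Y_0(\alpha_0 u_{0,x})=\alpha_{0,u_0}u_{0,x}$ and $Y_0(\alpha_1 u_{1,x})=\alpha_{1,u_0}u_{1,x}$.

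Next I would determine the structure of $\lambda$. Because \eqref{eq3_10} is the first linear dependence in the sequence, $\lambda$ is a function of finitely many dynamical variables; comparing the $\partial/\partial u_{i,x}$-components of the two sides of \eqref{eq3_10} and using Lemma \ref{lemma1} (the kernel of $\mathrm{ad}_{D_x}$ is trivial) shows $\lambda$ cannot depend on the $x$-derivatives, so $\lambda=\lambda(\mathbf u)$; in fact by the same coefficient-matching it depends only on $u_0,u_1,u_{-1}$ (or a subset). Then $D_x\lambda=\sum_i\lambda_{u_i}u_{i,x}$, and the displayed equation becomes a polynomial identity in the independent quantities $u_{0,x},u_{1,x},u_{-1,x}$. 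Collecting the coefficient of $u_{1,x}$ yields $M\,\alpha_{1,u_0}=-\lambda_{u_1}$, i.e.\ $\alpha_{1,u_0}=-\frac{1}{M}\lambda_{u_1}$; collecting the coefficient of $u_{0,x}$ yields $\lambda_{u_0}=-\lambda\,\alpha_0+\frac{M-M^2}{2}\alpha_{0,u_0}$; and the coefficient of $u_{-1,x}$ gives $\lambda_{u_{-1}}=0$. The hard part is exploiting the mixed information: the first relation expresses $\alpha_1$ as a function whose $u_0$-derivative is $-\frac1M\lambda_{u_1}(u_0,u_1)$, so integrating in $u_0$, $\alpha(u_1,u_0,u_{-1})=-\frac1M\,\lambda(u_0,u_1)+$ (a function of $u_1,u_{-1}$); but by the $n\to n-1$ shift symmetry of the lattice the same $\lambda$ governs $\alpha_0=\alpha(u_0,u_{-1},\cdot)$ through the second relation, which forces $\lambda$ to have a separated form.

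To pin down that form I would analyze the ODE $\lambda_{u_0}=-\lambda\,\alpha_0+\frac{M-M^2}{2}\alpha_{0,u_0}$ together with $\alpha_{0,u_1}=-\frac1M\lambda_{u_1}$ evaluated at the shifted indices, concluding that $\lambda(u_0,u_1)$ must be of the form $-M\bigl(\frac{\text{something}(u_0)}{P(u_0)+Q(u_{-1})}+\cdots\bigr)$; the cleanest route is to posit $\lambda_{u_1}=\dfrac{-M\,Q'(u_0)}{(M-1)(P(u_1)+Q(u_0))}$ and verify it is consistent, which reproduces after integration the claimed
\begin{gather*}
\alpha(u_1,u_0,u_{-1})=\frac{P'(u_0)}{P(u_0)+Q(u_{-1})}+\frac{1}{M-1}\,\frac{Q'(u_0)}{P(u_1)+Q(u_0)}-c_1(u_0),
\end{gather*}
the term $-c_1(u_0)$ being the $u_0$-dependent constant of integration and $P,Q$ arising as antiderivatives of the separated factors. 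I expect the main obstacle to be the bookkeeping that forces the precise $\frac{1}{M-1}$ weight and the ratio-of-linear structure of $P,Q$: this requires carefully combining the equation at index $n=0$ (for $\alpha_0$) with the one at index $n=1$ (for $\alpha_1$) so that a single $\lambda$ serves both, and checking that no other solution of the resulting functional equations survives.
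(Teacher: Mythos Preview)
Your overall plan---apply $\mathrm{ad}_{D_x}$ to the expansion \eqref{eq3_10}, read off the equation \eqref{eq3_4} for $\lambda$, and split it by the independent $u_{i,x}$---is exactly how the paper proceeds. But there are two genuine gaps.

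First, your computation of $Y_0(\alpha_0 u_{0,x})$ is wrong. You claim $Y_0$ ``acts \ldots\ simply as $\partial/\partial u_0$'' on $\alpha_0 u_{0,x}$, but $Y_0=\partial_{u_0}+\alpha_0 u_{0,x}\,\partial_{u_{0,x}}+\cdots$ and the function $\alpha_0 u_{0,x}$ \emph{does} depend on $u_{0,x}$. Hence
\[
Y_0(\alpha_0 u_{0,x})=\bigl(\alpha_{0,u_0}+\alpha_0^2\bigr)u_{0,x},
\]
not $\alpha_{0,u_0}u_{0,x}$. The $\alpha_0^2$ term is essential: the correct $u_{0,x}$-component of the $\lambda$-equation is
\[
\lambda_{u_0}=-\alpha_0\lambda-\frac{M(M-1)}{2}\bigl(\alpha_{0,u_0}+\alpha_0^2\bigr),
\]
and this extra quadratic term is precisely what produces the Liouville structure below. (Your formula $Y_0(\alpha_1 u_{1,x})=\alpha_{1,u_0}u_{1,x}$ is fine, since $\alpha_1 u_{1,x}$ does not depend on $u_{0,x}$.)

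Second, and more seriously, your endgame is not a proof. The lemma asserts that the expansion \emph{forces} the displayed form of $\alpha$; you instead propose to ``posit'' a convenient $\lambda_{u_1}$ and ``verify it is consistent.'' That only shows one solution exists, not that it is the only one. The paper's actual argument supplies the missing idea: since $\lambda=\lambda(u_0,u_1)$ does not depend on $u_{-1}$ while $\alpha_0$ does, differentiate the $u_{0,x}$-equation in $u_{-1}$ and solve algebraically for $\lambda$,
\[
\lambda=-\frac{M(M-1)}{2}\,\frac{\alpha_{0,u_0u_{-1}}+2\alpha_0\alpha_{0,u_{-1}}}{\alpha_{0,u_{-1}}}.
\]
Imposing $\lambda_{u_{-1}}=0$ on this gives $(\log\alpha_{0,u_{-1}})_{u_0u_{-1}}+2\alpha_{0,u_{-1}}=0$, which after the substitution $\alpha_{0,u_{-1}}=-\tfrac12 e^{z}$ is the Liouville equation $z_{u_0u_{-1}}=e^{z}$. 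Its general solution $e^{z}=\dfrac{2P'(u_0)Q'(u_{-1})}{(P(u_0)+Q(u_{-1}))^2}$ yields, upon integrating in $u_{-1}$,
\[
\alpha_0=\frac{P'(u_0)}{P(u_0)+Q(u_{-1})}+H(u_0,u_1).
\]
Only then does one use $\lambda_{u_1}=-M\alpha_{1,u_0}$ (integrated in $u_1$) together with the algebraic expression for $\lambda$ to pin down $H$ and obtain the $\tfrac{1}{M-1}$ coefficient and the $-c_1(u_0)$ integration constant. Without the $u_{-1}$-differentiation/Liouville step you have no mechanism that \emph{derives} the $P,Q$ structure rather than guesses it.
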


\begin{proof} It is easy to check that equation \eqref{eq3_4} for the case of the sequence \eqref{eq3_5} takes the following form
\begin{gather}
D_x (\lambda) = -\alpha_0 u_{0,x} \lambda - \frac{M(M-1)}{2} Y_0(\alpha_0 u_{0,x})-M Y_0(\alpha_1 u_{1,x}). \label{eq3_12}
\end{gather}
We simplify the formula \eqref{eq3_12} due to the relations
\begin{gather*}
Y_0(\alpha_0 u_{0,x})= \left(\frac{\partial}{\partial u_0} + \alpha_0 u_{0,x} \frac{\partial}{\partial u_{0,x}} \right)\alpha_0 u_{0,x} = \big(\alpha_{0, u_0} + \alpha_0^2\big)u_{0x},\\ 
Y_0(\alpha_1 u_{1,x})= \alpha_{1,u_0} u_{1,x}. \nonumber
\end{gather*}
A simple analysis of the equation \eqref{eq3_12} gives that $\lambda=\lambda(u_0,u_1)$. Therefore \eqref{eq3_12} gives rise to the equation	 	
\begin{gather*}
\lambda_{u_0} u_{0,x} + \lambda_{u_1} u_{1,x} = -\left( \alpha \lambda + \frac{M(M-1)}{2} \big(\alpha_{0,u_0} +\alpha^2_0\big)\right)u_{0,x} - M \alpha_{1,u_0} u_{1,x}.
\end{gather*}
By comparing the coef\/f\/icients before the independent variables $u_{0,x}$, $u_{1,x}$ we deduce an overdetermined system of the dif\/ferential equations for~$\lambda$
\begin{gather} \label{eq3_14}
\lambda_{u_0} = -\alpha_0 \lambda - \frac{M(M-1)}{2} \big(\alpha_{0, u_0} + \alpha^2_0\big), \qquad \lambda_{u_1} = -M \alpha_{1,u_0}.
\end{gather}
Let us derive and investigate the consistency conditions of the system \eqref{eq3_14}. We dif\/ferentiate the f\/irst equation with respect to $u_{-1} $ and f\/ind
\begin{gather} \label{eq3_15}
\lambda = -\frac{M(M-1)}{2} \frac{\alpha_{0,u_0 u_{-1} }+2 \alpha_0 \alpha_{0, u_{-1} }}{\alpha_{0,u_{-1} }}.
\end{gather}
Since $\lambda_{u_{-1} } = 0$ we have
\begin{gather} \label{eq3_16}
 ( \log \alpha_{0,u_{-1} } )_{u_0 u_{-1} } + 2 \alpha_{0, u_{-1} } = 0.
\end{gather}
Now we introduce a new variable $z$ due to 	the relation $\alpha_{0,u_{-1} } = -\frac{1} {2} e^z$ and reduce \eqref{eq3_16} to the Liouville equation $z_{u_0 u_{-1} } = e^z$ for which we have the general solution	
\begin{gather*}
e^z = \frac{2 P'(u_0)Q'(u_{-1} )}{(P(u_0)+Q(u_{-1} ))^2},
\end{gather*}
where $P(u_0)$ and $Q(u_{-1} )$ are arbitrary dif\/ferentiable functions. Thus for	$\alpha_0$ we can obtain the following explicit expression
\begin{gather} \label{eq3_17}
\alpha_0 = -\frac{1} {2} \int e^z d u_{-1} = \frac{P'(u_0)}{P(u_0) + Q(u_{-1} )} + H(u_0,u_1),
\end{gather}
where $H(u_0,u_1)$ is to be determined. Now we can f\/ind $\lambda$ from the second equation in \eqref{eq3_14}
\begin{gather} \label{eq3_18}
\lambda = - M \int \alpha_{1,u_0} d u_1 = -M \frac{Q'(u_0)}{P(u_1) + Q(u_0)} + M c(u_0).
\end{gather}
Let us specify $H(u_0,u_{1} )$ by replacing in \eqref{eq3_15} $\alpha_0$ and $\lambda$ in virtue of \eqref{eq3_17}, \eqref{eq3_18}. As a result	we obtain
\begin{gather*} 
H(u_0,u_1) = \frac{1} {M-1} \frac{Q'(u_0)}{P(u_1)+Q(u_0)}-\frac{1} {M-1} c(u_0) - \frac{1} {2} \frac{P''(u_0)}{P'(u_0)}.
\end{gather*}

Summarizing the reasonings we can conclude that
\begin{gather} \label{eq3_20}
\alpha(u_1,u_0,u_{-1}) = \frac{P'(u_0)}{P(u_0)+Q(u_{-1} )} + \frac{1} {M-1} \frac{Q'(u_0)}{P(u_1)+Q(u_0)} - c_1(u_0),
\end{gather}
where the functions of one variable $P(u_0)$, $Q(u_0)$, $c_1(u_0) = \frac{1} {M-1} c(u_0) + \frac{1} {2} \frac{P''(u_0)}{P'(u_0)}$ and the integer~$M$ are to be found.
\end{proof}

The next step requires some additional integrability conditions. In what follows we derive them by constructing another test sequence.

\subsection{The second test sequence}\label{section3.2}

Now we concentrate on a test sequence generated by the operators $Y_0$, $Y_1$, $Y_2$ and their multiple commutators. It is more complicated than the previous sequence
\begin{gather}
Z_0 = Y_0, \quad Z_1 = Y_1, \quad Z_2 = Y_2, \quad Z_3 = [Y_1, Y_0 ], \quad Z_4 = [Y_2, Y_1 ], \nonumber\\ Z_5 = [Y_2, Z_3 ], \quad Z_6 = [ Y_1, Z_3 ], \quad Z_7 = [Y_1, Z_4 ], \quad Z_8 = [Y_1, Z_5 ]. \label{eq3_21}
\end{gather}
The members $Z_m$ of the sequence for $m>8$ are def\/ined due to the recurrence $Z_{m} = [Y_1, Z_{m-3} ]$. Note that it is the simplest test sequence generated by the iterations of the map $Z \rightarrow [ Y_1, Z ]$ which contains the operator $ [ Y_2, [ Y_1, Y_0 ] ] = Z_5$.

\begin{Lemma} \label{lemma3}
Operators $Z_0, Z_1, \ldots, Z_5$ constitute a linearly independent set.
\end{Lemma}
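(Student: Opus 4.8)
The plan is to prove linear independence of $Z_0,\dots,Z_5$ by the same leading-derivative bookkeeping that worked in Lemma~\ref{lemma2}, supplemented by an application of ${\rm ad}_{D_x}$ to handle the genuinely new relation. First I would record the top-order pieces of each operator: the characteristic fields satisfy $Y_0=\frac{\partial}{\partial u_0}+\cdots$, $Y_1=\frac{\partial}{\partial u_1}+\cdots$, $Y_2=\frac{\partial}{\partial u_2}+\cdots$, where the dots involve only $\frac{\partial}{\partial u_{i,x}},\frac{\partial}{\partial u_{i,xx}},\dots$. Hence in any commutator $[Y_i,Y_j]$ the terms $\frac{\partial}{\partial u_k}$ cancel, so $Z_3,Z_4,Z_5$ contain no $\frac{\partial}{\partial u_k}$ at all. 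Consequently, in a hypothetical relation
\begin{gather*}
\mu_0 Z_0 + \mu_1 Z_1 + \mu_2 Z_2 + \nu_3 Z_3 + \nu_4 Z_4 + \nu_5 Z_5 = 0,
\end{gather*}
comparing the coefficients of $\frac{\partial}{\partial u_0}$, $\frac{\partial}{\partial u_1}$, $\frac{\partial}{\partial u_2}$ forces $\mu_0=\mu_1=\mu_2=0$ immediately. What remains is to show that $Z_3,Z_4,Z_5$ themselves are linearly independent.

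Next I would separate $Z_3$ and $Z_4$ from $Z_5$ by looking one level deeper. Each $Y_i$ has the shape $Y_i=\frac{\partial}{\partial u_i}+X_i(f_i)\frac{\partial}{\partial u_{i,x}}+\cdots$, and the coefficient of $\frac{\partial}{\partial u_{i,x}}$ in $Y_i$ depends only on $u_{i-1},u_i,u_{i+1}$ and $u_{i,x}$ (through $X_i(f_i)=\alpha_i u_{i,x}$). A short computation of $Z_3=[Y_1,Y_0]$ shows its leading coefficients sit on $\frac{\partial}{\partial u_{0,x}}$ and $\frac{\partial}{\partial u_{1,x}}$ with entries like $Y_1(\alpha_0 u_{0,x})=\alpha_{0,u_1}u_{0,x}$ and $-Y_0(\alpha_1 u_{1,x})=-\alpha_{1,u_0}u_{1,x}$; similarly $Z_4=[Y_2,Y_1]$ lives on $\frac{\partial}{\partial u_{1,x}},\frac{\partial}{\partial u_{2,x}}$, and $Z_5=[Y_2,[Y_1,Y_0]]$ acquires a nonzero coefficient on $\frac{\partial}{\partial u_{2,x}}$ that is proportional to $Y_2(\alpha_{0,u_1}u_{0,x})$ or rather to mixed derivatives of $\alpha_1$ in $u_0$ (since $\alpha_0$ does not depend on $u_2$, the $\frac{\partial}{\partial u_{0,x}}$-slot of $Z_5$ vanishes). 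Crucially $Z_5$ has a component on $\frac{\partial}{\partial u_{0,x}}$ that is zero while $Z_3$ does not, and $Z_5$ has no $\frac{\partial}{\partial u_{1}}$-type ambiguity to worry about; tracking these "which variable slots are occupied'' patterns lets me peel off coefficients one at a time. The cleanest route is: from the assumed relation, the $\frac{\partial}{\partial u_{0,x}}$-coefficient gives $\nu_3\,\alpha_{0,u_1}u_{0,x}+(\text{0 from }Z_4,Z_5)=0$ modulo the $Z_4,Z_5$ contributions, and since $\alpha_{0,u_1}\not\equiv0$ one gets $\nu_3=0$; then the $\frac{\partial}{\partial u_{2,x}}$-coefficient kills $\nu_4$ via $\alpha_{2,u_1}\not\equiv0$; finally $\nu_5 Z_5=0$ with $Z_5\neq0$.

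The last point, $Z_5\neq0$, is the one genuinely new ingredient and I expect it to be the main obstacle, because unlike the others it is not settled by inspecting a single partial-derivative slot — one must rule out that $[Y_2,[Y_1,Y_0]]$ collapses identically. Here I would invoke ${\rm ad}_{D_x}$ exactly as in Lemma~\ref{lemma2}: using \eqref{eq3_6} together with $[D_x,Y_2]=-\alpha_2 u_{2,x}Y_2$ and \eqref{eq3_7}, one computes $[D_x,Z_5]$ by the Jacobi identity and finds it equals $-(\alpha_1u_{1,x}+\alpha_2u_{2,x}+\cdots)Z_5$ plus explicit combinations of $Z_3,Z_4,Y_0,Y_1,Y_2$ whose coefficients involve $Y_2Y_1(\alpha_0u_{0,x})$, $Y_2(\alpha_{1,u_0}u_{1,x})$, etc. If $Z_5$ were zero, then applying ${\rm ad}_{D_x}$ would force all those explicit coefficients to vanish; unwinding them yields conditions such as $\alpha_{1,u_0}\equiv0$ or $\alpha_{0,u_1}\equiv0$ (more precisely, the vanishing of a derivative of $\alpha$ with respect to a neighboring variable), contradicting the standing hypothesis $\frac{\partial\alpha(u_{n+1},u_n,u_{n-1})}{\partial u_{n\pm1}}\neq0$. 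Since $Z_3,Z_4$ were already shown independent and nonzero along the way, this closes the argument. The bookkeeping in computing $[D_x,Z_5]$ is the only laborious part, but it is a mechanical Jacobi-identity expansion with the three known relations \eqref{eq3_6} and $[D_x,Y_2]=-\alpha_2u_{2,x}Y_2$ as input.
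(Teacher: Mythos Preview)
Your approach is correct and follows a genuinely different (and in some ways more elementary) route than the paper's. You exploit the block structure of the $Y_j$'s --- each $Y_j$ lives only on the ``$u_j$-tower'' $\frac{\partial}{\partial u_j},\frac{\partial}{\partial u_{j,x}},\dots$ --- to read off constraints directly: the $\frac{\partial}{\partial u_k}$-slots kill $\mu_0,\mu_1,\mu_2$; then $Z_3$ carries the nonzero $\frac{\partial}{\partial u_{0,x}}$-entry $\alpha_{0,u_1}u_{0,x}$ while $Z_4,Z_5$ do not, and $Z_4$ carries the nonzero $\frac{\partial}{\partial u_{2,x}}$-entry $-\alpha_{2,u_1}u_{2,x}$ while $Z_5$ does not, killing $\nu_3,\nu_4$. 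Finally you dispose of $Z_5=0$ by applying ${\rm ad}_{D_x}$ and reading off the $Z_4$-coefficient $Y_0(a_1)=\alpha_{1,u_0}u_{1,x}$ in $[D_x,Z_5]$. (Your middle paragraph is slightly garbled --- you write that ``$Z_5$ acquires a nonzero coefficient on $\frac{\partial}{\partial u_{2,x}}$'' when in fact that coefficient vanishes, which is precisely what you need and then use --- but the actual argument is sound.)

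The paper proceeds differently once $Z_0,\dots,Z_4$ are known to be independent. It assumes $Z_5=\sum_{j=0}^4\lambda_j Z_j$, applies ${\rm ad}_{D_x}$ to the \emph{whole} relation, and compares the $Z_4$-coefficients to obtain the overdetermined equation $D_x(\lambda_4)=-\alpha_0 u_{0,x}\lambda_4-\alpha_{1,u_0}u_{1,x}$. Splitting in $u_{0,x},u_{1,x}$ gives $\lambda_{4,u_0}=-\alpha_0\lambda_4$ and $\lambda_{4,u_1}=-\alpha_{1,u_0}$; the first forces $\lambda_4=0$ (otherwise $\alpha_0=-(\log\lambda_4)_{u_0}$ would be independent of $u_{-1}$), and then the second yields $\alpha_{1,u_0}=0$, the same contradiction you reach. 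Your slot-by-slot elimination sidesteps this two-case ODE analysis and is more direct; the paper's route, on the other hand, is exactly the ${\rm ad}_{D_x}$/test-sequence machinery used throughout Sections~\ref{section3}--\ref{section4}, so it integrates seamlessly with the subsequent arguments.
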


\begin{proof}
Firstly we note that the operators $Z_0, Z_1,\ldots, Z_4$ are linearly independent. It can be verif\/ied by using reasonings similar to those from the proof of Lemma~\ref{lemma33}. We prove the lemma by contradiction. Assume that
\begin{gather} \label{eq3_22}
Z_5 = \sum_{j=0} ^4 \lambda_j Z_j.
\end{gather}
Now we specify the action of the operator ${\rm ad}_{D_x}$ on the operators $Z_i$. For $i=0,1,2$ it is obtained from the relation
\begin{gather*}
 [D_x, Y_i ] = -\alpha_i u_{i,x} Y_i.
\end{gather*}
Recall that $\alpha_i = \alpha(u_{i-1},u_i, u_{i+1})$. For $i = 3,4,5$ we have
\begin{gather*}
[ D_x, Z_3 ] = -(a_1 + a_0) Z_3 + \cdots,\\
[D_x, Z_4 ] = -(a_2 + a_1)Z_4 + \cdots,\\
[D_x, Z_5 ] = -(a_0 + a_1 + a_2) Z_5 + Y_0(a_1) Z_4 - Y_2(a_1) Z_3 + \cdots	.
\end{gather*}
Here $a_i = \alpha_i u_{i,x}$. Let us apply the operator ${\rm ad}_{D_x}$ to both sides of \eqref{eq3_22} and obtain
\begin{gather}
 -(a_0 + a_1 + a_2) (\lambda_4 Z_4 + \lambda_3 Z_3 + \cdots) + Y_0(a_1) Z_4 - Y_2(a_1) Z_3 + \cdots \nonumber\\
\qquad{} = \lambda_{4,x} Z_4 + \lambda_{3,x} Z_3 - \lambda_4 (a_1 + a_2) Z_4 -\lambda_3 (a_0 + a_1) Z_3 + \cdots. \label{eq3_23}
\end{gather}
By comparing the coef\/f\/icients before $Z_4$ in \eqref{eq3_23} we obtain the following equation
\begin{gather} \label{eq3_24}
\lambda_{4,x} = -\alpha_0 u_{0,x} \lambda_4 -\alpha_{1,u_0} u_{1,x}.
\end{gather}
A simple analysis of the equation \eqref{eq3_24} shows that $\lambda = \lambda(u_0, u_1)$. Hence the equation \eqref{eq3_24} splits down into two equations $\lambda_{4,u_0} = -\alpha_0 \lambda_4$ and $\lambda_{4,u_1} = -\alpha_{1,u_0} $. The former shows that $\lambda_4 = 0$. Indeed if $\lambda_4 \neq 0$ then we obtain an expression for $\alpha_0$: $\alpha_0 = -\left(\log \lambda_4\right)_{u_0} $ which shows that $(\alpha_0)_{u_{-1} }=0$. It contradicts the assumption that $\alpha(u_1,u_0,u_{-1} )$ depends essentially on~$u_1$ and~$u_{-1} $, therefore $\lambda_4 = 0$. Then~\eqref{eq3_24} implies $\alpha_{1,u_0} = 0$ and it leads again to a~contradiction. \end{proof}

Turn back to the sequence \eqref{eq3_21}. For the further study it is necessary to specify	the action of the operator ${\rm ad}_{D_x}$ on the members of this sequence. It is convenient to divide the sequence into three subsequences and study them separately $\{ Z_{3m} \}$, $\{ Z_{3m+1} \}$, and $\{ Z_{3m+2} \}$.

\begin{Lemma} \label{lemma4}
 Action of the operator ${\rm ad}_{D_x}$ on the sequence \eqref{eq3_21} is given by the following relations	
\begin{gather*}
 [D_x, Z_{3m} ] = - (\alpha_0 u_{0,x} + m \alpha_1 u_{1,x})Z_{3m} \\
\hphantom{[D_x, Z_{3m} ] =}{} +\left( \frac{m-m^2} {2} Y_1 (\alpha_1 u_{1,x})- m Y_1 (\alpha_0 u_{0,x})\right)Z_{3m-3} + \cdots,
\\
 [ D_x, Z_{3m+1} ] = -(\alpha_2 u_{2,x} + m \alpha_1 u_{1,x})Z_{3m+1}\\
\hphantom{[ D_x, Z_{3m+1} ] =}{} + \left( \frac{m-m^2} {2} Y_1 (\alpha_1 u_{1,x})- m Y_1 (\alpha_2 u_{2,x})\right) Z_{3m-2} + \cdots,
\\
 [D_x, Z_{3m+2} ] = - (\alpha_0 u_{0,x} + m \alpha_1 u_{1,x} + \alpha_2 u_{2,x})Z_{3m+2} +Y_0 (\alpha_1 u_{1,x})Z_{3m+1}
+ Y_2 (\alpha_1 u_{1,x})Z_{3m} \\
\hphantom{[D_x, Z_{3m+2} ] =}{}
- (m-1) \left( \frac{m}{2} Y_1 (\alpha_1 u_{1,x})+ Y_1 (\alpha_0 u_{0,x} + \alpha_2 u_{2,x}) \right)Z_{3m-1} + \cdots.
\end{gather*}
\end{Lemma}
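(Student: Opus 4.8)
The plan is to prove Lemma~\ref{lemma4} by induction on $m$, treating the three subsequences $\{Z_{3m}\}$, $\{Z_{3m+1}\}$, $\{Z_{3m+2}\}$ in parallel since they are coupled through the recurrence $Z_m = [Y_1, Z_{m-3}]$. The base cases $m=0,1$ are already essentially recorded: for $m=0$ the three formulas reduce to the known relations $[D_x,Y_0]=-\alpha_0u_{0,x}Y_0$, $[D_x,Y_1]=-\alpha_1u_{1,x}Y_1$, $[D_x,Y_2]=-\alpha_2u_{2,x}Y_2$ together with the analogue of \eqref{eq3_7} for $Z_3=[Y_1,Y_0]$ and $Z_4=[Y_2,Y_1]$, and the $m=1$ case of the third formula is precisely the stated relation $[D_x,Z_5]=-(a_0+a_1+a_2)Z_5+Y_0(a_1)Z_4-Y_2(a_1)Z_3+\cdots$ established in the proof of Lemma~\ref{lemma3}. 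So the substantive content is the induction step.

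First I would fix the inductive hypothesis: assume the three displayed formulas hold for all indices up to $3m+2$. To get the relation for $Z_{3(m+1)}=Z_{3m+3}=[Y_1, Z_{3m}]$, I apply $\mathrm{ad}_{D_x}$ and use the Jacobi identity in the form
\begin{gather*}
[D_x,[Y_1,Z_{3m}]] = [[D_x,Y_1],Z_{3m}] + [Y_1,[D_x,Z_{3m}]].
\end{gather*}
The first bracket is $[-\alpha_1 u_{1,x}Y_1, Z_{3m}] = -\alpha_1 u_{1,x}[Y_1,Z_{3m}] + Y_1(\alpha_1u_{1,x})\cdot(\text{lower}) $, using that $Z_{3m}$ acting on $\alpha_1u_{1,x}$ produces a function times $Y_1$-type terms; the second bracket is expanded via the inductive formula for $[D_x,Z_{3m}]$, and crucially $[Y_1, -\alpha_0 u_{0,x} Z_{3m}] = -\alpha_0 u_{0,x}[Y_1,Z_{3m}] - Y_1(\alpha_0 u_{0,x})Z_{3m}$, plus $[Y_1, -m\alpha_1u_{1,x}Z_{3m}]$ which contributes the diagonal $-m\alpha_1u_{1,x}Z_{3m+3}$ and an off-diagonal $-Y_1(m\alpha_1u_{1,x})Z_{3m}$. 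Collecting the coefficient of $Z_{3m+3}$ gives $-(\alpha_0u_{0,x}+(m+1)\alpha_1u_{1,x})$ as required, and the coefficient of $Z_{3m}$ (which is $Z_{3(m+1)-3}$) assembles to $\frac{(m+1)-(m+1)^2}{2}Y_1(\alpha_1u_{1,x}) - (m+1)Y_1(\alpha_0u_{0,x})$; checking this recursion for the quadratic coefficient $\frac{m-m^2}{2}$ is the one genuinely arithmetic verification — one confirms that the increment from passing $Y_1$ through matches the telescoping identity $\frac{(m+1)-(m+1)^2}{2} - \frac{m-m^2}{2} = -m$, which is exactly the extra $-mY_1(\alpha_1u_{1,x})$ produced. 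The computation for $\{Z_{3m+1}\}$ is formally identical with $\alpha_0u_{0,x}$ replaced by $\alpha_2u_{2,x}$ throughout, reflecting the symmetry $Z_4=[Y_2,Y_1]$ in place of $Z_3=[Y_1,Y_0]$.

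The remaining and most delicate case is $\{Z_{3m+2}\}$, i.e.\ $Z_{3m+5} = [Y_1, Z_{3m+2}]$, because here the inductive formula for $[D_x, Z_{3m+2}]$ already carries two off-diagonal terms $Y_0(\alpha_1u_{1,x})Z_{3m+1}$ and $Y_2(\alpha_1u_{1,x})Z_{3m}$ as well as the $Z_{3m-1}$ term, and applying $[Y_1,\cdot]$ to each of these spawns further terms that must be reorganized onto the basis $Z_{3m+5}, Z_{3m+4}, Z_{3m+3}, Z_{3m+2},\ldots$. The key observations that make this tractable are: (i) $[Y_1, Z_{3m+1}] = Z_{3m+4}$ and $[Y_1, Z_{3m}] = Z_{3m+3}$ by the very definition of the sequence, so the two troublesome off-diagonal terms get promoted cleanly to the next indices; (ii) $Y_1$ applied to the scalar coefficients only lowers the index, feeding the "$\cdots$" tail we are not tracking; (iii) the diagonal coefficient picks up the expected $-\alpha_1u_{1,x}$ increment, turning $-(\alpha_0u_{0,x}+m\alpha_1u_{1,x}+\alpha_2u_{2,x})$ into $-(\alpha_0u_{0,x}+(m+1)\alpha_1u_{1,x}+\alpha_2u_{2,x})$. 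The main bookkeeping obstacle is verifying that the coefficient of $Z_{3m+2}=Z_{3(m+1)-1}$ in the result equals $-m\big(\frac{m+1}{2}Y_1(\alpha_1u_{1,x}) + Y_1(\alpha_0u_{0,x}+\alpha_2u_{2,x})\big)$; this comes from adding the contribution $-Y_1\big((m-1)(\frac{m}{2}\cdots)\big)$-type terms coming from $[Y_1,\cdot]$ hitting the $Z_{3m-1}\to Z_{3m+2}$ transition, the terms $-Y_1(\alpha_0u_{0,x})$ and $-Y_1(\alpha_2u_{2,x})$ coming from $[Y_1,-(\alpha_0u_{0,x}+\alpha_2u_{2,x})Z_{3m+2}]$, and the term $-mY_1(\alpha_1u_{1,x})$ from the diagonal $[Y_1,-m\alpha_1u_{1,x}Z_{3m+2}]$, then checking the resulting polynomial in $m$ matches. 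Throughout, all genuinely new (non-tail) contributions land only on $Z_{3m+3},Z_{3m+4},Z_{3m+5}$ and their two predecessors, confirming \eqref{eq3_21} is a test sequence and giving the stated formulas; the hard part is purely the coefficient arithmetic in the $\{Z_{3m+2}\}$ branch, which I would organize by introducing shorthand $a_i=\alpha_iu_{i,x}$ (as in the excerpt) and tracking only coefficients modulo the lower-order tail.
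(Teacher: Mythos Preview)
Your approach is exactly what the paper intends: it says only ``Lemma~\ref{lemma4} is easily proved by induction. Since the proof is quite technical we omit it,'' and your proposal supplies precisely that induction, using the Jacobi identity on $[D_x,[Y_1,Z_{3m+j}]]$ together with the recurrence $Z_m=[Y_1,Z_{m-3}]$ for $m\ge 6$ and tracking the top two or three coefficients. One small caution: the sign on the $Z_{3m}$ term in the $Z_{3m+2}$ formula as stated in the lemma is $+Y_2(\alpha_1u_{1,x})$, whereas the direct computation (and the paper's own later formula \eqref{eq4_20}) gives $-Y_2(a_1)Z_6$; your quoted base case from Lemma~\ref{lemma3} already uses the correct minus sign, so just be aware that the lemma carries a typo here rather than letting it derail your coefficient bookkeeping.
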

Lemma~\ref{lemma4} is easily proved by induction. Since the proof is quite technical we omit it.

\begin{Theorem} \label{theorem2}
Assume that $Z_{3k+2} $ is represented as a linear combination
\begin{gather} \label{eq3_25}
Z_{3k+2} = \lambda_k Z_{3k+1} + \mu_k Z_{3k} + \nu_k Z_{3k-1} + \cdots
\end{gather}
of the previous members of the sequence \eqref{eq3_21} and neither of the operators $Z_{3j+2} $ with $j<k$ is a linear combination of $Z_s$ with $s < 3j+2$. Then the coefficient $\nu_k$ is a solution to the equation
\begin{gather} \label{eq3_26}
D_x(\nu_k) = -\alpha_1 u_{1,x} \nu_k - \frac{k(k-1)}{2} Y_1 (\alpha_1 u_{1,x})- (k-1)Y_1 (\alpha_0 u_{0,x} + \alpha_2 u_{2,x}).
\end{gather}
\end{Theorem}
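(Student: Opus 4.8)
The plan is to imitate the passage from~\eqref{eq3_2} to~\eqref{eq3_4}: apply the operator ${\rm ad}_{D_x}$ to both sides of~\eqref{eq3_25} and then collect the coefficient standing in front of $Z_{3k-1}$ (collecting the coefficients of $Z_{3k+1}$ or $Z_{3k}$ would analogously give equations for $\lambda_k$ and $\mu_k$, but it is the one for $\nu_k$ that we want). Two preliminary remarks make this legitimate. First, by Lemma~\ref{lemma3} the operators $Z_0,\dots,Z_5$ are linearly independent, so the hypothesis forces $k\ge 2$; in particular $Z_{3k-1}=Z_{3(k-1)+2}$ is a genuine member of the third subsequence, distinct from $Y_0,Y_1,Y_2$. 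Second, the hypothesis says that $Z_{3k+2}$ is the first operator of the subsequence $\{Z_{3m+2}\}$ to become dependent, and one checks (by the kind of argument used for Lemma~\ref{lemma3}) that this entails the linear independence of the whole set $Z_0,Z_1,\dots,Z_{3k+1}$, which is what will allow us to equate coefficients. Set $a_i:=\alpha_i u_{i,x}$.

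Applying ${\rm ad}_{D_x}$ to the left-hand side of~\eqref{eq3_25} and using the third relation of Lemma~\ref{lemma4} with $m=k$,
\begin{gather*}
[D_x,Z_{3k+2}]=-(a_0+ka_1+a_2)Z_{3k+2}+Y_0(a_1)Z_{3k+1}+Y_2(a_1)Z_{3k}\\
\hphantom{[D_x,Z_{3k+2}]=}{}-(k-1)\left(\frac{k}{2}Y_1(a_1)+Y_1(a_0+a_2)\right)Z_{3k-1}+\cdots,
\end{gather*}
where the tail involves only operators $Z_s$ with $s\le 3k-2$. Substituting~\eqref{eq3_25} once more to remove the surviving $Z_{3k+2}$ from this expression, and recalling that the tail of~\eqref{eq3_25} also consists of operators of index $\le 3k-2$, we see that the coefficient of $Z_{3k-1}$ produced on the left of the resulting identity equals
\begin{gather*}
-(a_0+ka_1+a_2)\nu_k-(k-1)\left(\frac{k}{2}Y_1(a_1)+Y_1(a_0+a_2)\right).
\end{gather*}

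Now apply ${\rm ad}_{D_x}$ to the right-hand side of~\eqref{eq3_25} termwise, using $[D_x,fZ]=D_x(f)Z+f[D_x,Z]$. The point is that $Z_{3k+1}$ lies in the Lie subring generated by $Y_1,Y_2$, so by Lemma~\ref{lemma4} the bracket $[D_x,Z_{3k+1}]$ is a combination of $Z_1,Z_2$ and the members $Z_4,Z_7,\dots,Z_{3k+1}$ of the second subsequence; likewise $[D_x,Z_{3k}]$ is a combination of $Z_0,Z_1$ and $Z_3,Z_6,\dots,Z_{3k}$; and every $[D_x,Z_s]$ with $s\le 3k-2$ involves only $Z_j$ with $j\le s$. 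Since $k\ge 2$, none of these contains $Z_{3k-1}$. Hence on the right-hand side only $D_x(\nu_k)Z_{3k-1}$ and $\nu_k[D_x,Z_{3k-1}]$ carry $Z_{3k-1}$, and by Lemma~\ref{lemma4} with $m=k-1$ the second of these contributes $-\nu_k(a_0+(k-1)a_1+a_2)$; thus the coefficient of $Z_{3k-1}$ on the right is $D_x(\nu_k)-\nu_k(a_0+(k-1)a_1+a_2)$. Equating it with the expression obtained above (permissible since $Z_0,\dots,Z_{3k+1}$ are linearly independent), the terms $a_0\nu_k$ and $a_2\nu_k$ cancel, the $a_1\nu_k$ terms combine into $-a_1\nu_k$, and one is left precisely with~\eqref{eq3_26}.

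The step I expect to be the genuine difficulty is the bookkeeping in the last paragraph: one has to be certain that, apart from $D_x(\nu_k)Z_{3k-1}$ and $\nu_k[D_x,Z_{3k-1}]$, nothing else on either side of the differentiated identity can produce an operator of index exactly $3k-1$. Making this airtight requires spelling out the ``support'' statements for the three subsequences underlying Lemma~\ref{lemma4} --- that ${\rm ad}_{D_x}$ maps the span of $\{Z_{3m+r}\}_m$ into itself, enlarged only by $Y_0,Y_1,Y_2$ and by operators of strictly smaller index --- together with a careful proof of the linear independence of $Z_0,\dots,Z_{3k+1}$, which is what converts the coefficient identity into an honest differential equation for $\nu_k$.
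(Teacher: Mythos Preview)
Your overall strategy matches the paper's exactly: apply ${\rm ad}_{D_x}$ to \eqref{eq3_25}, use Lemma~\ref{lemma4}, and compare the coefficients in front of $Z_{3k-1}$. Your ``closure'' observation --- that $[D_x,Z_{3k}]$ lives in the span of $Z_0,Z_1,Z_3,Z_6,\dots,Z_{3k}$ and $[D_x,Z_{3k+1}]$ in the span of $Z_1,Z_2,Z_4,Z_7,\dots,Z_{3k+1}$ --- is correct and is essentially what underlies Lemma~\ref{lemma4}; it cleanly explains why neither $\lambda_k[D_x,Z_{3k+1}]$ nor $\mu_k[D_x,Z_{3k}]$ contributes a $Z_{3k-1}$ term.

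The one place where your argument diverges from the paper is the claim that the hypothesis of the theorem forces the full set $Z_0,\dots,Z_{3k+1}$ to be linearly independent. The hypothesis only controls the third subsequence $\{Z_{3j+2}\}$; nothing prevents $Z_{3k}$ or $Z_{3k+1}$ from already being a linear combination of earlier $Z_i$'s, and if that happens you cannot directly equate coefficients. The paper does \emph{not} attempt to prove full independence. Instead it proves Lemma~\ref{lemma5}: if $Z_{3k}$ (respectively $Z_{3k+1}$) happens to be a linear combination of earlier members, then the coefficient of $Z_{3k-1}$ in that combination must vanish. This is shown by the Lemma~\ref{lemma3}-style computation you alluded to --- one derives $D_x(\log\lambda)=\alpha_2 u_{2,x}-\alpha_1 u_{1,x}$, which forces $\alpha_1$ to be independent of $u_0$, a contradiction. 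With Lemma~\ref{lemma5} in hand, any substitution of $Z_{3k}$ or $Z_{3k+1}$ in terms of lower operators leaves the $Z_{3k-1}$ coefficient untouched, and the comparison goes through.

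So your proof is correct in outline and you have correctly flagged the difficult step; but the way to close it is not to prove full independence of $Z_0,\dots,Z_{3k+1}$ (which may simply be false under the given hypotheses), but rather to prove the more targeted Lemma~\ref{lemma5}.
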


\begin{Lemma} \label{lemma5}
Suppose that all of the conditions of the theorem are satisfied. In addition assume that the operator $Z_{3k}$ $($operator $Z_{3k+1} )$ is linearly expressed in terms of the operator $Z_i$ with $i<3k$. Then in this decomposition the coefficient before $Z_{3k-1} $ vanishes.
\end{Lemma}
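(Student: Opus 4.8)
The plan is to argue by contradiction: from the assumed linear dependence I would isolate, by applying $\mathrm{ad}_{D_x}$, a single first-order differential equation for the coefficient before $Z_{3k-1}$, and then show that equation forces that coefficient to vanish. The two cases ($Z_{3k}$ and $Z_{3k+1}$) run in parallel, differing only in which line of Lemma~\ref{lemma4} is invoked, so I would describe $Z_{3k}$ in detail and record the change for $Z_{3k+1}$ along the way. First I would note that $k=1$ is vacuous: by Lemma~\ref{lemma3} the operators $Z_0,\dots,Z_5$ are linearly independent, so $Z_3$ has no representation $p_0Z_0+p_1Z_1+p_2Z_2$. Hence $k\ge2$, and then $Z_{3k-1}=Z_{3(k-1)+2}$ lies in the subsequence $\{Z_{3j+2}\}$ with index $\ge5$; in particular $Z_{3k-1}$ is neither $Z_0$, $Z_1$ nor any $Z_{3j}$ --- a remark I would use twice below.

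Since the statement refers to the coefficient before $Z_{3k-1}$ in the (hence unique) expansion, $Z_0,\dots,Z_{3k-1}$ are linearly independent; I would write $Z_{3k}=\sum_{i=0}^{3k-1}p_iZ_i$ and apply $\mathrm{ad}_{D_x}$ to both sides. On the left I would take the expression for $[D_x,Z_{3k}]$ from Lemma~\ref{lemma4} with $m=k$ and substitute $Z_{3k}=\sum p_iZ_i$ into its leading term; the decisive point is that, by the recursion $Z_m=[Y_1,Z_{m-3}]$ behind Lemma~\ref{lemma4}, the tail ``$\cdots$'' of that formula stays in the span of the $Z_{3j}$ together with $Z_0,Z_1$, so it contributes nothing to $Z_{3k-1}$, and the $Z_{3k-1}$-coefficient on the left is simply $-(\alpha_0u_{0,x}+k\alpha_1u_{1,x})p_{3k-1}$. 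On the right I would expand $[D_x,\sum p_iZ_i]=\sum\bigl(D_x(p_i)Z_i+p_i[D_x,Z_i]\bigr)$; since $\{Z_m\}$ is a test sequence, each $[D_x,Z_i]$ is a combination of $Z_0,\dots,Z_i$, so $Z_{3k-1}$ appears only through $i=3k-1$, with coefficient $-(\alpha_0u_{0,x}+(k-1)\alpha_1u_{1,x}+\alpha_2u_{2,x})$ by Lemma~\ref{lemma4} with $m=k-1$. Comparing the two $Z_{3k-1}$-coefficients would give
\[
D_x(p_{3k-1})=(\alpha_2u_{2,x}-\alpha_1u_{1,x})\,p_{3k-1},
\]
whereas for $Z_{3k+1}$ the same computation (using the $[D_x,Z_{3k+1}]$ line of Lemma~\ref{lemma4}, whose tail again misses $Z_{3k-1}$) gives $D_x(p_{3k-1})=(\alpha_0u_{0,x}-\alpha_1u_{1,x})p_{3k-1}$.

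Finally I would show these equations force $p_{3k-1}\equiv0$. First, $p_{3k-1}$ can only depend on the base variables $\mathbf u$: were it to depend on an $x$-derivative of positive order $\ell$, then $D_x(p_{3k-1})$ would involve an $x$-derivative of order $\ell+1$, which the right-hand side does not. Assuming $p_{3k-1}\not\equiv0$, on an open set where it is nonzero I would set $g=\log|p_{3k-1}|=g(\mathbf u)$, so $\sum_j g_{u_j}u_{j,x}=\alpha_2u_{2,x}-\alpha_1u_{1,x}$. Comparing coefficients of the independent variables $u_{j,x}$ gives $g_{u_j}=0$ for $j\ne1,2$, hence $g=g(u_1,u_2)$, and $g_{u_1}=-\alpha_1=-\alpha(u_0,u_1,u_2)$; but $g_{u_1}$ is independent of $u_0$ while $\alpha(u_0,u_1,u_2)$ depends essentially on $u_0$ since $\partial\alpha/\partial u_{n-1}\not\equiv0$ --- a contradiction. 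In the $Z_{3k+1}$ case the same argument yields $g=g(u_0,u_1)$ with $g_{u_1}=-\alpha(u_0,u_1,u_2)$, impossible because $g_{u_1}$ is then free of $u_2$ while $\partial\alpha/\partial u_{n+1}\not\equiv0$. So $p_{3k-1}=0$ in both cases.

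The part I expect to require the most care is the tail estimate in the second paragraph: one has to verify from the recursion behind Lemma~\ref{lemma4} that $\mathrm{ad}_{D_x}$ maps each of the three subsequences into its own span plus a fixed finite set of low-index operators, none of which equals $Z_{3k-1}$. That is exactly what makes the coefficient comparison produce the displayed scalar equation with no extraneous terms; once this is in place, the passage from the equation to $p_{3k-1}=0$ is the elementary regularity-and-nondegeneracy argument above.
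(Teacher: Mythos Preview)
Your proof is correct and follows the same route as the paper: apply $\mathrm{ad}_{D_x}$ to the assumed expansion, extract the scalar equation $D_x(\lambda)=(\alpha_2u_{2,x}-\alpha_1u_{1,x})\lambda$ by comparing $Z_{3k-1}$-coefficients, and derive a contradiction from the essential dependence of $\alpha_1$ on $u_0$. You are more explicit than the paper about why the tail of $[D_x,Z_{3k}]$ in Lemma~\ref{lemma4} misses $Z_{3k-1}$ and about the parallel $Z_{3k+1}$ case, both of which the paper leaves implicit.
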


\begin{proof} Assume in contrary that $\lambda \neq 0$ in the formula
\begin{gather} \label{eq3_27}
Z_{3k} = \lambda Z_{3k-1} + \cdots.
\end{gather}
Let us apply ${\rm ad}_{D_x}$ to \eqref{eq3_27}. As a result we f\/ind due to Lemma~\ref{lemma4}
\begin{gather}
 -(\alpha_0 u_{0,x} + k \alpha_1 u_{1,x})\lambda Z_{3k-1} + \cdots\nonumber\\
\qquad{}= D_x(\lambda) Z_{3k-1} - \lambda (\alpha_0 u_{0,x} + (k-1)\alpha_1 u_{1,x} + \alpha_2 u_{2,x})Z_{3k-1} + \cdots.\label{eq3_28}
\end{gather}
Collect the coef\/f\/icients before $Z_{3k-1} $ and obtain an equation the coef\/f\/icient $\lambda$ must satisfy to
\begin{gather*}
D_x(\lambda) = \lambda (\alpha_2 u_{2,x} - \alpha_1 u_{1,x}).
\end{gather*}
Due to our assumption above $\lambda$ does not vanish and hence
\begin{gather} \label{eq3_29}
D_x(\log \lambda) = \alpha_2 u_{2,x} - \alpha_1 u_{1,x}.
\end{gather}
Since $\lambda$ depends on a f\/inite number of the dynamical variables then due to equation \eqref{eq3_29} $\lambda$~might depend only on $u_1$ and $u_2$. Therefore \eqref{eq3_28} yields
\begin{gather*}
(\log \lambda)_{u_1} u_{1,x} + (\log \lambda)_{u_2} u_{2,x} = \alpha_2 u_{2,x} - \alpha_1 u_{1,x}.
\end{gather*}
The variables $u_{1,x}$, $u_{2,x}$ are independent, so the last equation implies $\alpha_1 = -(\log \lambda)_{u_1} $, $\alpha_2 = (\log \lambda)_{u_2} $. Thus $\alpha_1 = \alpha_1(u_1, u_2)$ depends only on $u_1$, $u_2$. It contradicts our assumption that $\alpha_1$ depends essentially on $u_0$. The contradiction shows that assumption $\lambda \neq 0$ is not true. That completes the proof. \end{proof}

Now in order to prove Theorem~\ref{theorem2} we apply the operator ${\rm ad}_{D_x}$ to both sides of \eqref{eq3_25} and then simplify due to the relation from Lemma~\ref{lemma4}. Comparison of the coef\/f\/icients before~$Z_{3k-1} $ implies equation \eqref{eq3_26}.

Let us f\/ind the explicit expressions for the coef\/f\/icients of the equation \eqref{eq3_26}
\begin{gather*}
Y_1 (\alpha_0 u_{0,x})= \alpha_{0, u_1} u_{0,x}, \qquad Y_1 (\alpha_2 u_{2,x})= \alpha_{2,u_1} u_{2,x}, \qquad Y_1(\alpha_1 u_{1,x})= \big(\alpha_{1, u_1} + \alpha^2_1\big) u_{1,x}
\end{gather*}
and substitute them into \eqref{eq3_26}
\begin{gather}
 D_x(\nu_k) = -\alpha_1 \nu_k u_{1,x} - \frac{k(k-1)}{2} \big(\alpha_{1,u_{1} }+ \alpha_1^2\big) u_{1,x} - (k-1)(\alpha_{0,u_1} u_x + \alpha_{2,u_1} u_{2,x}). \label{eq3_30}
\end{gather}
A simple analysis of \eqref{eq3_30} convinces that $\nu_k$ might depend only on the variables $u_0$, $u_1$, $u_2$. Therefore	
\begin{gather} \label{eq3_31}
D_x(\nu_k) = \nu_{k,u_0} u_{0,x} + \nu_{k,u_1} u_{1,x} + \nu_{k, u_2} u_{2,x}.
\end{gather}
From the equations \eqref{eq3_30}, \eqref{eq3_31} we obtain a system of the equations for the coef\/f\/icient $\nu_k$
\begin{gather}
\nu_{k,u_0} = -(k-1)\alpha_{0,u_1}, \label{eq3_32} \\
\nu_{k,u_1} = -\alpha_1 \nu_k - \frac{k(k-1)}{2} \big(\alpha_{1,u_1} + \alpha^2_1\big), \label{eq3_33} \\
\nu_{k,u_2} = -(k-1)\alpha_{2,u_1}. \label{eq3_34}
\end{gather}
Substitute the preliminary expression for the function $\alpha$ given by the formula \eqref{eq3_20} into the equation \eqref{eq3_32} and get
\begin{gather*}
\nu_{k,u_0} = \frac{k-1} {M-1} \frac{P'(u_1) Q'(u_0)}{(P(u_1)+ Q(u_0))^2}.
\end{gather*}
Integration of the latter with respect to $u_0$ yields
\begin{gather*} 
\nu_k = -\frac{k-1} {M-1} \frac{P'(u_1)}{P(u_1) + Q(u_0)} + H(u_1,u_2).
\end{gather*}
Since $\nu_{k, u_2} =H_{u_2} $ the equation \eqref{eq3_34} gives rise to the relation
\begin{gather*}
H_{u_2} =(k-1)\frac{P'(u_2)Q'(u_1)} {(P(u_2)+Q(u_1))^2}.
\end{gather*}
Now by integration we obtain an explicit formula for $H$
\begin{gather*}
H = -(k-1) \left( \frac{Q'(u_1)}{P(u_2)+Q(u_1)} + A(u_1) \right),
\end{gather*}
which produces
\begin{gather*} 
\nu_k = -(k-1) \left(\frac{1} {M-1} \frac{P'(u_1)}{P(u_1) + Q(u_0)} + \frac{Q'(u_1)}{P(u_2) + Q(u_1)} + A(u_1) \right).
\end{gather*}
Let us substitute the values of $\alpha$ and $\nu_k$ found into the equation \eqref{eq3_33}. We get a huge equation
\begin{gather}
 -\frac{(k-1)}{M-1} \left( \frac{P''(u_1)}{P(u_1)+Q(u_0)} - \frac{P'^2(u_1)}{(P(u_1)+Q(u_0))^2} \right)\nonumber\\
\qquad\quad{}-(k-1)\left(\frac{Q''(u_1)}{P(u_2)+Q(u_1)} - \frac{Q'^2(u_1)}{(P(u_2)+Q(u_1))^2} + A'(u_1) \right) \nonumber\\
\qquad{} =(k-1)\left(\frac{P'(u_1)}{P(u_1)+Q(u_0)} + \frac{1} {M-1} \frac{Q'(u_1)}{P(u_2)+Q(u_1)} -c_1(u_1) \right) \nonumber\\
\qquad\quad{}\times \left( \frac{1} {M-1} \frac{P'(u_1)}{P(u_1)+Q(u_0)}+ \frac{Q'(u_1)}{P(u_2)+Q(u_1)} + A(u_1)\right) \nonumber\\
\qquad\quad{}- \frac{k(k-1)}{2} \left( \frac{P''(u_1)}{P(u_1)+Q(u_0)} + \frac{1} {M-1} \frac{Q''(u_1)}{P(u_2)+Q(u_1)} \right.\nonumber\\
\qquad\quad{} - \frac{1} {M-1} \frac{Q'^2(u_1)}{(P(u_2)+Q(u_1))^2} + \frac{1} {M-1} \frac{2 Q'(u_1) P'(u_1)}{(P(u_1)+Q(u_0))(P(u_2)+Q(u_1))} \nonumber\\
 \qquad\quad{} + \frac{1} {(M-1)^2} \frac{Q'^2(u_1)}{(P(u_2)+Q(u_1))^2} \nonumber\\
\left.\qquad\quad{} -c'_1(u_1) - 2 c_1(u_1)\left( \frac{P'(u_1)}{P(u_1)+Q(u_0)} + \frac{1} {M-1} \frac{Q'(u_1)}{P(u_2)+Q(u_1)} \right) + c^2_1(u_1) \right). \label{eq3_37}
\end{gather}
Evidently due to our assumption $\frac{\partial}{\partial u_1} \alpha(u_1,u_0,u_{-1}) \neq 0$, $\frac{\partial}{\partial u_{-1} } \alpha(u_1,u_0,u_{-1}) \neq 0$ the functions~$P'(u_2)$ and~$Q'(u_0)$ do not vanish. Therefore the variables
\begin{gather*}
\frac{Q'^2(u_1)}{(P(u_2)+Q(u_1))^2}, \qquad \frac{P'^2(u_1)}{(P(u_1)+Q(u_0))^2}, \qquad \frac{P'(u_1)Q'(u_1)}{(P(u_1)+Q(u_0))(P(u_2)+Q(u_1))}
\end{gather*}
are independent. By gathering the coef\/f\/icients before these variables in \eqref{eq3_37} we get a system of two equations
\begin{gather} \label{eq3_38}
\left( 1 - \frac{1} {M-1} \right) \left(1 - \frac{k}{2(M-1)} \right) = 0, \qquad 1+ \frac{1} {(M-1)^2} = \frac{k}{M-1}.
\end{gather}
There are two solutions to the system \eqref{eq3_38}: $M=0$, $k=-2$ and $M=2$, $k=2$. The former does not f\/it since~$k$ should be positive, so we have the only possibility $M=2$, $k=2$. This f\/inishes the proof of Theorem~\ref{theorem2}.

\section[Finding the functions $P$, $Q$ and $c_1$]{Finding the functions $\boldsymbol{P}$, $\boldsymbol{Q}$ and $\boldsymbol{c_1}$}\label{section4}

In this section we specify the function $\alpha$ given by \eqref{eq3_20}. For this aim we should consider expansions \eqref{eq3_10}, \eqref{eq3_25} using the fact that $M=2$, $k=2$.

Let us rewrite the expansion \eqref{eq3_10} in the complete form
\begin{gather} \label{eq4_2}
W_2 = \lambda W_1 + \sigma Y_1 + \delta Y_0.
\end{gather}

\begin{Theorem} \label{theorem3}
Expansion \eqref{eq4_2} holds if and only if the function $\alpha$ in \eqref{eq1} is of the following form
\begin{gather} \label{Theor-I20}
 \alpha(u_{n+1},u_n,u_{n-1}) = \frac{P'(u_n)}{P(u_n)+Q(u_{n-1} )} + \frac{Q'(u_n)}{P(u_{n+1} )+Q(u_n)}
-\frac{1} {2} \left( \log Q'(u_n)P'(u_n) \right)',\!\!\!
\end{gather}
where the functions $P(u_n)$, $Q(u_n)$ are connected with each other by the differential constraint
\begin{gather} \label{Theor-I21}
-3Q''^2P'^2 - 2P'''P'Q'^2 + 3P''^2Q'^2 + 2 P'^2Q'''Q' = 0.
\end{gather}
\end{Theorem}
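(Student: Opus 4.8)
The plan is to assemble what is already available. By Theorem~\ref{theorem2} we have $M=2$, so \eqref{eq3_10} is exactly the relation \eqref{eq4_2}, and by Lemma~\ref{lemma33} the function $\alpha$ already has the preliminary shape \eqref{eq3_20} with $M=2$, with leading coefficient $\lambda=-\frac{2Q'(u_0)}{P(u_1)+Q(u_0)}+2c(u_0)$ coming from \eqref{eq3_18}. What still has to be extracted from \eqref{eq4_2} is the part of the $W_1$-coefficient equation that the proof of Lemma~\ref{lemma33} did not use --- there one exploited only the $\partial/\partial u_{-1}$-derivative of the first equation of \eqref{eq3_14} (and its second equation), not that equation itself --- together with the equations coming from the $Y_0$- and $Y_1$-coefficients of \eqref{eq4_2}.

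First I would apply ${\rm ad}_{D_x}$ to both sides of \eqref{eq4_2}. Using \eqref{eq3_6}, \eqref{eq3_7} and the Jacobi identity one gets an explicit expansion of $[D_x,W_2]$ along $W_2,W_1,Y_1,Y_0$ (with $W_2$-coefficient $-(2\alpha_0u_{0,x}+\alpha_1u_{1,x})$, $Y_1$-coefficient $-Y_0^2(\alpha_1u_{1,x})$, $Y_0$-coefficient $W_1(\alpha_0u_{0,x})+Y_0Y_1(\alpha_0u_{0,x})$, and so on). Substituting \eqref{eq4_2} back and comparing coefficients of the independent operators $W_1,Y_1,Y_0$ (Lemma~\ref{lemma2}) gives three scalar equations; moreover $\sigma=\delta=0$ from the outset, since the $\partial/\partial u_0$- and $\partial/\partial u_1$-components of $W_2$ and $W_1$ vanish while those of $Y_0$ and $Y_1$ do not, so \eqref{eq4_2} reads $W_2=\lambda W_1$. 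The $W_1$-equation is again the system \eqref{eq3_14}; the $Y_1$- and $Y_0$-equations reduce to $\lambda\,Y_0(\alpha_1u_{1,x})=Y_0^2(\alpha_1u_{1,x})$ and $\lambda\,Y_1(\alpha_0u_{0,x})=W_1(\alpha_0u_{0,x})+Y_0Y_1(\alpha_0u_{0,x})$. Evaluating these $Y$- and $W_1$-actions on the preliminary $\alpha$ of \eqref{eq3_20} with $M=2$, one finds that both force $\lambda=\frac{Q''(u_0)}{Q'(u_0)}-\frac{2Q'(u_0)}{P(u_1)+Q(u_0)}$; comparing with \eqref{eq3_18} yields $c(u_0)=\frac{1}{2}\,Q''(u_0)/Q'(u_0)$, hence $c_1(u_0)=\frac{1}{2}\big(\log P'(u_0)Q'(u_0)\big)'$, and $\alpha$ becomes \eqref{Theor-I20}.

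To obtain the constraint between $P$ and $Q$ I would then substitute $\alpha$ from \eqref{Theor-I20} and the value of $\lambda$ just found into the first equation of \eqref{eq3_14}. Using the definition of $c_1$ its dependence on $u_{-1}$ cancels identically; in the remaining $u_{-1}$-free identity the terms with $\big(P(u_1)+Q(u_0)\big)^{-2}$ cancel, the coefficient of $\big(P(u_1)+Q(u_0)\big)^{-1}$ (which may be isolated since $P(u_1)$ varies independently of $u_0$) again gives $c(u_0)=\frac{1}{2}Q''/Q'$, and the $\big(P(u_1)+Q(u_0)\big)$-free part, after inserting $c_1'=\frac{1}{2}\big(P'''/P'-(P'')^2/(P')^2+Q'''/Q'-(Q'')^2/(Q')^2\big)$, collapses to $2P'''/P'-3(P'')^2/(P')^2=2Q'''/Q'-3(Q'')^2/(Q')^2$, which on clearing denominators is exactly \eqref{Theor-I21}. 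Conversely, given \eqref{Theor-I20} with \eqref{Theor-I21}, I would set $\lambda:=\frac{Q''(u_0)}{Q'(u_0)}-\frac{2Q'(u_0)}{P(u_1)+Q(u_0)}$ and check by direct substitution that it solves both equations of \eqref{eq3_14} (the $\lambda_{u_1}$-one at once, the $\lambda_{u_0}$-one precisely because of \eqref{Theor-I21}) and both $Y$-equations above; then $[D_x,\,W_2-\lambda W_1]$ is proportional to $W_2-\lambda W_1$, and since $W_2-\lambda W_1$ involves only $\partial/\partial u_{i,x},\partial/\partial u_{i,xx},\dots$, an argument as in the proof of Lemma~\ref{lemma1} forces $W_2-\lambda W_1=0$, that is, the relation~\eqref{eq4_2}.

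The main obstacle is the bookkeeping in the third step: carrying out the $D_x$-differentiation, splitting the first equation of \eqref{eq3_14} cleanly into its $u_{-1}$-dependent and $u_{-1}$-free pieces, recognising the genuinely independent functional blocks in the remainder (the powers of $\big(P(u_1)+Q(u_0)\big)^{-1}$, and then the terms carrying $P'''$ and $Q'''$ that enter through $c_1'$), and verifying that the residue is precisely \eqref{Theor-I21} and nothing weaker or stronger. This computation involves derivatives of $P$ and $Q$ up to third order and must be handled carefully, but it is otherwise routine; all the structural ingredients --- the criterion of Theorem~\ref{theorem1}, Lemmas~\ref{lemma1}, \ref{lemma2}, \ref{lemma33} and Theorem~\ref{theorem2} --- are already in place.
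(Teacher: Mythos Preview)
Your proposal is correct and follows essentially the same route as the paper: show $\sigma=\delta=0$, apply ${\rm ad}_{D_x}$ to $W_2=\lambda W_1$, extract the three coefficient equations (your $W_1$-, $Y_1$-, $Y_0$-equations are exactly the paper's \eqref{eq4_3}--\eqref{eq4_5}, the last one rewritten via $W_1=Y_0Y_1-Y_1Y_0$), use the $Y$-equations to pin down $c(u_0)=\tfrac12 Q''/Q'$, and then feed everything back into the first equation of \eqref{eq3_14} to obtain the Schwarzian-type constraint \eqref{Theor-I21}.

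Two small remarks. First, your opening appeal to Theorem~\ref{theorem2} for $M=2$ is unnecessary and slightly misplaced: that theorem concerns the second test sequence $\{Z_j\}$, whereas here the hypothesis is already \eqref{eq4_2}, which together with Lemma~\ref{lemma2} gives $M=2$ directly; it is then legitimate to import Lemma~\ref{lemma33} and \eqref{eq3_18} with $M=2$. Second, you go beyond the paper in treating the converse explicitly: the paper stops at ``if \eqref{eq4_2} holds then \eqref{Theor-I20}--\eqref{Theor-I21}'', while you also check that \eqref{Theor-I20}--\eqref{Theor-I21} force $[D_x,\,W_2-\lambda W_1]=-(2a_0+a_1)(W_2-\lambda W_1)$ and then kill $W_2-\lambda W_1$ by the Lemma~\ref{lemma1} mechanism (the needed variant, $[D_x,Z]=hZ\Rightarrow Z=0$ for $Z$ with no $\partial/\partial u_i$ part, follows by the same coefficient-by-coefficient argument). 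That is a genuine, if modest, improvement in completeness.
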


\begin{proof}
 Firstly by using relations \eqref{eq3_6}, \eqref{eq3_7} and applying the Jacobi identity we get
\begin{gather*}
[ D_x, W_2]= -(2 a_0 + a_1)W_2 - Y_0(a_0 + 2a_1)W_1 + (2 Y_0 Y_1 (a_0) - Y_1 Y_0 (a_0)) Y_0 - Y_0 Y_0 (a_1)Y_1.
\end{gather*}

Evidently only one summand in \eqref{eq4_2} contains the term $\frac{\partial}{\partial u_1} $, namely $\sigma Y_1$, and only one summand contains the term $\frac{\partial}{\partial u_0}$, namely $\delta Y_0$. Hence $\sigma=0$, $\delta=0$ and we have
\begin{gather*}
W_2 = \lambda W_1.
\end{gather*}
Now by applying the operator ${\rm ad}_{D_x}$ to both sides of this relation we obtain
\begin{gather*}
 -(2 a_0 + a_1)W_2 - Y_0(a_0 + 2a_1)W_1 + (2 Y_0 Y_1 (a_0) - Y_1 Y_0 (a_0)) Y_0 - Y_0 Y_0 (a_1)Y_1 \\
\qquad{}=D_x(\lambda) W_1 + \lambda ( -(a_0 + a_1) W_1 + Y_1(a_0) Y_0 - Y_0(a_1)Y_1 ).
\end{gather*}
Collecting the coef\/f\/icients before $W_2$, $W_1$, $Y_1$, and $Y_0$ we f\/ind the following system
\begin{gather}
D_x(\lambda) = -a_0 \lambda - Y_0 (a_0 + 2 a_1), \label{eq4_3} \\
-Y_0 Y_0 (a_1) = -\lambda Y_0 (a_1), \label{eq4_4} \\
2 Y_0 Y_1 (a_0) - Y_1 Y_0 (a_0) = \lambda Y_1(a_0). \label{eq4_5}
\end{gather}
Setting $M=2$ in \eqref{eq3_12} we obtain equation \eqref{eq4_3}. The overdetermined system \eqref{eq3_14} takes the form
\begin{gather}
\lambda_{u_0} = -\alpha_0 \lambda - \big(\alpha_{0,u_0} + \alpha^2_0\big),\label{eq4_5_0} \\
\lambda_{u_1} = - 2 \alpha_{1,u_0}.\nonumber 
\end{gather}
 Thus
\begin{gather}
\lambda = -2 \frac{Q'(u_0)}{P(u_1) + Q(u_0)} + 2 c(u_0),\label{eq4_6} \\
\alpha(u_1,u_0,u_{-1}) = \frac{P'(u_0)}{P(u_0)+Q(u_{-1} )} + \frac{Q'(u_0)}{P(u_1)+Q(u_0)} - \frac{1} {2} \frac{P''(u_0)}{P'(u_0)} - c(u_0).\label{eq4_7}
\end{gather}
We rewrite \eqref{eq4_4}, \eqref{eq4_5} due to the relations
\begin{gather*}
Y_0(a_0) = \left( \frac{\partial}{\partial u_0} + \alpha_0 u_{0,x} \frac{\partial}{\partial u_{0,x}} + \cdots \right)(\alpha_0 u_{0,x}) = \big(\alpha_{0,u_0} + \alpha^2_0 \big)u_{0,x},\\
Y_0(a_1) = \left( \frac{\partial}{\partial u_0} + \alpha_0 u_{0,x} \frac{\partial}{\partial u_{0,x}} + \cdots \right)(\alpha_1 u_{1,x})= \alpha_{1,u_0} u_{1,x},\\
Y_0(a_0+2a_1) = Y_0(a_0) + 2 Y_0(a_1) = \big(\alpha_{0,u_0} + \alpha^2_0\big)u_{0,x} + 2 \alpha_{1,u_0} u_{1,x},\\
Y_0 Y_0 (a_1) = \left( \frac{\partial}{\partial u_0} + \alpha_0 u_{0,x} \frac{\partial}{\partial u_{0,x}} + \cdots \right)(\alpha_{1,u_0} u_{1,x})= \alpha_{1,u_0u_0} u_{1,x},\\
Y_1(a_0) = \left( \frac{\partial}{\partial u_1} + \alpha_1 u_{1,x} \frac{\partial}{\partial u_{1,x}} + \cdots \right) (\alpha_0 u_{0,x}) = \alpha_{0,u_1} u_{0,x},\\
Y_0 Y_1(a_0) = \left( \frac{\partial}{\partial u_0} + \alpha_0 u_{0,x} \frac{\partial}{\partial u_{0,x}} + \cdots \right) (\alpha_{0,u_1} u_{0,x}) = \big( \alpha_{0,u_0u_1} + \alpha_0 \alpha_{0,u_1} \big) u_x,\\
Y_1 Y_0 (a_0) = \left( \frac{\partial}{\partial u_1} + \alpha_1 u_{1,x} \frac{\partial}{\partial u_{1,x}} + \cdots \right) \big( \big(\alpha_{0,u_0} + \alpha^2_0 \big)u_{0,x} \big)
 = \big( \alpha_{0,u_0u_1} + 2 \alpha_0 \alpha_{0,u_1} \big)u_{0,x}
\end{gather*}
as follows
\begin{gather} \label{eq4_8}
\alpha_{1,u_0 u_0} =\lambda \alpha_{1,u_0}, \qquad \alpha_{0,u_0u_1} = \lambda \alpha_{0,u_1}.
\end{gather}
We substitute \eqref{eq4_6}, \eqref{eq4_7} into \eqref{eq4_8} and f\/ind that $c(u_0) = \frac{1} {2} \frac{Q''(u_0)}{Q'(u_0)}$. So we f\/ind that functions \eqref{eq4_6}, \eqref{eq4_7} are given by
\begin{gather}
 \lambda_{(R)} = \lambda(u_0,u_1) = -2 \frac{Q'(u_0)}{P(u_1) + Q(u_0)} + \frac{Q''(u_0)}{Q'(u_0)},\label{eq4_9}\\
 \alpha(u_1,u_0,u_{-1}) = \frac{P'(u_0)}{P(u_0)+Q(u_{-1} )} + \frac{Q'(u_0)}{P(u_1)+Q(u_0)} - \frac{1} {2} \frac{P''(u_0)}{P'(u_0)} - \frac{1} {2} \frac{Q''(u_0)}{Q'(u_0)}.\label{eq4_10}
\end{gather}
Substituting \eqref{eq4_9}, \eqref{eq4_10} into \eqref{eq4_5_0} we obtain that
the functions $P$, $Q$ must satisfy the equality
\begin{gather*}
-3Q''^2P'^2 - 2P'''P'Q'^2 + 3P''^2Q'^2 + 2 P'^2Q'''Q' = 0.
\end{gather*}

Thus we have proved that if the expansion \eqref{eq3_10} holds then it should be of the form
\begin{gather*} 
W_2 = \lambda_{(R)} W_1.
\end{gather*}
Or the same
\begin{gather*} 
[Y_0, [Y_1,Y_0] ] = \lambda_{(R)} [Y_1,Y_0].\tag*{\qed}
\end{gather*}\renewcommand{\qed}{}
\end{proof}

Let us def\/ine a sequence of the operators in $R(y,N)$ due to the following recurrent formula
\begin{gather*}
Y_0, \quad Y_1, \quad \tilde{W}_1 = [Y_1, Y_0], \quad \tilde{W}_2 = [Y_1, W_1],\quad \ldots,\quad \tilde{W}_{k+1} = [Y_1, \tilde{W}_k], \quad\ldots.
\end{gather*}
It slightly dif\/fers from \eqref{eq3_5} and can be studied in a similar way. We can easily check that the conditions \eqref{Theor-I20}, \eqref{Theor-I21} provide the representation
\begin{gather*}
\tilde{W}_2 = \lambda_{(L)} \tilde{W}_1.
\end{gather*}
Or the same
\begin{gather} \label{H-I_Y1Y10}
[Y_1, [Y_1, Y_0] ] = \lambda_{(L)} [Y_1, Y_0]
\end{gather}
with the coef\/f\/icient
\begin{gather*}
\lambda_{(L)} = -\frac{2 P'(u_1)}{P(u_1) + Q(u_0)} + \frac{P''(u_1)}{P'(u_1)}.
\end{gather*}

Let us consider expansion \eqref{eq3_25} setting $k=2$,
\begin{gather} \label{eq4_13}
Z_8 = \lambda Z_7 + \mu Z_6 + \nu Z_5 + \rho Z_4 + \kappa Z_3 + \sigma Z_2 + \delta Z_1 + \eta Z_0.
\end{gather}

\begin{Theorem} \label{theorem4}
 Expansions \eqref{eq4_2}, \eqref{eq4_13} hold if and only if the function $\alpha$ in \eqref{eq1} is of one of the forms
\begin{gather}
\alpha_0=\alpha(u_1,u_0,u_{-1}) = \frac{P'(u_0)}{P(u_0) + c_1 P(u_{-1}) + c_2} + \frac{c_1 P'(u_0)}{P(u_1) + c_1 P(u_0) + c_2} - \frac{P''(u_0)}{P'(u_0)}, \label{alpha1}\\
\alpha_0=\alpha(u_1,u_0,u_{-1}) = \frac{c_3 r(u_{-1}) r'(u_0)}{c_3 r(u_0) r(u_{-1}) + c_4 r(u_{-1}) - c_1 + c_2 r(u_{-1} )} \nonumber\\
\hphantom{\alpha_0=}{} + \frac{c_1 r'(u_0)}{r(u_0) \bigl( c_3 r(u_1) r(u_0) + c_4 r(u_0) - c_1 + c_2 r(u_0)\bigr)} - \frac{r''(u_0)r(u_0) - r'^2(u_0)}{r(u_0)r'(u_0)},\label{alpha2}
\end{gather}
where $P(u_0)$ and $r(u_0)$ are arbitrary smooth functions, $c_1 \neq 0$, $c_3 \neq 0$, $c_2$, and $c_4$ are arbitrary constants.
\end{Theorem}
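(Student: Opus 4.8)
The plan is to analyze the expansion \eqref{eq4_13} in the same spirit as the proof of Theorem~\ref{theorem3}, but now with the second test sequence \eqref{eq3_21} and the value $k=2$ already fixed by Theorem~\ref{theorem2}. First I would record that by Theorem~\ref{theorem3} and \eqref{H-I_Y1Y10} the function $\alpha$ is already pinned down to the form \eqref{Theor-I20}--\eqref{Theor-I21}, so that the new constraints coming from \eqref{eq4_13} must be read as further differential equations on the pair $(P,Q)$. I would apply $\mathrm{ad}_{D_x}$ to both sides of \eqref{eq4_13}, using Lemma~\ref{lemma4} with $m=2,3$ to expand $[D_x,Z_8]$ and $[D_x,Z_j]$ for $j=0,\dots,7$, and then collect coefficients before the linearly independent operators $Z_0,\dots,Z_7$. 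By Theorem~\ref{theorem2} the coefficient $\nu$ before $Z_5$ satisfies \eqref{eq3_26} with $k=2$, which with the explicit substitutions gives $\nu$ as a rational expression in $P,Q$ and their derivatives of the shape already computed in Section~\ref{section3.2}; similarly Lemma~\ref{lemma5} forces the coefficients $\kappa$ and $\rho$ before $Z_3$, $Z_4$ to be controlled. So the bulk of the work is: (i) write down each of the eight equations for $\lambda,\mu,\nu,\rho,\kappa,\sigma,\delta,\eta$; (ii) solve the triangular subsystem for $\nu,\mu,\lambda$ by successive integration exactly as in the derivation of \eqref{eq3_37}; (iii) substitute back into the remaining equations to get a closed system of ODEs on $P$ and $Q$.

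Next I would combine this new system with the constraint \eqref{Theor-I21} already obtained. The key reduction is that \eqref{Theor-I21} can be rewritten, after dividing by $P'^2Q'^2$, as
\begin{gather*}
\left(\frac{Q''}{Q'}\right)' - \left(\frac{P''}{P'}\right)' = \frac{3}{2}\left(\frac{Q''}{Q'}\right)^2 - \frac{3}{2}\left(\frac{P''}{P'}\right)^2 \cdot\frac{Q'^2}{P'^2}\cdot(\cdots),
\end{gather*}
\emph{i.e.}\ it expresses a mixed relation between the Schwarzian-type quantities of $P$ and $Q$; together with the extra equations from \eqref{eq4_13} one shows that $P$ and $Q$ cannot be independent functions of their respective arguments but must be linked by $Q = c_1 P + c_2$ up to reparametrization, or by the reciprocal-type substitution $P \mapsto$ a Möbius function of $r$ that produces the second branch. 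Concretely I expect the analysis to split into the two cases according to whether a certain discriminant (a polynomial in the constants produced while integrating) vanishes, giving respectively the "polynomial" family \eqref{alpha1} and the family \eqref{alpha2} built from a single function $r$. Carrying out the integrations and substituting $c(u_0)=\tfrac12 Q''/Q'$ from \eqref{eq4_8} is routine once the branching is identified.

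For the converse direction I would simply verify, for each of \eqref{alpha1} and \eqref{alpha2}, that the relations \eqref{Theor-I20}--\eqref{Theor-I21} hold (so that \eqref{eq4_2} is valid by Theorem~\ref{theorem3}) and that \eqref{eq4_13} holds with the explicitly computed coefficients; this is a finite check since all the $Z_j$ and their $D_x$-brackets are known from Lemma~\ref{lemma4}. In practice one exhibits $\lambda,\mu,\nu,\rho,\kappa$ explicitly as rational functions of $P$ (resp.\ $r$) and observes that $\sigma=\delta=\eta=0$ for the same parity reason used in Theorem~\ref{theorem3}: only $Z_0,Z_1,Z_2$ among the $Z_j$ carry the pure derivations $\partial_{u_0},\partial_{u_1},\partial_{u_2}$, and $Z_8$ carries none of them.

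The main obstacle I anticipate is step (iii): after eliminating $\nu,\mu,\lambda$, the residual conditions form a heavily overdetermined nonlinear ODE system in $P,Q$ whose general solution is not obvious, and one must argue carefully — keeping track of the non-degeneracy hypotheses $P'\neq0$, $Q'\neq0$, $c_1\neq0$ — that \emph{every} solution compatible with \eqref{Theor-I21} is, up to a point transformation $v=p(u)$, one of the two listed normal forms, with no spurious third branch. Organizing that case distinction cleanly, rather than the individual integrations, is where the real effort lies.
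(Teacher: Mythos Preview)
Your overall strategy---apply $\mathrm{ad}_{D_x}$ to \eqref{eq4_13}, collect coefficients before $Z_0,\dots,Z_7$, and solve the resulting overdetermined system---is exactly what the paper does. But several specific steps in your plan are off, and one key idea is missing.

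First, there is no ``triangular subsystem for $\nu,\mu,\lambda$'' to solve. Comparing coefficients at $Z_7$ and $Z_6$ gives $D_x(\lambda)=-a_0\lambda$ and $D_x(\mu)=-a_2\mu$ (roughly), which force $\lambda=\mu=0$ outright; the expansion \eqref{eq4_13} collapses to $Z_8=\nu Z_5+\rho Z_4+\kappa Z_3+\sigma Z_2+\delta Z_1+\eta Z_0$ before any real work begins. Also, Lemma~\ref{lemma5} concerns expansions of $Z_{3k}$ or $Z_{3k+1}$, not of $Z_{3k+2}$, so it does not apply to $\rho$ and $\kappa$ here.

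Second, and more importantly, the case split does not come from a discriminant or from massaging \eqref{Theor-I21} into Schwarzian form (your displayed rewriting is in fact incorrect; after dividing by $P'^2Q'^2$ no factor $Q'^2/P'^2$ survives). The actual mechanism is this: the equation for $\nu$ at $Z_5$ is identical in form to the equation \eqref{eq4_5_0} satisfied by $\lambda_{(R)}$, shifted by $D_n$. Hence the difference $\nu-D_n(\lambda_{(R)})$ satisfies the homogeneous equation $(\nu-D_n(\lambda_{(R)}))_{u_1}=-\alpha_1(\nu-D_n(\lambda_{(R)}))$. Substituting the explicit expressions for $\nu$ and $\lambda_{(R)}$ (both already known in terms of $P,Q$) turns this into three conditions on a single auxiliary function $B(u_1)=A(u_1)-Q''(u_1)/Q'(u_1)$, and the dichotomy is simply $B\equiv 0$ versus $B\not\equiv 0$. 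The first gives $Q=c_1P+c_2$ and hence \eqref{alpha1}; the second gives $P=c_3r+c_4$, $Q=-c_1/r+c_2$ and hence \eqref{alpha2}. The constraint \eqref{Theor-I21} is then automatically satisfied (this is the content of Corollary~\ref{corollary2}), so it is a consequence rather than an additional input.

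Finally, $\rho=\kappa=0$ is not a priori; it is verified \emph{after} substituting the specific forms \eqref{alpha1} or \eqref{alpha2} into the equations obtained at $Z_4$ and $Z_3$, and likewise $\sigma=\delta=\eta=0$ follows from the resulting homogeneous first-order equations (your ``parity'' argument for these last three is also valid, but the paper derives them from the $D_x$-equations). So the logical order is: kill $\lambda,\mu$; solve for $\nu$; compare with $D_n(\lambda_{(R)})$ to get the branching on $B$; then check that the remaining coefficients vanish for each branch.
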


\begin{proof}By taking $k=2$ in the statement of Lemma~\ref{lemma4} we get
\begin{gather}
[D_x, Z_6] = -(\alpha_0 u_{0,x}+2 \alpha_1 u_{1,x})Z_6 + \cdots,\label{eq4_16} \\
[ D_x, Z_7] = -(\alpha_2 u_{2,x}+2 \alpha_1 u_{1,x})Z_7 - (Y_1(\alpha_1 u_{1,x})+2Y_1(\alpha_2 u_{2,x}))Z_4 +\cdots,\label{eq4_17} \\
[ D_x, Z_8 ] = - (\alpha_0 u_{0,x} + 2 \alpha_1 u_{1,x} + \alpha_2 u_{2,x})Z_8 + Y_0(\alpha_1 u_{1,x})Z_7 + Y_2(\alpha_1 u_{1,x})Z_6 \nonumber\\
\hphantom{[ D_x, Z_8 ] =}{} - \big( Y_1 (\alpha_1 u_{1,x})+ Y_1 (\alpha_0 u_{0,x} + \alpha_2 u_{2,x})\big) Z_5 + \cdots.\label{eq4_18}
\end{gather}
Now we apply the operator ${\rm ad}_{D_x}$ to both sides of \eqref{eq4_13} and then simplify due to the relations \eqref{eq4_16}, \eqref{eq4_17}, \eqref{eq4_18}. Comparison of the coef\/f\/icients before~$Z_7$ and~$Z_6$ implies $\lambda = 0$ and $\mu =0$. Thus formula \eqref{eq4_13} is simplif\/ied
\begin{gather} \label{eq4_19}
Z_8 = \nu Z_5 + \rho Z_4 + \kappa Z_3 + \sigma Z_2 + \delta Z_1 + \eta Z_0.
\end{gather}
In what follows we will use the following commutativity relations
\begin{gather}
[ D_x, Z_8 ] = -(a_2 + 2a_1 + a_0)Z_8 + Y_0(a_1) Z_7- Y_2(a_1)Z_6
- Y_1(a_2 + a_1 + a_0) Z_5 \nonumber \\
\hphantom{[ D_x, Z_8 ] = }{} + Y_1 Y_0 (a_1) Z_4 - Y_1 Y_2 (a_1) Z_3 + (Y_1 Y_2 Y_0 (a_1) +Z_5(a_1))Z_1,\label{eq4_20} \\
 [ D_x, Z_5 ] = -(a_0 + a_1 + a_2) Z_5 + Y_0(a_1)Z_4 - Y_2(a_1)Z_3 + Y_2 Y_0(a_1) Z_1. \label{eq4_21}
\end{gather}
Let us apply ${\rm ad}_{D_x}$ to \eqref{eq4_19} then simplify by using \eqref{eq4_20}, \eqref{eq4_21}, \eqref{eq4_19} and gather the coef\/f\/i\-cients at~$Z_5$
 \begin{gather*}
 -(a_2 + 2 a_1 + a_0) \nu - Y_1(a_2+a_1 + a_0) = D_x(\nu) - (a_2 + a_1 + a_0)\nu
 \end{gather*}
or the same
\begin{gather} \label{I45}
 D_x(\nu) = -a_1 \nu - Y_1(a_2 + a_1 + a_0).
 \end{gather}
Equation \eqref{I45} implies that $\nu$ depends on three variables $\nu = \nu(u,u_1,u_2)$ and splits down into three equations as follows
 \begin{gather}
 \nu_u = - \alpha_{0,u_1}, \label{I46} \\
 \nu_{u_1} = -\alpha_1 \nu - \alpha_{1, u_1} - \alpha^2_1, \label{I47} \\
 \nu_{u_2} = - \alpha_{2,u_1}. \label{I48}
 \end{gather}
 Substituting $\alpha$ def\/ined by \eqref{eq4_10} into \eqref{I46} and integrating with respect to $u$, we obtain
 \begin{gather} \label{I49}
 \nu = -\frac{P'(u_1)}{P(u_1)+ Q(u_0)} + H(u_1,u_2).
 \end{gather}
From equation \eqref{I48} we f\/ind
\begin{gather} \label{I50}
\nu = -\frac{Q'(u_1)}{P(u_2) + Q(u_1)} + R(u_0,u_1).
\end{gather}
Comparison of \eqref{I49} and \eqref{I50} yields
\begin{gather*}
-\frac{P'(u_1)}{P(u_1)+ Q(u_0)} + H(u_1,u_2) = -\frac{Q'(u_1)}{P(u_2) + Q(u_1)} + R(u,u_1).
\end{gather*}
Due to the fact that variables $u_0$, $u_1$, $u_2$ are independent we obtain
\begin{gather*}
-\frac{P'(u_1)}{P(u_1)+ Q(u_0)} -R(u_0,u_1) = -\frac{Q'(u_1)}{P(u_2) + Q(u_1)} - H(u_1,u_2) = -A(u_1).
\end{gather*}
Hence
\begin{gather*}
H(u_1, u_2) = -\frac{Q'(u_1)}{P(u_2)+Q(u_1)} + A(u_1)
\end{gather*}
and then
\begin{gather} \label{I51}
\nu = -\frac{P'(u_1)}{P(u_1)+Q(u_0)} - \frac{Q'(u_1)}{P(u_2)+Q(u_1)} + A(u_1).
\end{gather}
Note that $\lambda_{(R)}$ def\/ined by \eqref{eq4_9} satisf\/ies the equation \eqref{eq4_5_0}, i.e.,
\begin{gather*}
\lambda_{(R),u}= - \alpha_0 \lambda_{(R)} - \alpha_{0,u_0} -\alpha_0^2.
\end{gather*}
Then
\begin{gather} \label{I52}
\lambda_{(R)1,u_1} = - \alpha_1 \lambda_{(R)1} - \alpha_{1,u_1} -\alpha^2_1,
\end{gather}
where $\lambda_{(R)1} = D_n(\lambda_{(R)})$. Here $D_n$ is a shift operator $D_n u_k = u_{k+1}$. Let us subtract~\eqref{I52} from~\eqref{I47}
\begin{gather*}
\big(\nu - \lambda_{(R)1} \big)_{u_1} = -\alpha_1 \big(\nu - \lambda_{(R)1} \big).
\end{gather*}
Substituting functions \eqref{eq4_9} and \eqref{I51} into the last equation we arrive at the equality
\begin{gather*}
 -\frac{P'(u_1)B(u_1)}{P(u_1)+Q(u_0)} - \frac{Q'(u_1)B(u_1)}{P(u_2)+Q(u_1)} \nonumber\\
\qquad\quad{} +\frac{1} {2} \left( \log Q'(u_1)P'(u_1) \right)' \left( \frac{Q'(u_1)}{P(u_2)+Q(u_1)} - \frac{P'(u_1)}{P(u_1)+Q(u_0)} + B(u_1) \right) \nonumber\\
\qquad{}= \frac{Q''(u_1)}{P(u_2)+Q(u_1)} - \frac{P''(u_1)}{P(u_1)+Q(u_0)} + B'(u_1), 
\end{gather*}
where $B(u_1) = A(u_1) - \frac{Q''(u_1)}{Q'(u_1)}$. This equality is satisf\/ied only if the following conditions hold
\begin{gather}
Q''(u_1) = -Q'(u_1)B(u_1) + \frac{1} {2} Q'(u_1) \big( \log Q'(u_1) P'(u_1) \big)', \label{I54} \\
P''(u_1) = P'(u_1) B(u_1) + \frac{1} {2} P'(u_1) \big( \log Q'(u_1)P'(u_1) \big)', \label{I55} \\
B'(u_1) = \frac{1} {2} B(u_1) \big( \log Q'(u_1)P'(u_1) \big)'. \label{I56}
\end{gather}
The equation \eqref{I56} is satisf\/ied if $ B(u_1) = 0$ or
\begin{gather} \label{I57}
( \log B(u_1))' = \frac{1} {2} ( \log Q'(u_1) P'(u_1))'.
\end{gather}

If $B(u_1) = 0$ then $Q(u_1) = c_1 P(u_1) + c_2$ and{\samepage
\begin{gather}
 \alpha_0 = \frac{P'(u_0)}{P(u_0) + c_1 P(u_{-1}) + c_2} + \frac{c_1 P'(u_0)}{P(u_1) + c_1 P(u_0) + c_2} - \frac{P''(u_0)}{P'(u_0)}, \nonumber \\ 
 \lambda_{(M)}: =\nu = -\frac{P'(u_1)}{P(u_1) + c_1 P(u_0) + c_2} - \frac{c_1 P'(u_1)}{P(u_2) + c_1 P(u_1) + c_2} + \frac{Q''(u_1)}{Q'(u_1)},\label{lambda_M1} \\
 \lambda_{(R)} = -\frac{2 c_1 P'(u_0)}{P(u_1) + c_1 P(u_0) + c_2} + \frac{P''(u_0)}{P'(u_0)}, \label{lambda_R1} \\
 \lambda_{(L)} = -\frac{2 P'(u_1)}{P(u_1) + c_1 P(u_0) + c_2} + \frac{P''(u_1)}{P'(u_1)}. \label{lambda_L1}
\end{gather}
Here $c_1 \neq 0$.}

If $B(u_1) \neq 0$ then from the system of equations \eqref{I54}, \eqref{I55}, and \eqref{I57} we obtain that $Q(u_1) = -\frac{c_1} {r(u_1)} + c_2$, $P(u_1) = c_3 r(u_1) + c_4$ and
\begin{gather}
 \alpha_0 = \frac{c_3 r(u_{-1}) r'(u_0)}{c_3 r(u_0) r(u_{-1}) + c_4 r(u_{-1}) - c_1 + c_2 r(u_{-1} )} \nonumber\\
\hphantom{\alpha_0 =}{} + \frac{c_1 r'(u_0)}{r(u_0) \bigl( c_3 r(u_1) r(u_0) + c_4 r(u_0) - c_1 + c_2 r(u_0)\bigr)} - \frac{r''(u_0)r(u_0) - r'^2(u_0)}{r(u_0)r'(u_0)},\nonumber\\ 
 \lambda_{(M)}:= \nu = -\frac{c_3 r(u_0) r'(u_1)}{c_3 r(u_1) r(u_0) + c_4 r(u_0) - c_1 + c_2 r(u_0)}\nonumber \\
\hphantom{\lambda_{(M)}:= \nu =}{} -\frac{c_1 r'(u_1)}{r(u_1) \bigl( c_3 r(u_2) r(u_1) + c_4 r(u_1) - c_1 + c_2 r(u_1) \bigr)}\nonumber\\
\hphantom{\lambda_{(M)}:= \nu =}{}
+ \frac{r'(u_1)}{r(u_1)} - \frac{2 r'^2(u_1) - r''(u_1) r(u_1)}{r(u_1) r'(u_1)}, \label{lambda_M2}
\\
 \lambda_{(R)} = -\frac{2 c_1 r'(u_0)}{r(u_0)\bigl( c_3 r(u_1) r(u_0) + c_4 r(u_0) - c_1 + c_2 r(u_0) \bigr)}
+\frac{r''(u_0) r(u_0) -2 r'^2(u_0)}{r(u_0) r'(u_0)}, \label{lambda_R2} \\
 \lambda_{(L)} = \frac{-2c_3r(u_0)r'(u_1)}{\bigl( c_3 r(u_1) r(u_0) + c_4 r(u_0) - c_1 + c_2 r(u_0) \bigr)} + \frac{r''(u_1)}{r'(u_1)}. \label{lambda_L2}
\end{gather}

Now let us apply ${\rm ad}_{D_x}$ to \eqref{eq4_19} using \eqref{eq4_20}, \eqref{eq4_21}, \eqref{eq2_8} and the facts that $Z_4=D_n(Z_3)$ and $[Z_1, Z_4 ]=D_n[Z_0, Z_3] =-D_n(W_2)= -D_n(\lambda_{(R)})W_1 = D_n(\lambda_{(R)})Z_4$ and write down coef\/f\/icients before $Z_4$
\begin{gather*}
 -(a_2 + 2a_1 + a_0) \rho + Y_1 Y_0 (a_1) + Y_0 (a_1) D_n(\lambda_{(R)})= \nu Y_0(a_1) + D_x(\rho) -(a_1 + a_2)\rho.
\end{gather*}
Then
\begin{gather} \label{I62}
D_x(\rho) = -(a_1 + a_0) \rho + Y_1 Y_0(a_1) + Y_0(a_1) D_n(\lambda_{(R)})- \nu Y_0(a_1).
\end{gather}
 The equation \eqref{I62} implies that $\rho = \rho(u,u_1,u_2)$ and splits down into three equations as follows
 \begin{gather*}
 \rho_{u_2} = 0, \qquad -\alpha_0 \rho = \rho_{u_0}, \qquad
 - \alpha_1 \rho + \alpha_{1,u_0 u_1} + \alpha_1 \alpha_{1,u_0} + \alpha_{1,u_0} D_n(\lambda_{(R)})- \nu \alpha_{1,u_0} = \rho_{u_1}.
 \end{gather*}
If $\alpha_0$, $\nu$, and $\lambda_{(R)}$ are def\/ined by the formulas \eqref{alpha1}, \eqref{lambda_M1}, and \eqref{lambda_R1} or by the formulas \eqref{alpha2}, \eqref{lambda_M2}, and \eqref{lambda_R2} correspondingly then $\rho = 0$ and the last equations are satisf\/ied.

Now let us apply ${\rm ad}_{D_x}$ to \eqref{eq4_19} using \eqref{eq4_20}, \eqref{eq4_21}, \eqref{eq2_8}, \eqref{H-I_Y1Y10} and write down coef\/f\/icients before $Z_3$
\begin{gather*}
 -(a_2 + 2a_1 + a_0 ) \kappa - Y_1 Y_2(a_1) - Y_2(a_1)\lambda_{(L)} = -\nu Y_2(a_1) + D_x(\kappa) -(a_1 + a_2) \kappa.
\end{gather*}
Then
\begin{gather} \label{I67}
D_x(\kappa) = -(a_2 + a_1)\kappa - Y_1 Y_2(a_1) - Y_2(a_1) \lambda_{(L)} + \nu Y_2(a_1).
\end{gather}
The equations \eqref{I67} implies that $\kappa=\kappa(u_1,u_2)$ and splits down into two equations as follows
\begin{gather*}
\kappa_{u_2} = -\alpha_2 \kappa, \qquad
\kappa_{u_1} = -\alpha_1 \kappa - \alpha_{1,u_2 u_1} - \alpha_1 \alpha_{1, u_2} - \alpha_{1,u_2} \lambda_{(L)} + \nu \alpha_{1,u_2}.
\end{gather*}
If $\alpha_0$, $\nu$, and $\lambda_{(L)}$ are def\/ined by the formulas \eqref{alpha1}, \eqref{lambda_M1}, and \eqref{lambda_L1} or by the formulas \eqref{alpha2}, \eqref{lambda_M2}, and \eqref{lambda_L2} correspondingly then $\kappa = 0$ and the last equations are satisf\/ied.

Apply ${\rm ad}_{D_x}$ to \eqref{eq4_19} taking into account that $\rho = \kappa = 0$ and write down coef\/f\/icients before operators~$Z_2$,~$Z_1$ and~$Z_0$
\begin{gather*}
D_x(\sigma) = -(2a_1+ a_0)\sigma,\\
D_x(\delta) = -(a_2 + a_1 +a_0)\delta - Y_1 Y_2 Y_0(a_1) + \lambda Y_2 Y_0 (a_1),\\
D_x(\eta) = -(a_2 + 2a_1).
\end{gather*}
From these equations we obtain that $\sigma=\delta=\eta = 0$.

Thus we have proved that if the expansions \eqref{eq4_2}, \eqref{eq4_13} hold then \eqref{eq4_13} should be as follows
\begin{gather*}
Z_8 = \lambda_{(M)} Z_5.
\end{gather*}
Or the same
\begin{gather} \label{I_Y1Y210}
[ Y_1, [Y_2,[Y_1,Y_0]] ] = \lambda_{(M)} [Y_2,[Y_1,Y_0]],
\end{gather}
where $\lambda_{M}$ def\/ined by the formula \eqref{lambda_M1} or \eqref{lambda_M2} and $\alpha_0$, $\lambda_{(R)}$, and $\lambda_{(L)}$ are def\/ined by the~for\-mulas \eqref{alpha1}, \eqref{lambda_R1}, and \eqref{lambda_L1} or by the~for\-mulas \eqref{alpha2}, \eqref{lambda_R2}, and \eqref{lambda_L2} correspondingly.
\end{proof}

Corollary of Theorems \ref{theorem3} and \ref{theorem4}:
\begin{Corollary} \label{corollary2} In both cases $Q(u_1) = -\frac{c_1} {r(u_1)} + c_2$, $P(u_1) = c_3 r(u_1) + c_4$ and $Q(u_1) = c_1 P(u_1) + c_2$ the constraint~\eqref{Theor-I21} is satisfied identically.
 \end{Corollary}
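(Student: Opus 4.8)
The plan is to verify the constraint \eqref{Theor-I21},
\[
-3Q''^2P'^2 - 2P'''P'Q'^2 + 3P''^2Q'^2 + 2 P'^2Q'''Q' = 0 ,
\]
by direct substitution for each of the two families of pairs $(P,Q)$ produced in Theorem~\ref{theorem4}; no tool beyond elementary differentiation is required.

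\textbf{The affine case} $Q=c_1P+c_2$. Here $Q'=c_1P'$, $Q''=c_1P''$, $Q'''=c_1P'''$, so each of the four monomials in \eqref{Theor-I21} picks up the common factor $c_1^2$, and the left-hand side equals
\[
c_1^2\bigl(-3P''^2P'^2-2P'''P'^3+3P''^2P'^2+2P'^3P'''\bigr),
\]
in which the first and third terms cancel and the second and fourth terms cancel. Thus \eqref{Theor-I21} holds identically for every smooth $P$.

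\textbf{The rational case} $P=c_3r+c_4$, $Q=-c_1/r+c_2$. I would first record
\[
P'=c_3r',\quad P''=c_3r'',\quad P'''=c_3r''',\quad Q'=\frac{c_1r'}{r^2},\quad Q''=\frac{c_1(r''r-2r'^2)}{r^3},\quad Q'''=\frac{c_1(r'''r^2-6r'r''r+6r'^3)}{r^4},
\]
the formulas for $Q''$ and $Q'''$ being obtained by successive differentiation and simplification. Substituting these into \eqref{Theor-I21} and extracting the common factor $c_1^2c_3^2/r^6$ reduces the constraint to the polynomial identity
\[
-3r'^2\bigl(r''r-2r'^2\bigr)^2-2r'^3r'''r^2+3r'^2r''^2r^2+2r'^3\bigl(r'''r^2-6r'r''r+6r'^3\bigr)=0 .
\]
Expanding $\bigl(r''r-2r'^2\bigr)^2=r''^2r^2-4r''rr'^2+4r'^4$ and collecting terms according to the four monomial types $r^2r'^2r''^2$, $rr'^4r''$, $r'^6$ and $r^2r'^3r'''$, one sees that their coefficients are $-3+3$, $12-12$, $-12+12$ and $-2+2$ respectively, so the left-hand side vanishes identically in $r$. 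This proves the corollary.

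The computation is entirely routine; the only point that requires a little care is the correct evaluation of $Q'''$ in the rational case, after which the cancellations are automatic. I anticipate no genuine obstacle.
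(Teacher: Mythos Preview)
Your proof is correct. The derivatives of $Q$ in the rational case are computed accurately, and the subsequent cancellations in both cases are exactly as you describe.

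The paper itself offers no explicit argument: Corollary~\ref{corollary2} is simply labelled a ``Corollary of Theorems~\ref{theorem3} and~\ref{theorem4}''. One can read this either as an invitation to do precisely the direct substitution you carried out, or as a purely logical inference: Theorem~\ref{theorem4} asserts that for each of the two families the expansion~\eqref{eq4_2} holds, and Theorem~\ref{theorem3} asserts that~\eqref{eq4_2} holds only if $\alpha$ has the form~\eqref{Theor-I20} together with the constraint~\eqref{Theor-I21}; hence the constraint must be satisfied. Your explicit verification has the advantage of being self-contained and of confirming independently that no algebraic slip occurred in the derivation of the two families.
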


In a similar way we check that the same conditions \eqref{alpha1}, \eqref{alpha2} provides the representations
\begin{gather*}
[Y_0, [Y_2,[Y_1,Y_0]] t] = \lambda_{(R)} [Y_2,[Y_1,Y_0]],\\
[Y_2,[Y_2,[Y_1,Y_0]] ] = D_n(\lambda_{(L)}[Y_2,[Y_1,Y_0]].
\end{gather*}

\section{Comments on the classif\/ication result}\label{section5}

In this section we brief\/ly discuss the statements of Theorems~\ref{theorem4} and~\ref{theorem5} (see below) claiming that the lattice \eqref{eq1} is integrable in the sense of Def\/inition~\ref{definition1} only for two choices of the function $\alpha$ given by \eqref{alpha1} and \eqref{alpha2}. In both cases the lattice has a functional freedom which is removed by an appropriate point transformation. Therefore we have

\begin{Theorem} \label{theorem6}
 Any lattice \eqref{eq1} integrable in the sense above is reduced by the point transformation $v=p(u)$ to the following lattice
\begin{gather}
 v_{n,xy}=v_{n,x}v_{n,y}\left(\frac{1} {v_n-v_{n-1} }-\frac{1} {v_{n+1} -v_{n}}\right). \label{alphalast1}
 \end{gather}
\end{Theorem}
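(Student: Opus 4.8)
The plan is to run the two normal forms of $\alpha$ produced by Theorem~\ref{theorem4} through a change of the field and verify that each lands on \eqref{alphalast1}. First I would record how $\alpha$ behaves under $v_n=p(u_n)$: differentiating twice and substituting \eqref{eq1} gives $v_{n,xy}=\bigl(p''(u_n)+p'(u_n)\alpha_n\bigr)u_{n,x}u_{n,y}=\tilde\alpha_n v_{n,x}v_{n,y}$ with $\tilde\alpha_n=\alpha_n/p'(u_n)+p''(u_n)/p'(u_n)^2$, i.e.\ the transformation divides $\alpha_n$ by $p'(u_n)$ and adds the ``total-derivative'' term $\bigl(\log p'(u_n)\bigr)'/p'(u_n)$. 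Written in the general form of Theorem~\ref{theorem3}, $\alpha=\frac{P'(u_n)}{P(u_n)+Q(u_{n-1})}+\frac{Q'(u_n)}{P(u_{n+1})+Q(u_n)}-\frac12\bigl(\log P'(u_n)Q'(u_n)\bigr)'$, this rule amounts to replacing the pair $(P,Q)$ by $(P\circ p^{-1},Q\circ p^{-1})$ while leaving the shape of the last term intact; in particular \eqref{alphalast1} is exactly the general form with data $(P,Q)=(\mathrm{id},-\mathrm{id})$, so the goal becomes to carry the admissible $(P,Q)$ onto a linear pair with opposite slopes.

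Then I would dispose of the two alternatives of Theorem~\ref{theorem4} separately. For \eqref{alpha1} we have $Q=c_1P+c_2$, $c_1\ne0$; taking $v_n$ to be an affine function of $P(u_n)$ makes the $-P''/P'$ piece cancel $p''/p'^2$ and produces $\tilde\alpha_n=\frac{1}{v_n+c_1v_{n-1}}+\frac{c_1}{v_{n+1}+c_1v_n}$, after which a rescaling $v_n\mapsto\lambda^n v_n$ normalizes $c_1$ to $-1$ and delivers \eqref{alphalast1} (when $c_1=-1$ one uses instead a shift $v_n\mapsto v_n+cn$ to remove the leftover additive constant). For \eqref{alpha2} the proof of Theorem~\ref{theorem4} gives $P=c_3r+c_4$, $Q=-c_1/r+c_2$; here I would take $v_n$ built from $r(u_n)$ (an affine function of $r$, composed if needed with the fractional-linear map that makes $Q$ linear), again cancel the $\frac12(\log P'Q')'$ term, and finish with an $n$-dependent rescaling/shift to kill the remaining constants, once more reaching \eqref{alphalast1}. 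The differential constraint \eqref{Theor-I21} is what guarantees that the residual terms collect into the single combination $\frac{1}{v_n-v_{n-1}}-\frac{1}{v_{n+1}-v_n}$ rather than something more complicated, and by Corollary~\ref{corollary2} it is satisfied automatically on both families, so no side computation is needed there.

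The main obstacle is the bookkeeping around the parameters $c_1,\dots,c_4$: one has to check that each of them is genuinely removable and that what survives the substitution is literally \eqref{alphalast1}, not a lattice of the same general form still carrying moduli. The delicate point in case \eqref{alpha1} is the ``slope'' $c_1$, which is an invariant of strict point transformations $v=p(u)$ and is eliminated only by admitting the discrete rescaling $v_n\mapsto\lambda^n v_n$; likewise the additive constant in the $c_1=-1$ sub-case is removed by the discrete shift $v_n\mapsto v_n+cn$. In case \eqref{alpha2} one must verify that the four constants conspire (via the fractional-linear change together with such a discrete rescaling) so that the relation between $P$ and $Q$ degenerates to $Q=-P$. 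Once these reductions are justified, identification with \eqref{alphalast1} is immediate.
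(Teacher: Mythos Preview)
Your proposal is correct and follows essentially the same route as the paper: the paper first applies the genuine point transformation $w=P(u)$ (respectively $v=r(u)$) to strip off the logarithmic term and obtain the intermediate form $w_{n,xy}=w_{n,x}w_{n,y}\bigl(\tfrac{1}{w_n+c_1w_{n-1}+c_2}+\tfrac{c_1}{w_{n+1}+c_1w_n+c_2}\bigr)$, then removes the remaining constants by the $n$-dependent affine map $v_n=(-c_1)^n w_n-\tfrac{c_2}{1+c_1}$ (or $v_n=w_n-c_2 n$ when $c_1=-1$); for \eqref{alpha2} the paper inserts the extra step $v=\beta(1/w+c)$ to reduce it to the first case, exactly the fractional-linear composition you describe. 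Your observation that the final rescaling/shift is not literally of the form $v=p(u)$ but rather $n$-dependent is accurate and is precisely what the paper does without flagging it; otherwise the two arguments coincide.
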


Specify the point transformations\footnote{We are glad to acknowledge that these transformations are found by R.I.~Yamilov and R.N.~Garifullin (private communication).} applied to the lattices. Change of the variables $w=P(u)$ reduces \eqref{alpha1} to
\begin{gather} \label{eq_w}
w_{n,xy} = w_{n,x} w_{n,y} \left( \frac{1} {w_n+c_1 w_{n-1} + c_2} + \frac{c_1} {w_{n+1} + c_1 w_n + c_2} \right).
\end{gather}
The latter is connected with \eqref{alphalast1} by the change of the variables $v_n = (-c_1)^nw_n - \frac{c_2} {1+c_1} $ if $c_1 \neq -1$ and by $v_n = w_n - c_2 n $ in the special case $c_1 = -1$.

Change of the variables $v = r(u)$ reduces \eqref{alpha2} to
\begin{gather} \label{temp_eq1}
v_{n,xy} = v_{n,x} v_{n,y} \left( \frac{v_{n-1}}{v_n v_{n-1} + \beta v_{n-1} - \gamma} + \frac{v_{n+1} + \beta}{v_n v_{n+1} + \beta v_n - \gamma} \right),
\end{gather}
where we denote $\beta = \frac{c_2 + c_4}{c_3}$, $\gamma = c_1 / c_3$. Then change of the variables $v = \beta \big( \frac{1}{w} + c\big)$, $\gamma = \beta^2 (c^2 + c)$ reduces \eqref{temp_eq1} to
\begin{gather*}
w_{n,xy} = \frac{1}{w_n + \frac{c}{c+1} w_{n-1}+\frac{1}{c+1}}+\frac{\frac{c}{c+1}}{w_{n+1} + \frac{c}{c+1} w_n + \frac{1}{c+1}}.
\end{gather*}
The latter coincides with \eqref{eq_w} if $c_1 = \frac{c}{c+1}$, $c_2 = \frac{1}{c+1}$.

Note that equation \eqref{alphalast1} coincides with the Ferapontov--Shabat--Yamilov equation found in~\cite{ShY} and~\cite{Fer-TMF}.

\begin{Theorem} \label{theorem5}
The characteristic Lie rings in $x$- and $y$-directions for the following system of hyperbolic type equations
\begin{gather}
v_{-1} =c_0, \nonumber\\
v_{n,xy}=v_{n,x}v_{n,y}\left(\frac{1} {v_n-v_{n-1} }-\frac{1} {v_{n+1} -v_{n}}\right), \qquad 0\leq n\leq N,\label{eq31} \\
v_{N+1} =c_1, \nonumber
\end{gather}
are of finite dimension.
\end{Theorem}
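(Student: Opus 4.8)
The plan is to prove finite-dimensionality of both characteristic Lie rings $R(x,N)$ and $R(y,N)$ for the system \eqref{eq31} by exhibiting an explicit finite basis together with a complete table of commutators $[D_x,\cdot]$ (respectively $[D_y,\cdot]$) closing on that basis. By the symmetry $x\leftrightarrow y$ of \eqref{eq31} it suffices to treat the $y$-ring; the $x$-ring is handled identically. Recall from \eqref{eq2_6} that $R(y,N)$ is generated by the characteristic fields $Y_0,\dots,Y_N$, and that $\mathrm{ad}_{D_x}$ maps $R(y,N)$ to itself with $[D_x,Y_i]=-\alpha_i u_{i,x}Y_i$, where for \eqref{eq31} one has $\alpha_n=\tfrac{1}{v_n-v_{n-1}}-\tfrac{1}{v_{n+1}-v_n}$ together with the boundary values $v_{-1}=c_0$, $v_{N+1}=c_1$.

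The key structural input is Theorem~\ref{theorem3} and Theorem~\ref{theorem4}: since the Ferapontov--Shabat--Yamilov lattice is the special case $P(u)=u$, $Q(u)=-u$ (so $c_1=-1$ in \eqref{alpha1}, up to the normalizations of Section~\ref{section5}), the relations \eqref{H-I_Y1Y10}, \eqref{I_Y1Y210} and the two further relations displayed at the end of Section~\ref{section4} hold, namely $[Y_i,[Y_i,Y_{i-1}]]=\lambda^{(i)}_{(L)}[Y_i,Y_{i-1}]$, $[Y_{i-1},[Y_i,Y_{i-1}]]=\lambda^{(i)}_{(R)}[Y_i,Y_{i-1}]$, and $[Y_{i+1},[Y_i,[Y_{i-1},Y_{i-2}]]]$-type collapses. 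These say that every ``long'' commutator can be rewritten, modulo lower-order terms, as a coefficient times a short one. So first I would list the candidate basis: the generators $Y_0,\dots,Y_N$, the nearest-neighbor brackets $W_{n}:=[Y_{n+1},Y_n]$ for $0\le n\le N-1$ (brackets of non-adjacent $Y_i,Y_j$ vanish because their coefficients involve disjoint sets of dynamical variables and $[\partial/\partial v_i,\cdot]$ considerations as in Lemma~\ref{lemma2}), and possibly the ``triple'' brackets $[Y_{n+1},W_{n-1}]=[Y_{n+1},[Y_n,Y_{n-1}]]$ for the interior indices. Then I would show by an induction on bracket length, using the Jacobi identity and the collapse relations above, that every iterated commutator of the $Y_i$ lies in the span (over functions of finitely many dynamical variables) of this finite list; simultaneously I would verify conditions (1)--(2) of the finite-dimensionality definition, the linear independence part being exactly the argument pattern of Lemmas~\ref{lemma2}, \ref{lemma3} (distinct leading $\partial/\partial v_i$ or $\partial/\partial v_{i,x}$ terms, then apply $\mathrm{ad}_{D_x}$ and use that $\alpha_n$ depends essentially on $v_{n\pm1}$).

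The main obstacle I anticipate is bookkeeping the ``tails'': the collapse relations in Section~\ref{section4} are stated modulo ``$+\cdots$'', so to get an honest closed finite basis one must track those lower-order terms explicitly and check they too close up — in particular one must confirm that no genuinely new operator is produced at bracket length four or five (e.g.\ that $[Y_{n+1},[Y_{n+1},[Y_n,Y_{n-1}]]]$ and the mixed brackets $[Y_{n+2},[Y_{n+1},[Y_n,Y_{n-1}]]]$ reduce), and that the reduction coefficients genuinely depend only on finitely many of the $v_i$. Here the boundary conditions $v_{-1}=c_0$, $v_{N+1}=c_1$ actually help: they truncate the chain of indices, so the collection of relevant brackets is finite from the outset, and the coefficients $\lambda_{(L)},\lambda_{(R)},\lambda_{(M)}$ specialize (with $P=u$, $Q=-u$) to rational functions in $v_{n-1},v_n,v_{n+1}$ only. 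I would organize the verification as a finite check indexed by $n=0,\dots,N$, invoking Lemma~\ref{lemma1} to kill any operator with no $\partial/\partial v_i$-part and zero $\mathrm{ad}_{D_x}$-image. Because the computation is long but entirely mechanical, I would relegate the full commutator table to Appendix~\ref{appendixA} and in the main text only state the finite basis and the closure relations, exactly as the paper does.
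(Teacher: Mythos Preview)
Your overall strategy matches the paper's --- exhibit a finite basis, compute $\mathrm{ad}_{D_x}$ on it, and use Lemma~\ref{lemma1} to close --- but your candidate basis is too small, and this is a genuine gap rather than bookkeeping. You propose the generators $Y_i$, the nearest-neighbour brackets $[Y_{n+1},Y_n]$, and ``possibly'' the triples $[Y_{n+1},[Y_n,Y_{n-1}]]$, and you then expect to confirm that ``no genuinely new operator is produced at bracket length four or five'', in particular that $[Y_{n+2},[Y_{n+1},[Y_n,Y_{n-1}]]]$ reduces. It does not. The paper's basis \eqref{I766} consists of \emph{all} the staircase commutators
\[
\{Y_i\}_{i=0}^{N},\quad \{Y_{i+1,i}\}_{i=0}^{N-1},\quad \{Y_{i+2,i+1,i}\}_{i=0}^{N-2},\quad\ldots,\quad Y_{N,N-1,\ldots,0},
\]
one for every consecutive interval of indices, so $\tfrac{(N+1)(N+2)}{2}$ elements in all. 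The operator $Y_{n+2,n+1,n,n-1}=[Y_{n+2},Y_{n+1,n,n-1}]$ is a new, linearly independent basis element, not a combination of shorter ones; likewise for every longer staircase.

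What \emph{does} collapse --- and this is the content of the paper's Theorem~\ref{theorem8} --- is the bracket of any single generator $Y_k$ with a staircase $Y_{m+1,m,\ldots,0}$: one gets a scalar multiple of that \emph{same} staircase, namely $\lambda_{(R)}$, $D_n^{k-1}(\lambda_{(M)})$, or $D_n^m(\lambda_{(L)})$ according as $k=0$, $1\le k\le m$, or $k=m+1$. Together with $[Y_i,Y_j]=0$ for $|i-j|>1$, this is exactly what is needed: any iterated bracket either extends a staircase (producing a longer basis element) or hits it with a $Y_k$ already inside its index range (producing a multiple of the same element). The induction in the appendix (Theorem~\ref{theorem7} for the $\mathrm{ad}_{D_x}$ formulas, then Theorem~\ref{theorem8} for the collapse) runs on the length of the staircase, not on bracket length in general, and the ``tails'' you worry about are in fact absent: the relations \eqref{s2_eq12}--\eqref{s2_eq14} are exact equalities, with no lower-order remainder. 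So your plan is salvageable, but only after you enlarge the basis to include all staircases and redirect the induction accordingly.
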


The proof of Theorem~\ref{theorem5} can be found in Appendix~\ref{appendixA}.

\begin{Corollary}The system \eqref{eq31} is Darboux integrable.
\end{Corollary}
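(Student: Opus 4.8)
\textbf{Proof proposal for Theorem~\ref{theorem5}.}

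The plan is to establish finite-dimensionality of $R(y,N)$ explicitly by exhibiting a finite spanning set of vector fields together with the commutation relations among them, and then to invoke the symmetry $x\leftrightarrow y$ together with the obvious invariance of the system \eqref{eq31} under the exchange of the two characteristic directions to conclude the same for $R(x,N)$. The starting point is the collection of characteristic fields $Y_0,Y_1,\dots,Y_N$ of the form \eqref{eq2_5}, and the key structural input is already assembled in Section~\ref{section4}: for the Ferapontov--Shabat--Yamilov right-hand side $\alpha_n=\frac{1}{v_n-v_{n-1}}-\frac{1}{v_{n+1}-v_n}$ one has, by Theorems~\ref{theorem3} and~\ref{theorem4} (the case $B(u_1)=0$, $c_1=-1$, $c_2=0$, $P(u)=u$, cf.\ \eqref{eq_w}), the closure relations
\begin{gather*}
[Y_0,[Y_1,Y_0]]=\lambda_{(R)}[Y_1,Y_0],\qquad
[Y_1,[Y_1,Y_0]]=\lambda_{(L)}[Y_1,Y_0],\\
[Y_1,[Y_2,[Y_1,Y_0]]]=\lambda_{(M)}[Y_2,[Y_1,Y_0]],\qquad
[Y_0,[Y_2,[Y_1,Y_0]]]=\lambda_{(R)}[Y_2,[Y_1,Y_0]],\\
[Y_2,[Y_2,[Y_1,Y_0]]]=D_n\bigl(\lambda_{(L)}\bigr)[Y_2,[Y_1,Y_0]].
\end{gather*}
Because the lattice is $n$-shift invariant, all these identities hold with indices shifted, so the relevant ``seed'' commutators are $Y_i$ ($0\le i\le N$), $W_i:=[Y_i,Y_{i-1}]$ ($1\le i\le N$), and $V_i:=[Y_{i+1},[Y_i,Y_{i-1}]]$ ($1\le i\le N-1$), and one must also keep track of the boundary fields coming from $v_{-1}=c_0$ and $v_{N+1}=c_1$ where a term of $\alpha$ drops out.

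First I would show that $[Y_i,Y_j]=0$ whenever $|i-j|\ge 2$, which is immediate since $Y_i$ only involves $u_i$-derivatives and $\alpha_i$ depends only on $u_{i-1},u_i,u_{i+1}$. Next, using the five boxed relations above together with the Jacobi identity, I would prove by a finite induction on the ``length'' of a commutator that every multiple commutator of $Y_0,\dots,Y_N$ is a $C^\infty$-linear combination of the finite set $\{Y_i\}\cup\{W_i\}\cup\{V_i\}$; the crucial point is that each of the five relations lets one rewrite a nested commutator of one higher length in terms of strictly lower-length ones times coefficient functions, so no new basis elements are generated. I would then verify linear independence of this set by the standard ``leading derivative'' argument used already in Lemmas~\ref{lemma2}, \ref{lemma3}: $Y_i$ carries the unique term $\partial/\partial u_i$, $W_i$ carries a term in $\partial/\partial u_{i,x}$ not present in the $Y$'s, and $V_i$ carries a term in $\partial/\partial u_{i,xx}$ absent from the rest, while Lemma~\ref{lemma1} kills the ambiguity coming from fields with no $\partial/\partial u_k$ part. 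This gives $\dim R(y,N)\le 3N$, hence finite.

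The hard part will be the bookkeeping at the two cut-off ends: at $n=0$ the field $Y_0$ is governed by $\alpha_0=\frac{1}{v_0-c_0}-\frac{1}{v_1-v_0}$ (one summand degenerate to a constant), and at $n=N$ similarly, so the closure relations involving $Y_0$ and $Y_N$ must be re-derived in these degenerate cases rather than quoted verbatim — one should check that the special values of $P,Q$ still satisfy \eqref{Theor-I21} (indeed, by Corollary~\ref{corollary2} it holds identically for $Q=c_1P+c_2$) and that $\lambda_{(R)},\lambda_{(L)},\lambda_{(M)}$ remain well-defined finite functions of the dynamical variables, which they do since the denominators $v_n-v_{n-1}$, $v_{n+1}-v_n$ are nonvanishing on the relevant domain. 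Once the end relations are confirmed, the inductive reduction argument goes through uniformly for $0\le n\le N$, and the identical structure in the $x$-direction follows by the symmetry $(x,y)\mapsto(y,x)$, completing the proof that both $R(x,N)$ and $R(y,N)$ are finite-dimensional.
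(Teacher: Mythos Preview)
Your proposed spanning set $\{Y_i\}\cup\{W_i\}\cup\{V_i\}$ is too small, and the inductive reduction you describe does not close. The five relations you quote (and their $n$-shifts) only tell you what $[Y_j,V_i]$ is for $j\in\{i-1,i,i+1\}$; they say nothing about $[Y_{i+2},V_i]=[Y_{i+2},[Y_{i+1},[Y_i,Y_{i-1}]]]$. This commutator is \emph{not} zero (although $[Y_{i+2},Y_i]=[Y_{i+2},Y_{i-1}]=0$, the top generator $Y_{i+1}$ is adjacent to $Y_{i+2}$), and it is not a combination of your $Y$'s, $W$'s and $V$'s: it is the new length-four chain $Y_{i+2,i+1,i,i-1}$. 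Iterating, one obtains genuinely new elements of every length up to $N+1$, so your bound $\dim R(y,N)\le 3N$ is false already for $N=3$.

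The paper's proof (Appendix~\ref{appendixA}) builds exactly this missing tower: the basis \eqref{I766} consists of \emph{all} descending chains $Y_{j,j-1,\dots,i}$ for $0\le i\le j\le N$, giving dimension $\binom{N+2}{2}$. The inductive step (Theorems~\ref{theorem7} and~\ref{theorem8}) shows that for every $m$ and every $k\le m+1$ the bracket $[Y_k,Y_{m+1,m,\dots,0}]$ is a scalar multiple of $Y_{m+1,m,\dots,0}$ itself, with coefficients $\lambda_{(R)},\,D_n^{k-1}(\lambda_{(M)}),\,D_n^{m}(\lambda_{(L)})$; this is what closes the ring, and it must be proved for all chain lengths, not just lengths $\le 3$. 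Your base-case relations from Section~\ref{section4} are the starting point of that induction, but they do not by themselves imply the higher-length closures --- those require separate computations via $\mathrm{ad}_{D_x}$ and Lemma~\ref{lemma1}, as carried out in \eqref{s2_eq1}--\eqref{s2_eq4} and the argument following \eqref{s2_eq18}.
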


\begin{Remark}The following lattice (see \cite{ShY})
\begin{gather*}
q_{n,xy} = q_{n,x} q_{n,y} \bigl( f(q_{n+1}-q_n) - f(q_n - q_{n-1}) \bigr),\nonumber\\
f' = f^2 - b^2 
\end{gather*}
is reduced by the point transformation to \eqref{alphalast1}. Namely if $b \neq 0$ then $f(q) = b\tan (b(q+c)) = - ib \tanh(ib(q+c))$, where $c$ is the constant of integration, $i$ is the imaginary unit. So we have the lattice
\begin{gather*}
q_{n,xy} = q_{n,x} q_{n,y} (-ib)\bigl( \tanh (ib(q_{n+1}-q_n+c)) - \tanh (ib(q_n - q_{n-1}+c)) \bigr).
\end{gather*}
The change of variables $q_n = -\frac{i}{b} v_n - n c$ reduces the last lattice to \eqref{alphalast1}. If $b = 0$ then $f(q) = \frac{1}{q + c}$. By the change of variables $q_n = v_n - nc$ we obtain \eqref{alphalast1}.
\end{Remark}

\section{Conclusion}\label{section6}

In \cite{H2013} it was conjectured that any nonlinear integrable two-dimensional lattice of the form
\begin{gather}\label{end}
u_{n,xy}=g(u_{n+1},u_n,u_{n-1},u_{n,x},u_{n,y})
\end{gather}
admits cut-of\/f conditions reducing the lattice to a f\/inite system of the hyperbolic type partial dif\/ferential equations being integrable in the sense of Darboux when they are imposed at two points $n=N_1$ and $N_2$ chosen arbitrary.

In the present article we discussed the classif\/ication algorithm based on that conjecture. Actually we solved a problem of the complete description of the lattices \eqref{eq1} satisfying the suggested requirement. The lattice \eqref{eq1} is a particular case of the lattice \eqref{end} for which the mentioned cut-of\/f condition is easily found: $u_{N_1} =c_0$, $u_{N_2} =c_1$. This circumstance essentially simplif\/ies the situation. Nevertheless even in general when a priori the cut-of\/f condition is also unknown the algorithm might be ef\/fective since the assumption on the existence of such boundary conditions puts severe restrictions on the characteristic operators.

We show that the class of integrable lattices of the form \eqref{eq1} contains only one model up to the point transformations. This model coincides with the Ferapontov--Shabat--Yamilov equation. The one-dimensional reduction $x=y$ of this lattice satisf\/ies completely also the symmetry integrability conditions (see \cite{YamilovJPA06}).

\appendix

\section{Appendix}\label{appendixA}

The goal of the appendix is to prove Theorem~\ref{theorem5}. Let us introduce a special notation $Y_{i_k,\dots,i_0} $ for the multiple commutators. It is def\/ined consecutively
\begin{gather}\label{multi}
Y_{i_k,\dots,i_0} =[Y_{i_k},Y_{i_{k-1},\dots,i_0} ].
\end{gather}
Number $k$ is called the order of the operator \eqref{multi}.

In order to prove Theorem~\ref{theorem5} we show that the ring $R(y,N)$ is of f\/inite dimension. Actually we construct the basis in $R(y,N)$ containing the operators
\begin{gather} \label{I766}
\{Y_i\}_{i=0} ^{N},\quad \{Y_{i+1,i}\}_{i=0} ^{N-1}, \quad \{Y_{i+2,i+1,i}\}_{i=0} ^{N-2},\quad \ldots, \quad Y_{N,N-1,\ldots,0}.
\end{gather}

\subsection{The base case of the mathematical induction}\label{appendixA1}

In the previous section we have proved that
\begin{gather}
[Y_0, Y_{10} ] = \lambda_{(R)}Y_{10}, \qquad [Y_1, Y_{10} ] = \lambda_{(L)}Y_{10}, \label{eq5_2} \\
[Y_0, Y_{210} ] = \lambda_{(R)} Y_{210},\qquad [Y_1, Y_{210} ] = \lambda_{(M)} Y_{210}, \qquad [Y_2, Y_{210} ] = D_n(\lambda_{(L)}Y_{210}. \label{eq5_3}
\end{gather}
In what follows we will use the following relations which are easily verif\/ied
\begin{gather}
[D_x, Y_{3210} ]= -(a_3 + a_2 + a_1 + a_0) Y_{3210} -Y_3(a_2)Y_{210} + Y_0(a_1)Y_{321}, \label{I73} \\
[ D_x, [ Y_0, Y_{3210} ] ] = -(a_3 + a_2 + a_1 + 2a_0) [ Y_0, Y_{3210}] \nonumber\\
\hphantom{[ D_x, [ Y_0, Y_{3210} ] ] =}{} - Y_0(2a_1 + a_0) Y_{3210}
 - Y_3(a_2) [ Y_0, Y_{210} ] + Y_0 Y_0(a_1) Y_{321}, \label{I74} \\
 [ D_x, [Y_1, Y_{3210} ] ] = -(a_3 + a_2 + 2 a_1 + a_0) [ Y_1, Y_{3210} ] - Y_1(a_2 + a_1 +a_0)Y_{3210}\nonumber\\
 \hphantom{[ D_x, [Y_1, Y_{3210} ] ] =}{}
- Y_1 Y_3(a_2)Y_{210} - Y_3(a_2) [Y_1, Y_{210} ] + Y_1 Y_0(a_1)Y_{321} \nonumber\\
\hphantom{[ D_x, [Y_1, Y_{3210} ] ] =}{} + Y_0(a_1)[Y_1, Y_{321} ], \label{I75} \\
[D_x, [Y_2, Y_{3210} ]] = -(a_3 + 2 a_2 + a_1 + a_0)[Y_2, Y_{3210} ] -Y_2(a_3+a_2+a_1)Y_{3210}\nonumber\\
\hphantom{[D_x, [Y_2, Y_{3210} ]] =}{}
- Y_2 Y_3(a_2)Y_{210} -Y_3(a_2) [Y_2, Y_{210} ] + Y_2 Y_0(a_1)Y_{321} \nonumber\\
\hphantom{[D_x, [Y_2, Y_{3210} ]] =}{} + Y_0(a_1)[ Y_2, Y_{321} ], \label{I76} \\
[D_x, [ Y_3, Y_{3210} ] ] = -(2a_3 + a_2 + a_1 + a_0)[ Y_3, Y_{3210} ] -Y_3(a_3 + 2 a_2) Y_{3210} \nonumber\\
\hphantom{[D_x, [ Y_3, Y_{3210} ] ] =}{} - Y_3 Y_3(a_2) Y_{210} + Y_0(a_1) [ Y_3, Y_{321} ]. \label{I760}
\end{gather}

We prove the theorem by the mathematical induction. The base case consists in proving a~lot of the formulas concerned to small order commutators up to order six. When constructing a linear expression for a given element in $R(y,N)$ as a linear combination of those from \eqref{I766} we always use Lemma~\ref{lemma1}. That is why we need in explicit expressions for $[D_x,Y_{i_k,\dots,i_0} ]$. In the base case we prove a large set of the equalities. Since they all are proved by one and the same way we concentrate on one of them.

\begin{Lemma} \label{lemma6} We have
\begin{gather} \label{I_Y0Y3210}
[Y_0, Y_{3210}]=\lambda_{(R)} Y_{3210}.
\end{gather}
\end{Lemma}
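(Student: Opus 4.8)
The plan is to prove \eqref{I_Y0Y3210} by showing that the operator $Z := [Y_0, Y_{3210}] - \lambda_{(R)} Y_{3210}$ lies in the kernel of $\mathrm{ad}_{D_x}$ and then invoking Lemma~\ref{lemma1}. For this to apply I first need to know that $Z$ has the form required in Lemma~\ref{lemma1}, i.e.\ it contains no $\frac{\partial}{\partial u_i}$ terms; this is automatic since every multiple commutator of order $\geq 1$ of the $Y_i$ has vanishing coefficients in front of the $\frac{\partial}{\partial u_i}$ (the leading terms of the $Y_i$ are the independent $\frac{\partial}{\partial u_i}$, so commuting kills them), and multiplying $Y_{3210}$ by the function $\lambda_{(R)}$ does not reintroduce such terms. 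So the whole content is the computation $[D_x, Z] = 0$.

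First I would compute $[D_x, [Y_0, Y_{3210}]]$ — but this is already recorded as \eqref{I74}. Next I would compute $[D_x, \lambda_{(R)} Y_{3210}] = D_x(\lambda_{(R)}) Y_{3210} + \lambda_{(R)} [D_x, Y_{3210}]$, using \eqref{I73} for the second piece. Subtracting, and organizing the result as a linear combination of the ``basis-candidate'' operators $Y_{3210}$, $[Y_0,Y_{210}]$, $Y_{321}$, $Y_{210}$, I would need the coefficient of each to vanish. The coefficient of $[Y_0, Y_{210}]$ involves $-Y_3(a_2)$ from \eqref{I74} against $-\lambda_{(R)} Y_3(a_2)$-type terms; here I use \eqref{eq5_2}, namely $[Y_0, Y_{210}]$ is not itself in the list, so I would instead first re-expand $[Y_0,Y_{210}]$ — actually the cleaner route is to note that by the results of Section~\ref{section4} we have $[Y_0, Y_{210}] = \lambda_{(R)} Y_{210}$ already (the shifted analogue of \eqref{eq5_2}), which lets me collapse those terms. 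The coefficient of $Y_{321}$ pits $Y_0 Y_0(a_1)$ from \eqref{I74} against $\lambda_{(R)} Y_0(a_1)$ from $\lambda_{(R)}[D_x,Y_{3210}]$; this vanishes precisely because of the relation $Y_0 Y_0(a_1) = \lambda_{(R)} Y_0(a_1)$, which is \eqref{eq4_8}/\eqref{eq4_4} rewritten. Finally the coefficient of $Y_{3210}$ itself yields an equation of the shape $D_x(\lambda_{(R)}) = -a_0 \lambda_{(R)} - Y_0(2a_1 + a_0)$, which is exactly \eqref{eq4_3} (equivalently \eqref{eq3_12} with $M=2$), already known to hold for our $\lambda_{(R)}$ given by \eqref{eq4_9}.

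Having checked that all four coefficients vanish, I conclude $[D_x, Z] = 0$, hence $Z = 0$ by Lemma~\ref{lemma1}, which is \eqref{I_Y0Y3210}. The main obstacle is purely bookkeeping: correctly expanding the nested commutators via the Jacobi identity and matching every term against one of the previously established identities \eqref{eq5_2}, \eqref{eq5_3}, \eqref{eq4_3}, \eqref{eq4_8}; no genuinely new differential equation on $\alpha$, $P$, $Q$ should appear, since the constraints $M=2$, $k=2$ and \eqref{Theor-I21} (which by Corollary~\ref{corollary2} holds identically for the surviving $\alpha$) already encode everything needed. This is why the authors say ``since they all are proved by one and the same way we concentrate on one of them'': the same template — subtract the conjectured relation, apply $\mathrm{ad}_{D_x}$, reduce to known identities, apply Lemma~\ref{lemma1} — handles the entire base case of the induction.
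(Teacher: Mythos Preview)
Your approach is exactly the paper's: set $Z=[Y_0,Y_{3210}]-\lambda_{(R)}Y_{3210}$, simplify $[D_x,Z]$ via \eqref{I73}, \eqref{I74} together with the already-established identities $[Y_0,Y_{210}]=\lambda_{(R)}Y_{210}$ (from \eqref{eq5_3}), $Y_0Y_0(a_1)=\lambda_{(R)}Y_0(a_1)$ (from \eqref{eq4_4}/\eqref{eq4_8}), and $D_x(\lambda_{(R)})=-a_0\lambda_{(R)}-Y_0(a_0+2a_1)$ (from \eqref{eq4_3}), and then invoke Lemma~\ref{lemma1}. One small bookkeeping correction: the term $-(a_3+a_2+a_1+2a_0)[Y_0,Y_{3210}]$ in \eqref{I74} does not disappear---after substituting $[Y_0,Y_{3210}]=Z+\lambda_{(R)}Y_{3210}$ and cancelling the four pieces you list, what remains is $[D_x,Z]=-(a_3+a_2+a_1+2a_0)\,Z$ rather than literally $0$; this is harmless since the proof of Lemma~\ref{lemma1} works verbatim for $[D_x,Z]=hZ$ (compare $\partial/\partial u_i$ coefficients first), and the paper glosses over the same point.
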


\begin{proof} By applying the operator ${\rm ad}_{D_x}$ to $Z= [Y_0, Y_{3210}]-\lambda_{(R)} Y_{3210} $ and simplifying due to the equations \eqref{I73}, \eqref{I74} we obtain $[D_x,Z]=0$. Evidently $Z$ satisf\/ies the settings of Lemma~\ref{lemma1}. Due to this lemma we obtain $Z=0$. Lemma~\ref{lemma6} is proved. \end{proof}

In what follows we need in the formulas
\begin{gather}
[Y_1, Y_{3210} ]=\lambda_{(M)} Y_{3210},\label{I_Y1Y3210} \\
[Y_2, Y_{3210} ]=D_n\big(\lambda_{(M)}\big)Y_{3210}, \label{I_Y2Y3210} \\
[Y_3, Y_{3210} ]=D^2_n\big(\lambda_{(L)}\big)Y_{3210}, \label{I_Y3Y3210}
\end{gather}
which are some versions of the formula \eqref{I_Y0Y3210} from Lemma~\ref{lemma6}.

Now using formulas
\begin{gather}
 [ D_x, Y_{43210} ] = - (a_4+ a_3 + a_2 + a_1 + a_0) Y_{43210} - Y_4(a_3) Y_{3210} + Y_0(a_1) Y_{4321}, \label{I85} \\
 [D_x, [ Y_0, Y_{43210} ] ] = -(a_4+a_3+a_2+a_1+2a_0) [ Y_0, Y_{43210} ] \nonumber\\
\hphantom{[D_x, [ Y_0, Y_{43210} ] ] =}{} -Y_0(2a_1 + a_0)Y_{43210} - Y_4(a_3) [ Y_0, Y_{3210} ] + Y_0 Y_0 (a_1) Y_{4321},\label{I86} \\
 [ D_x, [ Y_1, Y_{43210} ] ] = -(a_4+a_3+a_2+2a_1 +a_0)[Y_1, Y_{43210} ] \nonumber\\
\hphantom{[ D_x, [ Y_1, Y_{43210} ] ] =}{} -Y_4(a_3)[ Y_1, Y_{3210} ] - Y_1(a_2+a_1 +a_0)Y_{43210} -Y_1 Y_4(a_3)Y_{3210} \nonumber \\
\hphantom{[ D_x, [ Y_1, Y_{43210} ] ] =}{}+Y_1 Y_0(a_1)Y_{4321} + Y_0(a_1) [ Y_1, Y_{4321} ], \label{I87} \\
 [ D_x, [ Y_2, Y_{43210} ] ] = -(a_4 + a_3 + a_2 + a_1 + a_0) [Y_2, Y_{43210} ] \nonumber\\
\hphantom{[ D_x, [ Y_2, Y_{43210} ] ] =}{} -Y_2 (a_3 + a_2 + a_1) Y_{43210} - Y_2 Y_4(a_3)Y_{3210} - Y_4(a_3)[ Y_2, Y_{3210} ] \nonumber\\
\hphantom{[ D_x, [ Y_2, Y_{43210} ] ] =}{} + Y_2 Y_0 (a_1) Y_{4321} + Y_0(a_1)[ Y_2, Y_{4321} ], \label{I88} \\
 [ D_x, [ Y_3, Y_{43210} ] ] = - (a_4 + 2a_3 + a_2 + a_1 + a_0) [ Y_3, Y_{43210} ] \nonumber\\
\hphantom{[ D_x, [ Y_3, Y_{43210} ] ] =}{} -Y_3(a_4 + a_3 + a_2) Y_{43210} - Y_3 Y_4(a_3) Y_{3210} - Y_4(a_3) [ Y_3, Y_{3210} ] \nonumber\\
\hphantom{[ D_x, [ Y_3, Y_{43210} ] ] =}{} +Y_3 Y_0(a_1)Y_{4321} + Y_0(a_1) [ Y_3, Y_{4321} ], \label{I89}\\
 [ D_x, [Y_4, Y_{43210} ] ] = -(2a_4 + a_3 + a_2 + a_1 + a_0) [ Y_4, Y_{43210} ] \nonumber\\
\hphantom{[ D_x, [Y_4, Y_{43210} ] ] =}{}
- Y_4(a_4 + 2a_3)Y_{43210} - Y_4 Y_4(a_3) Y_{3210} + Y_0(a_1)[ Y_4, Y_{4321} ] \label{I90}
\end{gather}
by direct calculations we prove that
\begin{gather}
[Y_0, Y_{43210} ] = \lambda_{(R)}Y_{43210}, \label{I92} \\
[ Y_1, Y_{43210} ] = \lambda_{(M)}Y_{43210}, \label{I93} \\
[ Y_2, Y_{43210} ] = D_n\big(\lambda_{(M)}\big)Y_{43210}, \label{I94} \\
[ Y_3, Y_{43210} ] = D^2_n\big(\lambda_{(M)}\big)Y_{43210}, \label{I95} \\
[ Y_4, Y_{43210} ] = D^3_n\big(\lambda_{(L)}\big)Y_{43210}, \label{I96}
\end{gather}
where $\lambda_{(R)}$, $\lambda_{(M)}$, and $\lambda_{(L)}$ are def\/ined by the formulas \eqref{lambda_R1}, \eqref{lambda_M1}, and \eqref{lambda_L1} or by the formulas \eqref{lambda_R2}, \eqref{lambda_M2}, and \eqref{lambda_L2} correspondingly.

Now, having explicit formulas for the small order commutators we are ready to work out an induction hypothesis.

\subsection{Inductive step}\label{appendixA2}

\begin{Theorem} \label{theorem7}
For $n>1$ the multi-commutators satisfy the following formulas
\begin{gather}
[ D_x, Y_{n+1,n,\ldots,0}] = -\left( \sum_{i=0} ^{n+1} a_i \right) Y_{n+1,n,\ldots,0} -Y_{n+1} (a_n)Y_{n,n-1,\ldots,0} + Y_0(a_1) Y_{n+1,n,\ldots,1}, \label{s2_eq1} \\
[ D_x, [ Y_0, Y_{n+1,n,\ldots,0} ] ] = -(a_{n+1} + \cdots + a_1 + 2 a_0)[ Y_0, Y_{n+1,n,\ldots,0} ] -Y_0(a_0+2a_1)Y_{n+1,n,\ldots,0}\nonumber\\
\hphantom{[ D_x, [ Y_0, Y_{n+1,n,\ldots,0} ] ] =}{} - Y_{n+1} (a_n) [ Y_0, Y_{n,n-1,\ldots,0} ] + Y_0 Y_0 (a_1) Y_{n+1,n,\ldots,1}, \label{s2_eq2} \\
[D_x, [ Y_k, Y_{n+1,n,\ldots,0} ] ] = -(a_{n+1} + \cdots + 2 a_k + \cdots + a_0) [ Y_k, Y_{n+1,n,\ldots,0} ] \nonumber\\
\hphantom{[D_x, [ Y_k, Y_{n+1,n,\ldots,0} ] ] =}{}
-Y_k\left( \sum_{i=0} ^{n+1} a_i \right)Y_{n+1,n,\ldots,0} - Y_k Y_{n+1} (a_n) Y_{n,n-1,\ldots,0} \nonumber\\
\hphantom{[D_x, [ Y_k, Y_{n+1,n,\ldots,0} ] ] =}{}
- Y_{n+1} (a_n)[ Y_k, Y_{n,n-1,\ldots,0} ] + Y_k Y_0 (a_1)Y_{n+1,n,\ldots,1}\nonumber\\
\hphantom{[D_x, [ Y_k, Y_{n+1,n,\ldots,0} ] ] =}{}
 +Y_0(a_1) [ Y_k, Y_{n+1,n,\ldots,1} ], \qquad k=1,2,\ldots,n,\label{s2_eq3} \\
[ D_x, [ Y_{n+1}, Y_{n+1,n,\ldots,0} ] ] = -(2a_{n+1} +a_n+\cdots+a_0) [ Y_{n+1}, Y_{n+1,n,\ldots,0} ] \nonumber\\
\hphantom{[ D_x, [ Y_{n+1}, Y_{n+1,n,\ldots,0} ] ] =}{}
-Y_{n+1} (a_{n+1} +2a_n)Y_{n+1,n,\ldots,0} - Y_{n+1} Y_{n+1} (a_n) Y_{n,n-1,\ldots,0} \nonumber\\
\hphantom{[ D_x, [ Y_{n+1}, Y_{n+1,n,\ldots,0} ] ] =}{}
+Y_0(a_1) [ Y_{n+1}, Y_{n+1,n,\ldots,1} ]. \label{s2_eq4}
\end{gather}
\end{Theorem}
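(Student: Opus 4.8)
The plan is to argue by induction on $n$, in exactly the style of Lemma~\ref{lemma6}. The base of the induction is already available: relations of the required shape for multi-commutators of order up to six were established in Section~\ref{appendixA1} (formulas \eqref{I73}--\eqref{I760} and \eqref{I85}--\eqref{I90}, together with \eqref{eq5_2}, \eqref{eq5_3}, \eqref{I_Y0Y3210}--\eqref{I_Y3Y3210} and \eqref{I92}--\eqref{I96}). So it suffices to carry out the inductive step, assuming that \eqref{s2_eq1}--\eqref{s2_eq4} hold with $n$ replaced by $n-1$ and that the companion scalar relations $[Y_k,Y_{n,n-1,\ldots,0}]=\mu_k\,Y_{n,n-1,\ldots,0}$, with $\mu_k$ one of $\lambda_{(R)},D_n^{\,s}\lambda_{(M)},D_n^{\,s}\lambda_{(L)},0$ according to the position of $k$, are already known for the shorter string.

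First I would prove \eqref{s2_eq1}. Writing $Y_{n+1,n,\ldots,0}=[Y_{n+1},Y_{n,n-1,\ldots,0}]$ via \eqref{multi} and applying the Jacobi identity gives $[D_x,Y_{n+1,n,\ldots,0}]=[[D_x,Y_{n+1}],Y_{n,n-1,\ldots,0}]+[Y_{n+1},[D_x,Y_{n,n-1,\ldots,0}]]$. Into the first term I substitute $[D_x,Y_{n+1}]=-a_{n+1}Y_{n+1}$ from \eqref{eq2_8}, and into the second the induction hypothesis \eqref{s2_eq1} for $Y_{n,n-1,\ldots,0}$; then I expand every bracket by the Leibniz rule $[Y_k,gZ]=Y_k(g)Z+g[Y_k,Z]$. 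The simplification rests on two structural facts: (i) since $a_j=\alpha_j u_{j,x}$ with $\alpha_j=\alpha(u_{j-1},u_j,u_{j+1})$, we have $Y_k(a_j)=0$ whenever $|k-j|\geq 2$, so almost all derivative terms vanish and only the boundary terms $Y_{n+1}(a_n)$ and $Y_0(a_1)$ survive; and (ii) the inner commutators $[Y_{n+1},Y_{n,n-1,\ldots,0}]$, $[Y_{n+1},Y_{n-1,\ldots,0}]$ and $[Y_{n+1},Y_{n,\ldots,1}]$ are controlled by the recursion \eqref{multi} and by the already established scalar relations for shorter strings. Collecting the surviving terms yields exactly \eqref{s2_eq1}.

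With \eqref{s2_eq1} in hand I would prove \eqref{s2_eq2}, \eqref{s2_eq3}, \eqref{s2_eq4} in the same manner: apply ${\rm ad}_{D_x}$ to $[Y_k,Y_{n+1,n,\ldots,0}]$ for $k=0$, for $1\leq k\leq n$, and for $k=n+1$, use the Jacobi identity, substitute $[D_x,Y_k]=-a_kY_k$ and the just-proved \eqref{s2_eq1}, and expand by the Leibniz rule. Once more the locality of $\alpha$ kills all but a handful of $Y_k(a_j)$ and of the second-order terms $Y_kY_{n+1}(a_n)$, $Y_kY_0(a_1)$, while the inner commutators $[Y_k,Y_{n,n-1,\ldots,0}]$ and $[Y_k,Y_{n+1,n,\ldots,1}]$ are replaced by their known values from the shorter-string case. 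As in Lemma~\ref{lemma6}, whenever the scalar relation $[Y_k,Y_{n+1,n,\ldots,0}]=\mu_k\,Y_{n+1,n,\ldots,0}$ is itself needed, one forms $Z=[Y_k,Y_{n+1,n,\ldots,0}]-\mu_k\,Y_{n+1,n,\ldots,0}$, checks $[D_x,Z]=0$ using \eqref{s2_eq1}--\eqref{s2_eq4} and \eqref{eq2_8}, and invokes Lemma~\ref{lemma1}; this feeds the companion induction that Theorem~\ref{theorem7} is designed to support.

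The main obstacle is purely combinatorial: the Jacobi and Leibniz expansions produce many terms, and one must check that after discarding the vanishing $Y_k(a_j)$ for non-adjacent indices and substituting the shorter-string relations, the remainder assembles precisely into the stated right-hand sides — all interior derivative contributions telescoping away, leaving only the two boundary derivatives $Y_{n+1}(a_n)$ and $Y_0(a_1)$ together with their immediate descendants. I expect no conceptual difficulty beyond this disciplined bookkeeping, which is presumably why the authors omit the proof, just as they do for Lemma~\ref{lemma4}.
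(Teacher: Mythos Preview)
Your inductive strategy via the Jacobi identity, the Leibniz rule, and the locality $Y_i(a_j)=0$ for $|i-j|>1$ is exactly the paper's approach. However, you over-equip the induction hypothesis: the paper proves Theorem~\ref{theorem7} \emph{independently} of the scalar relations $[Y_k,Y_{n,\ldots,0}]=\mu_k Y_{n,\ldots,0}$ (those are the content of the subsequent Theorem~\ref{theorem8}). The inner commutators you worry about in~(ii) are handled more cheaply: $[Y_{n+1},Y_{n-1,\ldots,0}]=0$ simply because $[Y_i,Y_j]=0$ for $|i-j|\geq 2$ (the paper states this as $[Y_m,Y_{n,\ldots,0}]=0$ for $m-n>1$), while $[Y_{n+1},Y_{n,\ldots,0}]$ and $[Y_{n+1},Y_{n,\ldots,1}]$ are just the next multi-commutators by definition~\eqref{multi}. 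Likewise, your final paragraph invoking Lemma~\ref{lemma1} on $Z=[Y_k,Y_{n+1,\ldots,0}]-\mu_k Y_{n+1,\ldots,0}$ belongs to the proof of Theorem~\ref{theorem8}, not Theorem~\ref{theorem7}; the latter is a purely formal identity about $[D_x,\,\cdot\,]$ and needs no kernel argument. So your plan works, but you can decouple it from the companion theorem and drop the extra baggage.
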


\textbf{Proof by induction.} For $n=2$ and $n=3$ formulas \eqref{s2_eq1}--\eqref{s2_eq4} are previously proved (see \eqref{I73}--\eqref{I760} and \eqref{I85}--\eqref{I90}).

Assume that the multi-commutators satisfy the following formulas
\begin{gather*}
[ D_x, Y_{n,\ldots,0} ] = -\left( \sum_{i=0} ^{n} a_i \right) Y_{n,\ldots,0} -Y_{n}(a_{n-1} )Y_{n-1,\ldots,0} + Y_0(a_1) Y_{n,\ldots,1},\\
[ D_x, [ Y_0, Y_{n,\ldots,0} ] ] = -(a_{n}+ \cdots + a_1 + 2 a_0)[ Y_0, Y_{n,\ldots,0} ] \nonumber\\
\hphantom{[ D_x, [ Y_0, Y_{n,\ldots,0} ] ] =}{} -Y_0(a_0+2a_1)Y_{n,\ldots,0} - Y_{n}(a_{n-1}) [ Y_0, Y_{n-1,\ldots,0} ]
+ Y_0 Y_0 (a_1) Y_{n,\ldots,1}, \\ 
[D_x, [ Y_k, Y_{n,\ldots,0}] ] = -(a_{n}+ \cdots + 2 a_k + \cdots + a_0) [ Y_k, Y_{n,\ldots,0} ] -Y_k\left( \sum_{i=0} ^{n} a_i \right)Y_{n,\ldots,0} \nonumber\\
\hphantom{[D_x, [ Y_k, Y_{n,\ldots,0}] ] =}{}
- Y_k Y_{n}(a_{n-1}) Y_{n-1,\ldots,0} - Y_{n}(a_{n-1} )[ Y_k, Y_{n-1,\ldots,0} ] + Y_k Y_0 (a_1)Y_{n,\ldots,1}\\
\hphantom{[D_x, [ Y_k, Y_{n,\ldots,0}] ] =}{}
+Y_0(a_1) [ Y_k, Y_{n,\ldots,1} ], \qquad k=1,2,\ldots,n-1,\\ 
[ D_x, [ Y_{n}, Y_{n,\ldots,0} ] ] = -(2a_{n}+a_{n-1} +\cdots+a_0) [ Y_{n}, Y_{n,\ldots,0} ] \nonumber\\
\hphantom{[ D_x, [ Y_{n}, Y_{n,\ldots,0} ] ] =}{}
-Y_{n}(a_{n}+2a_{n-1} )Y_{n,\ldots,0} - Y_{n} Y_{n}(a_{n-1}) Y_{n-1,\ldots,0} +Y_0(a_1)[ Y_{n}, Y_{n,\ldots,1} ]. 
\end{gather*}
Then from these assumptions we deduce similar equations for $n+1$
\begin{gather}
[ D_x, Y_{n+1,n,\ldots,0} ] = [ D_x, [ Y_{n+1}, Y_{n,n-1,\ldots,0} ] ] \nonumber\\
\qquad {} = [Y_{n+1}, [D_x, Y_{n,n-1,\ldots,0} ] ] - [ Y_{n,n-1,\ldots,0}, [D_x, Y_{n+1} ] ] \nonumber\\
\qquad {} =[Y_{n+1}, -(a_n + a_{n-1} + \cdots + a_0)Y_{n,n-1,\ldots,0} - Y_n(a_{n-1} )Y_{n-1,n-2,\ldots,0} \nonumber\\
\qquad\quad{} + Y_0(a_1)Y_{n,n-1,\ldots,1} ] -[ Y_{n,n-1,\ldots,0}, -a_{n+1} Y_{n+1} ]\nonumber\\
\qquad {} = -\left( \sum_{i=0} ^{n+1} a_i \right) Y_{n+1,n,\ldots,0}
 - Y_{n+1} (a_n + a_{n-1} + \cdots + a_0)Y_{n,n-1,\ldots,0} Y_{n,n-1,\ldots,0} \nonumber\\
 \qquad\quad{} - Y_{n+1} Y_n(a_{n-1} )Y_{n-1,n-2,\ldots,0} - Y_n(a_{n-1} )[ Y_{n+1}, Y_{n-1,n-2,\ldots,0} ] \nonumber\\
 \qquad\quad {} + Y_{n+1} Y_0(a_1)Y_{n,n-1,\ldots,1} + Y_0(a_1) Y_{n+1,n,\ldots,1} +Y_{n,n-1,\ldots,0} (a_{n+1} )Y_{n+1}. \label{s2_eq9}
\end{gather}
Note that
\begin{gather} \label{s2_eq10}
Y_i(a_j) = 0 \qquad \mathrm{if} \quad |i-j|>1,
\end{gather}
i.e., if $i\neq j$, $i\neq j\pm 1$, and
\begin{gather*} 
[ Y_m, Y_{n, \ldots,0} ] = 0 \qquad \mathrm{if} \quad m-n>1
\end{gather*}
That is why the following terms in \eqref{s2_eq9} are equal to zero
\begin{gather*}
Y_{n+1} (a_{n-1} +\cdots+a_0) = 0, \qquad Y_{n+1} Y_n(a_{n-1} )=0, \\
[ Y_{n+1}, Y_{n-1,n-2,\ldots,0} ] = 0,\qquad Y_{n+1} Y_0(a_1) = 0, \qquad Y_{n,n-1,\ldots,0} (a_{n+1}) = 0.
\end{gather*}
Thus the equality \eqref{s2_eq9} takes the form \eqref{s2_eq1}.

Let us prove the formula \eqref{s2_eq2},
\begin{gather*}
[ D_x, [ Y_0, Y_{n+1,n,\ldots, 0} ] ]
= [Y_0, [ D_x, Y_{n+1,n,\ldots, 0} ] ] - [ Y_{n+1,n,\ldots, 0}, [ D_x, Y_0 ] ] \\
\hphantom{[ D_x, [ Y_0, Y_{n+1,n,\ldots, 0} ] ]}{}
=\left[ Y_0, -\!\left( \sum_{i=0} ^{n+1} a_i \!\right) \!Y_{n+1,n,\ldots,0} -Y_{n+1} (a_n)Y_{n,n-1,\ldots,0} + Y_0(a_1) Y_{n+1,n,\ldots,1} \right]\\
\hphantom{[ D_x, [ Y_0, Y_{n+1,n,\ldots, 0} ] ]=}{} -[ Y_{n+1,n,\ldots,0}, -a_0 Y_0].
\end{gather*}
From this equality using property of linearity of the commutators and the equations \eqref{s2_eq10} we obtain the formula \eqref{s2_eq2}. The formulas \eqref{s2_eq3} and \eqref{s2_eq4} are proved in a similar way.

\begin{Theorem} \label{theorem8}
For $m \geq 1$ the multi-commutators satisfy the following formulas
\begin{gather}
[ Y_0, Y_{m+1,m,\ldots,0} ] = \lambda_{(R)} Y_{m+1,m,\ldots,0}, \label{s2_eq12} \\
[ Y_k, Y_{m+1,m,\ldots,0} ] = D^{k-1} _n\big(\lambda_{(M)}\big)Y_{m+1,m,\ldots,0}, \qquad k=1,\ldots,m, \label{s2_eq13} \\
[ Y_{m+1}, Y_{m+1,m,\ldots,0} ] = D^{m}_n\big(\lambda_{(L)}\big)Y_{m+1,m,\ldots,0}. \label{s2_eq14}
\end{gather}
\end{Theorem}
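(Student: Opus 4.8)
The plan is to prove Theorem~\ref{theorem8} by induction on $m$, with Theorem~\ref{theorem7} supplying the action of ${\rm ad}_{D_x}$ on all the operators involved and Lemma~\ref{lemma1} as the device that turns ``${\rm ad}_{D_x}(Z)$ is proportional to $Z$'' into ``$Z=0$''. The base cases $m=1,2,3$ are already at hand: \eqref{eq5_3} is the case $m=1$, Lemma~\ref{lemma6} together with \eqref{I_Y1Y3210}--\eqref{I_Y3Y3210} is $m=2$, and \eqref{I92}--\eqref{I96} is $m=3$. Throughout I use that $R(y,N)$ is invariant under the $n$-shift $D_n$, so any relation proved on a block of consecutive indices $0,1,\dots,p$ holds verbatim on any block $j,j+1,\dots,j+p$ once every coefficient is replaced by its $D_n^{\,j}$-image; in particular $Y_{m+1,m,\dots,1}=D_n(Y_{m,m-1,\dots,0})$, so the induction hypothesis also controls the ``left tail''.

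Fix $m\ge 2$ and assume \eqref{s2_eq12}--\eqref{s2_eq14} for index $m-1$, i.e.\ for every multi-commutator built on $m+1$ consecutive indices. For $k=0$ set $Z=[Y_0,Y_{m+1,m,\dots,0}]-\lambda_{(R)}Y_{m+1,m,\dots,0}$ and apply ${\rm ad}_{D_x}$. The term $[D_x,[Y_0,Y_{m+1,m,\dots,0}]]$ is rewritten by \eqref{s2_eq2}; the term $[D_x,\lambda_{(R)}Y_{m+1,m,\dots,0}]$ equals $D_x(\lambda_{(R)})Y_{m+1,m,\dots,0}+\lambda_{(R)}[D_x,Y_{m+1,m,\dots,0}]$ and is rewritten by \eqref{s2_eq1} together with the differential identity $D_x(\lambda_{(R)})=-a_0\lambda_{(R)}-Y_0(a_0+2a_1)$, which is \eqref{eq4_3} (equivalently \eqref{eq4_5_0}) in the present notation. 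In the resulting difference the only genuinely new commutator tail is $[Y_0,Y_{m,m-1,\dots,0}]$, which the induction hypothesis replaces by $\lambda_{(R)}Y_{m,\dots,0}$; all other terms are multiples of plain operators $Y_{m+1,\dots,0}$, $Y_{m,\dots,0}$, $Y_{m+1,\dots,1}$ and cancel in pairs once one uses $Y_0Y_0(a_1)=\lambda_{(R)}Y_0(a_1)$ (this is \eqref{eq4_4}), the identity for $D_x(\lambda_{(R)})$ just quoted, and the vanishing rules $Y_i(a_j)=0$ for $|i-j|>1$ and $[Y_p,Y_{n,\dots,0}]=0$ for $p-n>1$. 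What remains is ${\rm ad}_{D_x}(Z)=-\big(a_0+\sum_{i=0}^{m+1}a_i\big)Z$; since $Z$ is built from commutators of order $\ge 2$, it carries no $\partial/\partial u_i$, so Lemma~\ref{lemma1} — whose order-by-order argument is unaffected by such a scalar factor — forces $Z=0$, which is \eqref{s2_eq12}.

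The cases $1\le k\le m$ and $k=m+1$ follow exactly the same scheme: one sets $Z=[Y_k,Y_{m+1,m,\dots,0}]-c\,Y_{m+1,m,\dots,0}$ with $c=D_n^{\,k-1}(\lambda_{(M)})$, resp.\ $c=D_n^{\,m}(\lambda_{(L)})$ for $k=m+1$, computes ${\rm ad}_{D_x}(Z)$ from \eqref{s2_eq3} (resp.\ \eqref{s2_eq4}) and \eqref{s2_eq1}, substitutes the induction hypothesis for the tails $[Y_k,Y_{m,\dots,0}]$ and $[Y_k,Y_{m+1,\dots,1}]$, invokes the defining $D_x$-relations of $\lambda_{(M)}$ and $\lambda_{(L)}$ (the shifted analogues of \eqref{I52} and of \eqref{H-I_Y1Y10}), and concludes $Z=0$ by Lemma~\ref{lemma1}. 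Once \eqref{s2_eq12}--\eqref{s2_eq14} are known for all $m$, every multiple commutator of $Y_0,\dots,Y_N$ reduces, by peeling off outer generators one at a time and reindexing, to a function multiple of one of the finitely many operators in \eqref{I766}; combined with Lemma~\ref{lemma1} (which excludes a nontrivial linear relation among commutators of distinct orders) this shows \eqref{I766} is a basis of $R(y,N)$, so $\dim R(y,N)<\infty$. The identical computation in the $x$-direction gives $\dim R(x,N)<\infty$, which completes the proof of Theorem~\ref{theorem5}.

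I expect the main obstacle to be the cancellation bookkeeping in the inductive step, particularly at the two ends of the block: the tail $[Y_m,Y_{m,\dots,0}]$ carries the coefficient $D_n^{\,m-1}(\lambda_{(L)})$ because $m$ is the top index of that sub-block, whereas the term one is constructing, $[Y_m,Y_{m+1,\dots,0}]$, must carry $D_n^{\,m-1}(\lambda_{(M)})$; similarly for $k=1$ the shifted left tail forces a $\lambda_{(R)}$ to turn into $\lambda_{(M)}$. One has to verify that exactly this mismatch is absorbed by the $Y_k(a_j)$-terms produced by \eqref{s2_eq3}--\eqref{s2_eq4}, and it is here that the explicit form of $\alpha$ from Theorem~\ref{theorem4} — equivalently the constraint \eqref{Theor-I21} — is actually used: the relations it imposes among $\lambda_{(R)},\lambda_{(M)},\lambda_{(L)}$ are precisely what make these cancellations work, so that \eqref{I766} closes under commutation.
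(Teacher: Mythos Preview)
Your proof is correct and shares the paper's inductive skeleton---Theorem~\ref{theorem7} for the ${\rm ad}_{D_x}$-action, the inductive hypothesis for the shorter tails, Lemma~\ref{lemma1} to kill $Z$---but the execution differs. The paper does not set $Z=[Y_k,Y_{m+1,\dots,0}]-c\,Y_{m+1,\dots,0}$ directly; it first postulates a full expansion of $[Y_k,Y_{m+1,\dots,0}]$ over \emph{all} of the operators in \eqref{I766} with undetermined coefficients (see \eqref{s2_eq18}, \eqref{s2_eq21}, \eqref{s2_eq25}), applies ${\rm ad}_{D_x}$, and solves the resulting overdetermined first-order equations coefficient by coefficient (as in \eqref{s2_eq23}) to show that every coefficient except the leading one vanishes; only then does it form $Z$ and invoke Lemma~\ref{lemma1}. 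Your direct verification is tighter, avoids the slightly circular-looking ansatz, and makes transparent exactly where the identities among $\lambda_{(R)},\lambda_{(M)},\lambda_{(L)}$ enter---namely \eqref{eq4_4} for $k=0$, the vanishing of $\rho$ and $\kappa$ from \eqref{I62} and \eqref{I67} for the boundary cases $k=1$ and $k=m$, and the separated form $\alpha_{u_{n+1}u_{n-1}}=0$ for $k=2,m-1$. One point you handle more carefully than the paper: what the computation actually yields is $[D_x,Z]=hZ$ for a scalar $h$, not $[D_x,Z]=0$ as the paper asserts; you are right that the order-by-order argument behind Lemma~\ref{lemma1} is unaffected by such a factor, but it is worth stating this extension explicitly once.
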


\textbf{Proof by induction.} For $m=1,2,3$ formulas \eqref{s2_eq12}--\eqref{s2_eq14} are true
(see \eqref{eq5_2}, \eqref{eq5_3}, \eqref{I_Y1Y210}, \eqref{I_Y0Y3210}, \eqref{I_Y1Y3210}, \eqref{I_Y2Y3210}, \eqref{I_Y3Y3210}, \eqref{I92}--\eqref{I96}).

Assume that the multi-commutators satisfy the following formulas
\begin{gather}
[ Y_0, Y_{m,m-1,\ldots,0} ] = \lambda_{(R)} Y_{m,m-1,\ldots,0}, \label{s2_eq15} \\
[ Y_k, Y_{m,m-1,\ldots,0} ] = D^{k-1}_n\big(\lambda_{(M)}\big)Y_{m,m-1,\ldots,0}, \qquad k=1,\ldots,m-1, \label{s2_eq16} \\
[ Y_{m}, Y_{m,m-1,\ldots,0} ] = D^{m-1}_n\big(\lambda_{(L)}\big)Y_{m,m-1,\ldots,0}. \label{s2_eq17}
\end{gather}

Let us f\/irst prove the formula \eqref{s2_eq12}. The proof is rather tricky: we assume the expansion with undetermined coef\/f\/icients
\begin{gather}
[ Y_0, Y_{m+1,m,\ldots,0} ] = \lambda Y_{m+1,m,\ldots,0} + \mu Y_{m+1,m,\ldots,1} + \nu Y_{m,m-1,\ldots,0} +\epsilon Y_{m+1,m,\ldots,2} \nonumber \\
\hphantom{[ Y_0, Y_{m+1,m,\ldots,0} ] =}{} + \eta Y_{m,m-1,\ldots,1} +\zeta Y_{m-1,m-2,\ldots,0} + \cdots\nonumber\\
\hphantom{[ Y_0, Y_{m+1,m,\ldots,0} ] =}{}+ \theta Y_{m+1,m} + \cdots + \xi Y_{10} + \sigma Y_{m+1} + \cdots + \delta Y_0, \label{s2_eq18}
\end{gather}
and then evaluate the coef\/f\/icients consecutively in the following way. We apply the opera\-tor~${\rm ad}_{D_x}$ to \eqref{s2_eq18} and gather the coef\/f\/icients before the linearly independent operators. For instance, by comparing the coef\/f\/icients before the multi-commutator $Y_{m+1,m,\ldots,0} $ and then using formulas from Theorem~\ref{theorem1} with $n=m-1$ we f\/ind
\begin{gather*}
D_x(\lambda) = -a_0 \lambda - Y_0(a_0 + 2a_1 ).
\end{gather*}
The latter coincides with the equation \eqref{eq4_3} and, therefore, we can conclude that $\lambda = \lambda_{(R)}$.

Compare now the coef\/f\/icients before $Y_{m+1,m,\ldots,1} $ to get an equation for determining $\mu$.
Note that by Theorem~\ref{theorem7} we have
\begin{gather}
[ D_x, Y_{m+1,m,\ldots,1} ] = D_n [ D_x, Y_{m,m-1,\ldots,0} ] = |\text{Theorem~\ref{theorem7}} | \nonumber\\
\qquad{} = D_n \left( -\left(\sum_{i=0} ^m a_i \right)Y_{m,m-1,\ldots,0} -Y_m(a_{m-1} )Y_{m-1,m-2,\ldots,0} +Y_0(a_1)Y_{m,m-1,\ldots,1} \right) \nonumber\\
\qquad{}
=-\left(\sum_{i=1} ^{m+1} a_i \right)Y_{m+1,m,\ldots,1} - Y_{m+1} (a_m)Y_{m,m-1,\ldots,1} + Y_1(a_2)Y_{m+1,m,\ldots,2}. \label{s2_eq19}
\end{gather}
Due to the relation \eqref{s2_eq19} the desired equation reduces to the form
\begin{gather*}
D_x(\mu) = -2a_0 \mu + Y_0 Y_0(a_1) - \lambda Y_0(a_1).
\end{gather*}
It is easily checked that the equation has the only solution $\mu = 0$. Continuing this way we can prove that all of the other coef\/f\/icients $\nu,\epsilon, \dots, \delta$ in \eqref{s2_eq18} vanish. Now for the operator $Z=[ Y_0, Y_{m+1,m,\ldots,0} ] - \lambda_{(R)} Y_{m+1,m,\ldots,0} $ we have $[D_x,Z]=0$. Due to Lemma~\ref{lemma1} it implies $Z=0$. That completes the proof of the formula \eqref{s2_eq12}.

Now we prove the formula \eqref{s2_eq13}. To this end we assume that the equation holds
\begin{gather}
[ Y_k, Y_{m+1,m,\ldots,0} ] = \lambda Y_{m+1,m,\ldots,0} + \mu Y_{m+1,m,\ldots,1} + \nu Y_{m,m-1,\ldots,0} +\epsilon Y_{m+1,m,\ldots,2} \nonumber \\
\hphantom{[ Y_k, Y_{m+1,m,\ldots,0} ] =}{} + \eta Y_{m,m-1,\ldots,1} +\zeta Y_{m-1,m-2,\ldots,0} + \cdots\nonumber\\
\hphantom{[ Y_k, Y_{m+1,m,\ldots,0} ] =}{} + \theta Y_{m+1,m} + \cdots + \xi Y_{10} + \sigma Y_{m+1} + \cdots + \delta Y_0 \label{s2_eq21}
\end{gather}
with the coef\/f\/icients to be determined.

Let us apply ${\rm ad}_{D_x}$ to \eqref{s2_eq21} and write down the coef\/f\/icients before the operator $Y_{m+1,m,\ldots,0}$
\begin{gather*}
D_x(\lambda) = -a_k \lambda - Y_k\left(\sum_{i=0} ^{m+1} a_i\right).
\end{gather*}
Apply $D^{-(k-1)}_n$ to the last equation
\begin{gather*}
D_x \big( D^{-(k-1)}_n(\lambda) \big) = -a_1 D^{-(k-1)}_n(\lambda) - Y_1(a_2+a_1+a_0).
\end{gather*}
This equation coincides with the equation \eqref{I45} then $D^{-(k-1)}_n(\lambda) = \lambda_{(M)}$ and \begin{gather*}
\lambda = D^{k-1} _n\big(\lambda_{(M)}\big).
\end{gather*}

Note that due to the formula \eqref{s2_eq16} we have
\begin{gather}
 [ Y_k, Y_{m+1,m,\ldots,1}] = D_n [ Y_{k-1}, Y_{m,m-1,\ldots,0} ] \nonumber\\
\hphantom{[ Y_k, Y_{m+1,m,\ldots,1}]}{} =D_n \big( D_n^{k-2} (\lambda_{(M)})Y_{m,m-1,\ldots,0} \big) = D_n^{k-1} \big(\lambda_{(M)}\big)Y_{m+1,m,\ldots,1}. \label{s2_eq22}
\end{gather}
Apply ${\rm ad}_{D_x}$ to \eqref{s2_eq21} using the formulas from Theorem~\ref{theorem7} and the formula \eqref{s2_eq22} and write down the coef\/f\/icients before the multi-commutator $Y_{m+1,m,\ldots,1}$
\begin{gather*}
 -(a_{m+1} + \cdots + 2 a_k + \cdots + a_0) \mu + Y_k Y_0 (a_1) + Y_0(a_1)D^{k-1} _n(\lambda_{(M)} \\
\qquad {} =\lambda Y_0(a_1) + D_x(\mu) - \left(\sum_{i=1} ^{m+1} \right) \mu.
\end{gather*}
Since $\lambda = D^{k-1} _n(\lambda_{(M)})$ the latter can be brought to the form
\begin{gather} \label{s2_eq23}
D_x(\mu) = -(a_0 + a_k) \mu + Y_k Y_0(a_1).
\end{gather}
Evaluate the action of the product of the operators
\begin{gather*}
 Y_kY_0(a_1) = \left( \frac{\partial}{\partial u_k} + \alpha_k u_{k,x} \frac{\partial}{\partial u_{k,x}} \right) ( \alpha_{1,u} u_{1,x} )
= \begin{cases}
\alpha_{1,uu_1} u_{1,x}+\alpha_1 \alpha_{1,u} u_{1,x}, & k=1,\\
\alpha_{1,uu_2} u_{1,x}, & k=2,\\
0, & k>2.
\end{cases}
\end{gather*}

Thus if $k=1$ then the equality \eqref{s2_eq23} takes the form
\begin{gather*}
D_x(\mu) = -(\alpha_0 u_x + \alpha_1 u_{1,x})\mu + (\alpha_{1,uu_1} +\alpha_1 \alpha_{1,u})u_{1,x}.
\end{gather*}
This equation implies that $\mu = \mu(u,u_1)$ and splits down into two equations as follows
\begin{gather*}
\mu_u = - \alpha_0 \mu, \qquad \mu_{u_1} = -\alpha_1 \mu + \alpha_{1,uu_1} +\alpha_1 \alpha_{1,u}.
\end{gather*}
Then we can prove that $\mu=0$.

If $k=2$ then the equality \eqref{s2_eq23} takes the form
\begin{gather*}
D_x(\mu) = -(\alpha_0 u_x + \alpha_2 u_{2,x})\mu + \alpha_{1,uu_2} u_{1,x}.
\end{gather*}
This equation implies that $\mu = \mu(u,u_1,u_2)$ and splits down into three equations as follows
\begin{gather*}
\mu_u = - \alpha_0 \mu, \qquad \mu_{u_1} = \alpha_{1,uu_2}, \qquad \mu_{u_2} = -\alpha_2 \mu.
\end{gather*}
And then again $\mu=0$.

If $k>2$ then the equality \eqref{s2_eq23} takes the form
\begin{gather*}
D_x(\mu) = -(\alpha_0 u_x + \alpha_k u_{k,x})\mu.
\end{gather*}
This equation implies that $\mu = \mu(u,u_k)$ and splits down into two equations as follows
\begin{gather*}
\mu_u = - \alpha_0 \mu, \qquad \mu_{u_k} = -\alpha_k \mu.
\end{gather*}
Then $\mu=0$.

In a similar way we can verify that all of the coef\/f\/icients in \eqref{s2_eq21} vanish except $\lambda$. Thus due to Lemma~\ref{lemma1} formula \eqref{s2_eq13} is correct.

Now we check the formula \eqref{s2_eq14}. First we assume that the following decomposition takes place
\begin{gather}
[ Y_{m+1}, Y_{m+1,m,\ldots,0} ] = \lambda Y_{m+1,m,\ldots,0} + \mu Y_{m+1,m,\ldots,1} + \nu Y_{m,m-1,\ldots,0} +\epsilon Y_{m+1,m,\ldots,2}\nonumber\\
\hphantom{[ Y_{m+1}, Y_{m+1,m,\ldots,0} ] =}{} + \eta Y_{m,m-1,\ldots,1} +\zeta Y_{m-1,m-2,\ldots,0} +\cdots + \theta Y_{m+1,m} + \cdots \nonumber\\
\hphantom{[ Y_{m+1}, Y_{m+1,m,\ldots,0} ] =}{} + \xi Y_{10} + \sigma Y_{m+1} + \cdots + \delta Y_0 \label{s2_eq25}
\end{gather}
with undef\/ined factors.

Let us apply ${\rm ad}_{D_x}$ to \eqref{s2_eq25} and write down the coef\/f\/icients before the multi-commutator $Y_{m+1,m,\ldots,0} $
\begin{gather*}
D_x(\lambda) = -a_{m+1} \lambda - Y_{m+1} (a_{m+1} +2a_m).
\end{gather*}
Apply $D^{-m}_n$ to this equation{\samepage
\begin{gather*}
D_x\big( D^{-m}_n(\lambda) \big) = -a_1 D^{-m}_n(\lambda) - Y_1(a_1 + 2a_0).
\end{gather*}
This equation coincides with \eqref{eq4_3} then $D^{-m}_n(\lambda) = \lambda_{(L)}$ and $\lambda = D^m_n(\lambda_{(L)})$.}

Let us apply ${\rm ad}_{D_x}$ to \eqref{s2_eq25} and write down the coef\/f\/icients before the multi-commutator $Y_{m+1,m,\ldots,1} $
\begin{gather*}
-(2a_{m+1} +a_m+\cdots+a_0)\mu + Y_0(a_1) D^m_n(\lambda_{(L)})= \lambda Y_0(a_1) + D_x(\mu) - \left( \sum_{i=1} ^{m+1} a_i \right) \mu.
\end{gather*}
Note that $\lambda = D^m_n(\lambda_{(L)})$ then the last equality takes the form
\begin{gather*}
D_x(\mu) = -(a_{m+1} + a_0) \mu.
\end{gather*}
Then $\mu = 0$.

Let us apply ${\rm ad}_{D_x}$ to \eqref{s2_eq25} and write down the coef\/f\/icients before the multi-commutator $Y_{m,m-1,\ldots,0}$
\begin{gather} \label{s2_eq26}
D_x(\nu) = -2 a_{m+1} \nu - Y_{m+1} Y_{m+1} (a_m) + \lambda Y_{m+1} (a_m).
\end{gather}
Note that
\begin{gather*}
 \lambda Y_{m+1} (a_m) - Y_{m+1} Y_{m+1} (a_m) \\
\qquad{} =D^m_n(\lambda_{(L)})Y_{m+1} (a_m) - Y_{m+1} Y_{m+1} (a_m) =D_m (\lambda_{(L)}) Y_1(a_0) - Y_1 Y_1(a_0)) = 0.
\end{gather*}
Then the equation \eqref{s2_eq26} takes the form $D_x(\nu) = -2 a_{m+1} \nu$ and we obtain that $\nu =0$.

In a similar way we can prove the vanishing of the other coef\/f\/icients in \eqref{s2_eq25}. Now by applying Lemma~\ref{lemma1} it is easy to complete the proof of the formula \eqref{s2_eq14}. Theorem~\ref{theorem8} is proved.

\subsection*{Acknowledgments}

The authors are grateful to the anonymous referees for their critical remarks and fruitful recommendations.

\pdfbookmark[1]{References}{ref}
\LastPageEnding


\begin{thebibliography}{99}
\footnotesize\itemsep=0pt

\bibitem{AdlerHab}
Adler V.E., Habibullin I.T., Boundary conditions for integrable chains,
 \href{https://doi.org/10.1007/BF02466012}{\textit{Funct. Anal. Appl.}} \textbf{31} (1997), 75--85.

\bibitem{Adler}
Adler V.E., Shabat A.B., Yamilov R.I., Symmetry approach to the integrability
 problem, \href{https://doi.org/10.1023/A:1026602012111}{\textit{Theoret. and Math. Phys.}} \textbf{125} (2000), 1603--1661.

\bibitem{Bogdanov}
Bogdanov L.V., Konopelchenko B.G., Grassmannians {${\rm Gr}(N-1,N+1)$}, closed
 dif\/ferential {$N-1$}-forms and {$N$}-dimensional integrable systems,
 \href{https://doi.org/10.1088/1751-8113/46/8/085201}{\textit{J.~Phys.~A: Math. Theor.}} \textbf{46} (2013), 085201, 17~pages,
 \href{https://arxiv.org/abs/1208.6129}{arXiv:1208.6129}.

\bibitem{Fer-TMF}
Ferapontov E.V., Laplace transforms of hydrodynamic-type systems in {R}iemann
 invariants, \href{https://doi.org/10.1007/BF02630370}{\textit{Theoret. and Math. Phys.}} \textbf{110} (1997), 68--77,
 \href{https://arxiv.org/abs/solv-int/9705017}{solv-int/9705017}.

\bibitem{Ferapontov2004}
Ferapontov E.V., Khusnutdinova K.R., On the integrability of
 {$(2+1)$}-dimensional quasilinear systems, \href{https://doi.org/10.1007/s00220-004-1079-6}{\textit{Comm. Math. Phys.}}
 \textbf{248} (2004), 187--206, \href{https://arxiv.org/abs/nlin.SI/0305044}{nlin.SI/0305044}.

\bibitem{Ferapontov2006}
Ferapontov E.V., Khusnutdinova K.R., Tsarev S.P., On a class of
 three-dimensional integrable {L}agrangians, \href{https://doi.org/10.1007/s00220-005-1415-5}{\textit{Comm. Math. Phys.}}
 \textbf{261} (2006), 225--243, \href{https://arxiv.org/abs/nlin.SI/0407035}{nlin.SI/0407035}.

\bibitem{Yamilov}
Gubbiotti G., Scimiterna C., Yamilov R.I., Darboux integrability of trapezoidal
 {$H^4$} and {$H^6$} families of lattice equations {II}: general solutions,
 \href{https://arxiv.org/abs/1704.05805}{arXiv:1704.05805}.

\bibitem{GurelHab}
G\"urel B., Habibullin I.T., Boundary conditions for two-dimensional integrable
 chains, \href{https://doi.org/10.1016/S0375-9601(97)00445-3}{\textit{Phys. Lett.~A}} \textbf{233} (1997), 68--72.

\bibitem{H2013}
Habibullin I.T., Characteristic Lie rings, f\/initely-generated modules and
 integrability conditions for $(2+1)$-dimensional lattices, \href{https://doi.org/10.1088/0031-8949/87/06/065005}{\textit{Phys.
 Scripta}} \textbf{87} (2013), 065005, 5~pages, \href{https://arxiv.org/abs/1208.5302}{arXiv:1208.5302}.

\bibitem{H2007}
Habibullin I.T., Pekcan A., Characteristic {L}ie algebra and classif\/ication of
 semidiscrete models, \href{https://doi.org/10.1007/s11232-007-0064-6}{\textit{Theoret. and Math. Phys.}} \textbf{151} (2007),
 781--790, \href{https://arxiv.org/abs/nlin.SI/0610074}{nlin.SI/0610074}.

\bibitem{Levi}
Levi D., Winternitz P., Lie point symmetries and commuting f\/lows for equations
 on lattices, \href{https://doi.org/10.1088/0305-4470/35/9/314}{\textit{J.~Phys.~A: Math. Gen.}} \textbf{35} (2002), 2249--2262,
 \href{https://arxiv.org/abs/math-ph/0112007}{math-ph/0112007}.

\bibitem{Martinez}
Ma\~nas M., Mart\'{\i}nez~Alonso L., \'Alvarez-Fern\'andez C., The
 multicomponent 2{D} {T}oda hierarchy: discrete f\/lows and string equations,
 \href{https://doi.org/10.1088/0266-5611/25/6/065007}{\textit{Inverse Problems}} \textbf{25} (2009), 065007, 31~pages,
 \href{https://arxiv.org/abs/0809.2720}{arXiv:0809.2720}.

\bibitem{Mikhailov91}
Mikhailov A.V., Shabat A.B., Sokolov V.V., The symmetry approach to
 classif\/ication of integrable equations, in What is Integrability?, \href{https://doi.org/10.1007/978-3-642-88703-1_4}{\textit{Springer
 Ser. Nonlinear Dynam.}}, Springer, Berlin, 1991, 115--184.

\bibitem{Mikhailov98}
Mikhailov A.V., Yamilov R.I., Towards classif\/ication of {$(2+1)$}-dimensional
 integrable equations. {I}ntegrability conditions.~{I}, \href{https://doi.org/10.1088/0305-4470/31/31/015}{\textit{J.~Phys.~A:
 Math. Gen.}} \textbf{31} (1998), 6707--6715.

\bibitem{Moser}
Moser J., Finitely many mass points on the line under the inf\/luence of an
 exponential potential--an integrable system, in Dynamical Systems, Theory and
 Applications ({R}encontres, {B}attelle {R}es. {I}nst., {S}eattle, {W}ash.,
 1974), \href{https://doi.org/10.1007/3-540-07171-7_12}{Lecture Notes in Phys.}, Vol.~38, Springer, Berlin, 1975, 467--497.

\bibitem{Sokolov}
Odesskii A.V., Sokolov V.V., Integrable $(2+1)$-dimensional systems of
 hydrodynamic type, \href{https://doi.org/10.1007/s11232-010-0043-1}{\textit{Theoret. and Math. Phys.}} \textbf{163} (2010),
 549--586, \href{https://arxiv.org/abs/1009.2778}{arXiv:1009.2778}.

\bibitem{Pavlov}
Pavlov M.V., Popowicz Z., On integrability of a special class of two-component
 {$(2+1)$}-dimensional hydrodynamic-type systems, \href{https://doi.org/10.3842/SIGMA.2009.011}{\textit{SIGMA}} \textbf{5}
 (2009), 011, 10~pages, \href{https://arxiv.org/abs/0901.4312}{arXiv:0901.4312}.

\bibitem{Pogrebkov}
Pogrebkov A.K., Commutator identities on associative algebras and the
 integrability of nonlinear evolution equations, \href{https://doi.org/10.1007/s11232-008-0035-6}{\textit{Theoret. and Math.
 Phys.}} \textbf{154} (2008), 405--417, \href{https://arxiv.org/abs/nlin.SI/0703018}{nlin.SI/0703018}.

\bibitem{Sh1995}
Shabat A.B., Higher symmetries of two-dimensional lattices, \href{https://doi.org/10.1016/0375-9601(95)00115-J}{\textit{Phys.
 Lett.~A}} \textbf{200} (1995), 121--133.

\bibitem{ShY}
Shabat A.B., Yamilov R.I., To a transformation theory of two-dimensional
 integrable systems, \href{https://doi.org/10.1016/S0375-9601(96)00922-X}{\textit{Phys. Lett.~A}} \textbf{227} (1997), 15--23.

\bibitem{Smirnov2}
Smirnov S.V., Semidiscrete {T}oda lattices, \href{https://doi.org/10.1007/s11232-012-0109-3}{\textit{Theoret. and Math. Phys.}}
 \textbf{172} (2012), 1217--1231, \href{https://arxiv.org/abs/1203.1764}{arXiv:1203.1764}.

\bibitem{Smirnov1}
Smirnov S.V., Darboux integrability of discrete two-dimensional {T}oda
 lattices, \href{https://doi.org/10.1007/s11232-015-0257-3}{\textit{Theoret. and Math. Phys.}} \textbf{182} (2015), 189--210,
 \href{https://arxiv.org/abs/1410.0319}{arXiv:1410.0319}.

\bibitem{YamilovJPA06}
Yamilov R., Symmetries as integrability criteria for dif\/ferential dif\/ference
 equations, \href{https://doi.org/10.1088/0305-4470/39/45/R01}{\textit{J.~Phys.~A: Math. Gen.}} \textbf{39} (2006), R541--R623.

\bibitem{Zakharov}
Zakharov V.E., Manakov S.V., Construction of higher-dimensional nonlinear
 integrable systems and of their solutions, \href{https://doi.org/10.1007/BF01078388}{\textit{Funct. Anal. Appl.}}
 \textbf{19} (1985), 89--101.

\bibitem{Zheltukhin2}
Zheltukhin K., Zheltukhina N., Semi-discrete hyperbolic equations admitting
 f\/ive dimensional characteristic {$x$}-ring, \href{https://doi.org/10.1080/14029251.2016.1199497}{\textit{J.~Nonlinear Math. Phys.}}
 \textbf{23} (2016), 351--367, \href{https://arxiv.org/abs/1604.00221}{arXiv:1604.00221}.

\bibitem{Zheltukhin1}
Zheltukhin K., Zheltukhina N., Bilen E., On a class of {D}arboux-integrable
 semidiscrete equations, \href{https://doi.org/10.1186/s13662-017-1241-z}{\textit{Adv. Difference Equ.}} (2017), 182, 14~pages.

\bibitem{ZMHS-UMJ}
Zhiber A.V., Murtazina R.D., Habibullin I.T., Shabat A.B., Characteristic {L}ie
 rings and integrable models in mathematical physics, \textit{Ufa Math.~J.}
 \textbf{4} (2012), 89--101.

\bibitem{ZMHSbook}
Zhiber A.V., Murtazina R.D., Habibullin I.T., Shabat A.B., Characteristic {L}ie
 rings and nonlinear integrable equations, Institute of Computer Science,
 Moscow~-- Izhevsk, 2012.

\bibitem{Zhiber2001}
Zhiber A.V., Sokolov V.V., Exactly integrable hyperbolic equations of
 {L}iouville type, \href{https://doi.org/10.1070/rm2001v056n01ABEH000357}{\textit{Russ. Math. Surv.}} \textbf{56} (2001), no.~1,
 61--101.

\end{thebibliography}
\end{document}